\title{\vspace*{-4ex}
	On computation with `probabilities' modulo $k$}
\author{
	Niel de Beaudrap}
\date{23 December 2014}
\newif\ifpreamble
\newif\ifpreliminaries
\newif\iftikz
\newif\ifsynopsis
\newif\ifscholium
\newif\ifFullArticle
\renewcommand\thesubsubsection{\Alph{subsubsection}} 
\renewcommand\p@subsubsection{\thesubsection~} 
\theoremstyle{plain}
\newtheorem{theorem}{Theorem}
\newtheorem{lemma}[theorem]{Lemma}
\newtheorem{corollary}[theorem]{Corollary}
\newtheorem*{problem*}{Problem}
\theoremstyle{definition}
\newtheorem{proposition}[theorem]{Proposition}
\newtheorem{definition}{Definition}
\newtheorem*{definition*}{Definition}
\newtheoremstyle{proposition*}
  {\topsep}{\topsep}   																	
  {\itshape}{0pt}{\bfseries}{.}{5pt plus 1pt minus 1pt} 
  {\thmname{#1}\thmnote{~(#3)}}														
\theoremstyle{proposition*}
\newtheorem*{corollary*}{Corollary}
\newtheorem*{conjecture*}{Conjecture}
\newcommand\inter\cap
\newcommand\union\cup
\let\prsubset\subset
\renewcommand\subset\subseteq
\renewcommand\ge\geqslant
\renewcommand\le\leqslant
\newcommand\idop{\mathbbm 1}
\newcommand\toname[2][]   {\xrightarrow[\; #1 \;]{\; #2 \;}}
\newcommand\mapstoname    {\DOTSB\mapstochar\toname}
\renewcommand\vec\mathbf
	\newcommand\ket[1]{\left| #1 \right\rangle\@ifnextchar\bra{\mspace{-4mu}}{}}
	\newcommand\bra[1]{\left\langle #1 \right|}
	\newcommand\mket[1]{\left| #1 \right)\@ifnextchar\mbra{\mspace{-4mu}}{}}
	\newcommand\mbra[1]{\left( #1 \right|}
\newcommand\bracket[2]{\left\langle #1 \!\left| #2 \!\right.\right\rangle}
\newcommand\ox\otimes
\newcommand\x\times
\renewcommand\setminus\smallsetminus
\newcommand\sox[1]{^{\otimes #1}}
\newcommand\parit[1]{\textup{(\kern-0.1ex\textit{#1}\kern0.1ex)}}
\newcommand\eg{\emph{e.g.}}
\newcommand\ie{\emph{i.e.}}
\newcommand\etc{\@ifnextchar.{\emph{etc}}{\emph{etc.}}}
\newcommand\etal{\@ifnextchar.{\emph{et al}}{\emph{et al.}}}
\DeclareMathOperator\supp{supp}
\DeclareMathOperator\GF{GF}
\DeclareMathOperator\poly{poly}
\DeclareMathOperator\Null{null}
\let\SectionSymbol\S
\newcommand\B{\mathbb B}
\newcommand\N{\mathbb N}
\newcommand\Z{\mathbb Z}
\newcommand\Q{\mathbb Q}
\newcommand\R{\mathbb R}
\newcommand\C{\mathbb C}
\renewcommand\S{\mathbb S}
\newcommand\F{\mathbb F}
\DeclareMathOperator\Char{char}
\newcommand\sB{\mathscr{B}}
\newcommand\sS{\mathscr{S}}
\newcommand\sD{\mathscr{D}}
\newcommand\bB{\mathbf{B}}
\newcommand\bC{\mathbf{C}}
\newcommand\bS{\mathbf{S}}
\newcommand\bW{\mathbf{W}}
\newcommand\bX{\mathbf{X}}
\newcommand\rA{\mathrm A}
\newcommand\rB{\mathrm B}
	\newcommand\abs{\@ifstar{\@abs\star}{\@abs{}}}
	\newcommand\@abs[2]{\left|#2\right|_{#1}}
	\newcommand\norm{\@ifstar{\@norm\star}{\@norm{}}}
	\newcommand\@norm[2]{\left\|#2\right\|_{#1}}
\renewcommand\P{\ensuremath{\mathsf P}}
\newcommand\PP{\ensuremath{\mathsf{PP}}}
\newcommand\NP{\ensuremath{\mathsf{NP}}}
\newcommand\RP{\ensuremath{\mathsf{RP}}}
\newcommand\co{\ensuremath{\mathsf{co}}}
\newcommand\BPP{\ensuremath{\mathsf{BPP}}}
\newcommand\BQP{\ensuremath{\mathsf{BQP}}}
\newcommand\PSPACE{\ensuremath{\mathsf{PSPACE}}}
\newcommand\EQP{\ensuremath{\mathsf{EQP}}}
\newcommand\Mod[1][]{\ensuremath{\mathsf{Mod}_{#1}}}
\newcommand\NC{\ensuremath{\mathsf{NC}}}
\newcommand\AC{\ensuremath{\mathsf{AC}}}
\newcommand\UP[1][]{\ensuremath{\mathsf{UP}_{#1}}}
\newcommand\pMQP[1][]{\ensuremath{{\mathsf{NGL}_{#1}\mathsf{P}}}}
\newcommand\GLP[1][]{\ensuremath{\mathsf{GLP}_{#1}}}
\newcommand\UnitaryP[1][]{\ensuremath{\mathsf{UnitaryP}_{\!#1}}}
\newcommand\AffineP[1][]{\ensuremath{\mathsf{AffineP}_{\!#1}}}
\DeclareMathOperator\id{id}
\newenvironment{romanum}{\begin{enumerate}[label=\textup{(\textit{\roman*})}]}{\end{enumerate}}
\newcommand\YES{\emph{yes}}
\newcommand\NO{\emph{no}}
\newcommand\e{\mathrm e}
\newcommand\herm{^\dagger}
\newcommand\trans[1][\!]{^{\raisebox{-0.2ex}{$\scriptstyle#1\mathsf T$}}}
\newcommand\sur[1]{^{\mbox{\tiny$(#1)$}}}
\renewcommand\sur[1]{^{(#1)}}
\newenvironment{smatrix}{%
	\mbox\bgroup
		\setlength{\arraycolsep}{2.5pt}%
		\footnotesize
		$\left[\,\begin{matrix}%
}{%
		\end{matrix}\,\right]$\egroup}
\newcommand%
\iftikz\input{[}\else[Place holder for a TiKZ-picture]\fi\end{center}1]{%
\iftikz\input{#1}\else[Place holder for a TiKZ-picture]\fi\end{center}}
\begin{document}

\maketitle

\begin{abstract}
\noindent
	We propose a framework to study models of computation of indeterministic data, represented by abstract ``distributions''.
	In these distributions, probabilities are replaced by ``amplitudes'' drawn from a fixed semi-ring $S$, of which 
	the non-negative reals, the complex numbers,
	finite fields $\F_{\!\!\;p^r}$, and cyclic rings $\Z_k$ are examples.
	Varying $S$ yields different models of computation, which we may investigate to better understand the (likely) difference in power between randomised and quantum computation.
%
	The ``modal quantum states'' of Schumacher and Westmoreland~\cite{SW10} are examples of such distributions, for $S$ a finite field.
	For $S = \F_2$, Willcock and Sabry~\cite{WS11} show that UNIQUE-SAT is solvable by polynomial-time uniform circuit families consisting of invertible gates.
	We characterize the decision problems solvable by polynomial uniform circuit families,
	using either invertible or ``unitary'' transformations over cyclic rings $S = \Z_k$, or (in the case that $k$ is a prime power) finite fields $S = \F_k$.
	In particular, for $k$ a prime power, these are precisely the problems in the class $\Mod[k]\P$.
\end{abstract}


\section{Introduction}
\ifpreamble

An indeterministic computation is one in which a computational system occupies states which are not determined by the system having been in a given configuration at any earlier time.
This may occur in models of computation in which some configurations can lead to multiple possible future ones, as with nondeterministic Turing machines and randomised algorithms.
Furthermore, as in quantum computation, it may be possible to arrive at a final state which is described by a single configuration, by an evolution which is not easily described by assignments of configurations to intermediate times.
In each case, we may describe the state of a computation by a ``distribution'' over a set of \emph{possible} configurations.
An indeterministic computation is then a sequence of transformations of such distributions; and a model of indeterministic computation --- such as nondeterministic Turing machines, randomised circuits, or unitary quantum circuits --- is a means of describing a range of such computations.

Much of complexity theory is about indeterminism.
The questions ${\P \mathrel{?\!=} \BPP}$ and ${\P \mathrel{?\!=} \NP}$ each concern whether some kind of indeterminism can be simulated in polynomial time.
The question ${\BPP \mathrel{?\!\subset} \NP}$ concerns the relationship between two kinds of indeterminism; a similar question, which may be practically important, concerns whether randomized algorithms can efficiently simulate quantum algorithms.
Let $\BQP$ be the class of decision problems which can be solved with bounded error, by polytime-uniform unitary circuits which read out one bit at the end as output~\cite{NC,BV97}.
Is the containment $\BPP \subset \BQP$ strict?
Problems such as factoring and discrete logarithms are contained in \BQP~\cite{Shor97}, and are considered unlikely to be in $\BPP$, so it is usually supposed that the answer is ``yes''.
If so, could there be a simple reason why?

Past criticisms of quantum computation~\cite{Lev00,Gold05} touched on the precision of quantum amplitudes, and the exponential size of quantum state vectors (often oversimplified as ``exponential parallelism''), as extravagant resources which quantum computation exploits.
But these are features of probability vectors as well.
Both probability vectors and quantum state vectors use distributions --- functions ranging over real or complex numbers --- to describe indeterminism; and even quantum computers restricted to real-valued amplitudes may simulate arbitrary quantum computations~\cite{Shi03,Ahar03}.

Probability distributions and quantum state vectors only differ in two related ways: \textbf{(a)}~quantum states transform by reversible rotations rather than irreversible mixing operations, and \textbf{(b)}~quantum states can have coefficients which are neither positive nor zero.
In particular, transitions from different configurations 
may result in \emph{destructive interference}, in which transitions from different configurations give (partially or totally) cancelling contributions to the amplitude of a later configuration.
This hints that destructive interference may drive quantum speed-ups.

While destructive interference clearly occurs in many quantum algorithms, it may not be helpful to emphasize it as a computational effect.
If it is difficult to arrange for a quantum algorithm to produce useful interference patterns, perhaps one should consider it to be a \emph{symptom} of the computational power of quantum mechanics, rather than a meaningful ``cause''.
In quantum mechanics, destructive interference is merely a consequence of the Schr\"odinger Wave Equation: some subtler feature of Schr\"odinger evolution may be a more fruitful subject of scrutiny.

What could it mean for destructive interference, in itself, to be computationally powerful?
To explore this idea, we consider indeterministic models of computation, which differ from quantum computation but have similar forms of destructive interference.
To this end, we study ``distributions'' similar to probability vectors and quantum state vectors, but which takes values over an arbitrary semi-ring $S$, rather than $\R$ or $\C$.

Schumacher and Westmoreland~\cite{SW10,SW-2012} explore what mathematical features of quantum mechanics remain when one replaces complex amplitudes with elements of a finite field $\F_k$.
Fields have negatives for every element, so this substitution is useful for considering destructive interference.
However, finite fields have no notion of measure which could yield a consistent theory of probability: these distributions only admit weaker notions of ``possibility'', ``impossibility'', and ``necessity''.
Schumacher and Westmoreland call these \emph{modal quantum states} for this reason.
One may still define a theory of exact algorithms for these distributions, by distinguishing between outcomes which are either ``impossible'' or ``necessary''.
Ref.~\cite{SW10} shows that $\F_k$-valued distributions have exact communication protocols which are analogues of teleportation~\cite{BBCJPW-93} and superdense coding~\cite{BW-1992}.
Investigating the computational power of transformations of \mbox{$\F_2$-valued} modal distributions, 
Willcock and Sabry~\cite{WS11} describe an exact ``modal quantum'' algorithm for {UNIQUE-SAT}~\cite{VV86}.
In follow-up work by Hanson~\etal~\cite{HOSW11}, they propose a restriction to vectors with unit $\ell_2$ norm with the aim of investigating how this limits the computational power of modal quantum computations.

We propose a general circuit-like theory of \emph{modal computation}, consisting of transformations of semiring-valued distributions which represent indeterministic data.
This theory includes computation on probability distributions, quantum state-vectors, and ``modal quantum'' states~\cite{SW10} as examples.
We then apply this theory to distributions over Galois rings $R$~\cite{Wan-2003}, which include finite fields $\F_k$ and cyclic rings $\Z_k$ as special cases for prime powers $k$.
We reduce the study of these distributions to the simplest case $R = \Z_k$, and describe ways in which nondeterministic Turing machines may partially simulate computations on these distributions (and vice versa).
We thereby prove that the problems which can be decided by polynomial-size uniform circuit families in these models are those in the counting class $\Mod[k]\P$~\cite{BGH90}.
This demonstrates how interference in itself may contribute to the power of a model of computation.

\subsection{Summary of the article}

The definitions of this article are largely informed by the study of quantum computation; but we do not require the reader to be familiar with quantum computation, and develop our framework independently of it.
For readers who \emph{are} familiar with quantum computation, we may summarize our results as follows.
\begin{definition*}[sketch]
  Consider the model of computation which one obtains, by taking quantum circuits and replacing all complex coefficients in state vectors and gates by integers modulo $k$.
	State-vectors are then vectors of dimension $2^N$ over $\Z_k$, where $N$ is the number of qubits involved (the standard basis states remain unchanged); the gates are replaced by $2^h \x 2^h$ matrices over $\Z_k$.
	The circuit is \emph{$\Z_k$-invertible} if one replaces the unitary gates by matrices which are merely invertible; it is \emph{$\Z_k$-unitary} if one requires each gate $U$ to satisfy $U\:\!\trans U = \idop$.
	Then \GLP[\Z_k] is the set of decision problems which can be solved exactly by a polynomial-time uniform $\Z_k$-invertible circuit family, and \UnitaryP[\Z_k] is the set of decision problems which can be solved exactly by a polynomial-time uniform $\Z_k$-unitary circuit family.
\end{definition*}

\begin{theorem}
	\label{thm:mainResult}
	For $k$ a prime power, $\UnitaryP[\Z_k] = \GLP[\Z_k] = \Mod[k]\P$.
	Furthermore, these equalities still hold if $\Z_k$ is replaced by any Galois ring of character $k$ (such as a finite field in the case that $k$ is itself prime).
\end{theorem}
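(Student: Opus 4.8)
The plan is to establish, for an arbitrary Galois ring $R$ of characteristic $k$, the cycle of inclusions
\[
  \UnitaryP[R] \;\subseteq\; \GLP[R] \;\subseteq\; \Mod[k]\P \;\subseteq\; \UnitaryP[\Z_k] \;\subseteq\; \UnitaryP[R],
\]
which collapses all four classes and in particular yields the theorem at $R=\Z_k$. The first inclusion is immediate, since every orthogonal matrix ($U\trans U=\idop$) is invertible; the last holds because $\Z_k$ embeds as a subring of $R$, so a $\Z_k$-unitary circuit is already an $R$-unitary circuit computing the same (integer) amplitudes, with nonvanishing preserved under $\Z_k\hookrightarrow R$. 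Only two substantive inclusions then remain: $\GLP[R]\subseteq\Mod[k]\P$ and $\Mod[k]\P\subseteq\UnitaryP[\Z_k]$.

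For $\GLP[R]\subseteq\Mod[k]\P$: by the definition of exact acceptance, $x$ is accepted exactly when a designated output configuration $f$ occurs with nonzero amplitude $\bracket{f}{C_x\,\init}$ in the final state. Passing to the regular representation of $R$ over $\Z_k$, this amplitude is a sum, over paths through the circuit, of products of (now $\Z_k$-valued) gate entries. Each $\Z_k$-component therefore reduces mod $k$ to the number of accepting paths of an NTM that guesses a circuit path and then branches to realise each entry-product; so it is a $\Mod[k]\P$-count, and ``$\bracket{f}{C_x\,\init}\neq0$'' is a union of $s=\O(1)$ such predicates. Since $k$ is a prime power, $\Mod[k]\P$ is closed under union and intersection, so this stays in $\Mod[k]\P$.

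For the reverse inclusion I first record the invertible version $\Mod[k]\P\subseteq\GLP[\Z_k]$, which works uniformly in $k$. Given an NTM with $m=\poly(\abs x)$ nondeterministic bits and accept indicator $b(x,g)\in\{0,1\}$, a reversible (permutation) circuit computes $\ket g\ket0\mapsto\ket g\ket{b(x,g)}$. I would sandwich this between a ``spread'' layer $P\sox m$, with $P=\left(\begin{smallmatrix}1&0\\1&1\end{smallmatrix}\right)$ so that $P\sox m\ket{0^m}=\sum_g\ket g$, and a ``merge'' layer $Q\sox m$, with $Q=\left(\begin{smallmatrix}1&1\\0&1\end{smallmatrix}\right)$ whose all-ones first row gives $\bra{0^m}Q\sox m=\sum_g\bra g$. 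The amplitude on $\ket{0^m}\ket1$ is then $\sum_{g\,:\,b=1}1\equiv N\pmod k$, where $N$ is the number of accepting paths. The key point is that $P$ and $Q$ are invertible but \emph{not} mutually inverse, which evades the obstruction that a single change of basis and its inverse cannot both carry an all-ones column and row; the designated amplitude is thus nonzero iff $N\not\equiv0\pmod k$, i.e.\ iff $x$ is a \YES-instance.

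The main obstacle is upgrading this to orthogonal gates, $\Mod[k]\P\subseteq\UnitaryP[\Z_k]$, since $P$ and $Q$ are not orthogonal. For odd $p$ I would spread instead with a constant-size orthogonal gate whose first column is constant $2^{-1}$ (a unit-norm vector, as $4\cdot2^{-2}=1$, that completes to an orthonormal basis because over the local ring $\Z_k$ a unimodular anisotropic vector extends via a Witt-type argument) and uncompute with its transpose; the designated amplitude becomes $\sum_{g\,:\,b=1}\alpha_g^2=u\,N$ for a fixed unit $u$, again nonzero iff $N\not\equiv0$. The genuinely hard case is $p=2$, where a parity obstruction intervenes: every row of an orthogonal matrix over $\F_2$ has odd Hamming weight, so no orthogonal merge can have an all-ones row and no single amplitude can be forced to equal $N\bmod2$. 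Here I would instead realise the summation on a subspace using ancilla-dilated orthogonal interference gates and read out under the modal (possibility) semantics, with the extra ancilla support absorbing the parity constraint; making this precise over $\Z_{2^r}$ rests on the orthogonal groups and quadratic forms of these rings (extending the $\F_2$ constructions of Willcock and Sabry), and is where the real work lies. The cycle then transfers every equality to an arbitrary Galois ring $R$ of characteristic $k$, using only that $\Z_k\subseteq R$ and that $R$ is a free $\Z_k$-module of finite rank.
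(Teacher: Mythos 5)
Your cycle of inclusions and the reduction from a general Galois ring $R$ to $\Z_k$ via the regular representation match the paper's strategy, but both substantive inclusions have a gap traceable to the same misreading of the acceptance condition: in these classes ``exact'' means the output bit is \emph{necessarily} $L(x)$, i.e.\ $C_n\ket{x}\ket{0^m}=\ket{\psi'}\ket{L(x)}$, not that some designated configuration carries nonzero amplitude. For $\GLP[R]\subseteq\Mod[k]\P$ there is no ``designated output configuration $f$'': the final state is $\ket{\varphi}\ket{L(x)}$ with $\ket{\varphi}$ unknown garbage spread over exponentially many strings, and the natural path-count $\sum_y \bra{y}\bra{1}C_n\ket{x}\ket{0^m}$ over all configurations with output bit $1$ can vanish mod $k$ even when $x\in L$. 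The missing step is uncomputation: run $C_n$, copy the answer bit onto a fresh bit with a controlled-NOT, then run $C_n^{-1}$; the final state collapses to the single basis state $\ket{x}\ket{0^m}\ket{L(x)}$, after which the nondeterministic machine accepts on exactly $1$ or $0$ branches modulo $k$ and your component-wise argument goes through.

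In the converse direction, your $P$/$Q$ sandwich (and its odd-$p$ orthogonal variant) only arranges for the single amplitude at $\ket{0^m}\ket{1}$ to equal $N\bmod k$; the configurations $\ket{y}\ket{1}$ with $y\neq 0^m$ retain nonzero amplitude even when $N\equiv 0$, so the output bit is not necessary and the circuit decides $L$ only with unbounded error, which the exact classes exclude by definition. Repairing this requires (i) normalizing the $\Mod[k]\P$ algorithm so that $N\equiv 0$ or $1\pmod{k}$ (Beigel--Gill--Hertrampf), and (ii) conditionally uncomputing the entire spread-compute-merge when the designated amplitude indicates rejection, plus a self-orthogonality argument (the residual state $\ket{\Psi'}$ satisfies $\bracket{\Psi'}{\Psi'}=0$ when the count is $1$) to show the garbage cannot flip the answer bit. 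Finally, you explicitly defer the case $p=2$ as ``where the real work lies,'' but that is precisely where your approach could fail; the paper avoids the odd/even split entirely with a single conditional-branching gate $K$ built from a four-squares representation $a^2+b^2+c^2+d^2=k-1$, satisfying $K\ket{000}=\ket{\gamma}\ox\ket{0}+\bigl(\ket{0}+\ket{1}\bigr)\ox\ket{11}$ with $K\trans K\equiv\idop\pmod{k}$, which creates the uniform superposition only on the component flagged by the ancillas and thereby sidesteps the $\F_2$ parity obstruction you identify.
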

\noindent
To complete the definition above, we describe a general theory of abstract distributions, and a framework of exact and bounded-error computational models on these distributions.
When applied to distributions over $\R$ or $\C$, this framework yields familiar classes such as $\P$, $\BPP$, and $\BQP$; for distributions over $\Z_k$, we obtain the new classes $\GLP[\Z_k]$ and $\UnitaryP[\Z_k]$ instead.

Our motivation for studying $\Z_k$-valued distributions is that it is a simple example of modal computation which has destructive interference of amplitudes.
However, apart from understanding destructive interference in general, it may also lead directly to an improved understanding of the power of \emph{exact} quantum computation, through the $p$-adic integers $\Z_{(p)}$~\cite{Robert-2000}.
As the complex numbers (without its usual topology) can be recovered as an appropriate closure of $\Z_{(p)}$, exact quantum computation might be recoverable by appropriate limits of $\Z_k$-modal computation.
Results along these lines would yield new lower bounds for $\BQP$: thus, the fact that $\UnitaryP[\Z_{p^{\!\!\;r}}]\! = \Mod[p]\P$ for each constant $r \ge 1$ is noteworthy.
We conclude the article by discussing lines of research suggested by our results, concerning the power of quantum computation.

These results should be considered as exploration of abstract models of indeterministic computation, and in particular, demonstrating connections between $\Z_k$-valued distributions and classical notions of nondeterminism.
Except in special cases, we would not expect that these models could be efficiently realised or simulated by deterministic Turing machines, or even by quantum computers.

\vspace*{-1ex}
\paragraph{Structure of the article.}

Section~2 contains background in algebra and computational complexity.
Section~3 presents the general framework of modal distributions, and defines notions of exact and bounded-error modal computation.
Section~4 uses this framework to define a theory of computation on ``$R$-modal'' distributions for $R$ a Galois ring (such as a cyclic ring $\Z_{p^r}$ or finite field $\F_{\!\!\;p^r}$).
We show that for these circuits, bounded-error computation can be reduced to the study of exact algorithms, and $R$-modal circuits may be simulated by $\Z_k$-modal circuits for $k = \Char(R)$, allowing us to reduce the theory to exact computation on $\Z_k$-valued distributions.
In Section~5 we present the main result $\GLP[\Z_k] = \UnitaryP[\Z_k] = \Mod[k]\P$ for $k$ a prime power, and also characterize the power of $\GLP[\Z_k]$ and $\UnitaryP[\Z_k]$ for arbitrary $k \ge 2$.
We conclude in Section~6 with commentary and lines of further research.

\subsection{Related work}

\paragraph{Semiring-valued distributions and counting complexity.}
A similar extension of probability distributions and quantum states to semi-rings in general is presented by Beaudry, Fernandez, and Holzer~\cite{BFH-2004}.
They describe the complexity of evaluating tensor networks over a few different semi-rings, both with bounded error and unbounded error.
In the two-sided bounded error setting, they show that the complexity of evaluating tensor formulas over the boolean semi-ring $(\{0,1\}, \vee, \mathbin\&)$, non-negative rationals $\Q_+$, and arbitrary rationals $\Q$, are complete for the complexity classes $\P$, $\BPP$, and $\BQP$ respectively.
Beaudry~\etal\ also attribute these distinctions to destructive interference; our results extend this line of investigation.

The amplitudes of the distributions we consider may be expressed by Valiant's matchcircuits~\cite{Valiant-2005}, which similarly describe data in terms of tensor networks.
Our results are more concerned specifically with tensor networks which have a directed acyclic structure, and which can therefore be construed as a sequence of computational steps acting on a piece of input data.
As a result, our formalism can be presented more directly as a computational model than the framework of Refs.~\cite{Valiant-2004,Valiant-2005}.

\vspace*{-1ex}
\paragraph{Destructive interference in quantum algorithms.}
The notion that destructive interference is a crucial phenomenon for quantum speed-ups is also addressed by works concerned with the classical simulation of quantum circuits.
Van~den Nest~\cite{VanDenNest-2011} noted that while sparse matrices may generate large amounts of entanglement, they did not seem (in an informal sense) to involve much destructive interference, and on that basis demonstrated settings in which sparse operations could be probabilistically simulated on quantum states.
Stahlke~\cite{Stahlke-2014} introduces a quantitative measure of destructive interference, which allows him to describe upper bounds on the complexity of simulating a quantum circuit by Monte Carlo techniques.
The results of those articles aim at upper bounds to the complexity of simulating quantum operations, depending on limitations on the amount of interference involved.
Our results instead represent a qualitative lower bound (using complexity classes rather than run-times) on the computational power of interference in a non-quantum setting.

\vspace*{-1ex}
\paragraph{Modal quantum states and computation.}

Our results are motivated by the models of Schumacher and Westmoreland~\cite{SW10} and the result of Willcock and Sabry~\cite{WS11}.
However, certain features of Schumacher and Westmoreland's modal quantum theory~\cite{SW10}, such as different bases of measurement, do not apply to all types of modal distribution (such as probability distributions).
They are therefore absent in our treatment.
We justify this omission, and describe how to recover measurement bases in those cases where this notion is meaningful, towards the end of Section~\ref{discn:measurementBasis}.

The general framework of modal distributions which we define in Section~\ref{sec:modalDefinitions} appears not to be strongly related to the generalized framework of Ref.~\cite{SW-2012}, as we are motivated by relationships between algebra and computation rather than non-signalling correlations (as in Barrett's generalized probabilistic theories~\cite{Barrett-2007}).

\vspace*{-1ex}
\paragraph{\emph{Unrelated} work.}

	We follow Schumacher and Westmoreland~\cite{SW10} in using the word `modal' to refer to a notion of contingency more general than probability; it does not refer to interpretations of quantum mechanics~\cite{DV-1998}.

\section{Preliminaries}

We introduce some algebraic tools and review basic notions of counting complexity.
In Section~\ref{sec:basicAlgebra} we describe ``semi-rings'', which are algebraic structures that we use to generalize sets of probabilities or quantum amplitudes.
We also consider a notion of ``unitarity'' of a linear transformation on a finite field, similar to Hanson~\etal~\cite{HOSW11}, in order to study the power of models of computation involving algebraic constraints similar to those of quantum computing.
In Section~\ref{sec:Mod-k-P}, we review basic ideas of counting complexity, including the complexity class \Mod[k]\P\ for integers $k \ge 2$.

\subsection{Algebraic preliminaries}
\label{sec:basicAlgebra}

\subsubsection{Semi-rings}

We wish to study a notion of a distribution which subsumes probability distributions, quantum state vectors, and the ``modal quantum states'' of Schumacher and Westmoreland~\cite{SW10}.
We therefore require an algebraic structure, which includes the non-negative reals $\R_+$, fields such as $\C$ and $\F_k$, and cyclic rings such as $\Z_k$ as examples.
For this, we use the notion of a (commutative) \emph{semi-ring}: a set $S$ together with
\begin{itemize}[itemsep=0.5ex]
\item 
	a commutative and associative addition operation $a+b$, which has an identity element $0_S \in S$; and
\item
	a commutative and associative multiplication operation $ab$, which has an identity element $1_S \in S$; and where we also require that
\item
	multiplication distributes over addition, $a(b+c) = ab + ac$.
\end{itemize}
Our references to semirings are essentially superficial: we use them merely to generalize the examples of $\R_+$, $\C$, and $\Z_k$, and do not invoke any deep results about them.
We present the notion of a semiring only to allow us to present the framework of this article with appropriate generality.

To better appreciate the variety of semirings, we make some observations.
A semi-ring may lack multiplicative inverses for $a \ne 1_S$, and may lack \emph{negatives}: elements $-a$ for each $a \in S$ such that $-a + a = 0_S$.
A \emph{ring} is a semiring whose elements all have negatives.
In general, the multiplication in a ring or semiring may be non-commutative, but throughout this article we consider only commutative \mbox{(semi-)rings}.
Examples of semirings include:
\begin{romanum}
\item
	The integers $\Z$, which have negatives for every element (and thus form a ring), but inverses only for $\pm 1$.
\item
	The non-negative reals $\R_+$, which have inverses for $a \ne 0$ (and thus form a \emph{semi-field}), but has no negatives for $a \ne 0$.
\item
	The non-negative integers $\N$, which have neither negatives (for non-zero elements) nor inverses (for elements other than $1$).
\item
	The complex numbers $\C$, which have both negatives and inverses for its non-zero elements, and thus form a \emph{field}.
\end{romanum}
To obtain a uniform theory of computation on distributions, our most general definitions are presented in terms of semi-rings.
However, to study destructive interference, our main results concern only the special case of rings, and in particular cyclic rings $\Z_k$ for $k \ge 2$.


\subsubsection{Conjugation and inner products over semi-rings}

To address the conjecture of Hanson~\etal~\cite{HOSW11} concerning ``unitary'' transformations over finite fields, we consider how to define an appropriate generalization of unitarity.

\begin{subequations}
\label{eqn:innerProducts}
We associate an ``inner product'' function\footnote{%
	In a finite semi-ring of non-zero character, these are not inner products in the of real and complex analysis: while every $\vec v \ne \vec 0$ has a $\vec w$ such that $\langle \vec v, \vec w \rangle \ne 0$, some $\vec v \ne \vec 0$ may have 	the property $\langle \vec v, \vec v \rangle = 0$.
	Our choice of terminology of ``inner product'' for these and similar two-variable functions is standard in coding theory~\cite{vanLint-1982,GottPhD}, and chosen for the sake of brevity. 
}
${\langle \ast, \ast \rangle: S^d \x S^d \to S}$ to each semiring $S$ and each $d \ge 1$, as follows.
For $S = \R$, the usual choice is the bilinear ``dot-product'',
\begin{equation}
	\label{eqn:boringInnerProduct}
	\langle \vec v, \vec w \rangle = \sum_{j=1}^d v_j w_j = \vec v\trans \vec w  ;
\end{equation}
for $S = \C$, we instead consider a sesquilinear\footnote{%
	Let $S$ be a semiring with a ``conjugation'' operation $s \mapsto \overline s$ such that $\overline 0_S = 0_S$,\, $\overline 1_S = 1_S$,\, $\overline{r+s} = \overline{r} + \overline{s}$, $\overline{rs} = (\overline r)(\overline s)$, and $\overline{(\overline s)} = s$ for all $r,s \in S$.
	A two-argument function $F: S^d \x S^d \to S$ is \emph{sesquilinear} if $F(a\vec v_1+b\vec v_2,\vec w) = \overline a F(\vec v_1,\vec w) + \overline b F(\vec v_2,\vec w)$ and $F(\vec v, a\vec w_1 + b\vec w_2) = aF(\vec v,\vec w_1) + bF(\vec v,\vec w_2)$, for all scalars $a,b \in S$. 
	If $\overline s = s$ for all $s \in S$, then $F$ is bilinear.
}
inner product,
\begin{equation}
	\label{eqn:interestingInnerProduct}
	\langle \vec v, \vec w \rangle = \sum_{j=1}^d \bar{v}_j w_j = (\bar{\vec v})\trans \vec w,
\end{equation}
where $\overline{(a+bi)} = a-bi \in \C$ denotes the conjugation operation for $a,b \in \R$.
\end{subequations}
These inner products are both bilinear or sesquilinear, satisfy either $\langle \vec v, \vec w \rangle = \langle \vec w, \vec v \rangle$ or $\langle \vec v, \vec w \rangle = \overline{\langle \vec w, \vec v \rangle}$, and are non-degenerate in that $\langle \vec v, \vec w \rangle = 0$ for all $\vec v$ if and only if $\vec w = \vec 0$.
In analogy to the cases $S = \R$ and $S = \C$, we assume below that a semiring $S$ comes equipped with a \emph{conjugation operation} $s \mapsto \overline s$, which we define for the purposes of this article\label{discn:conjugation} as a self-inverse\footnote{%
	Our interest in self-inverse conjugation operations is motivated by the conjecture of Hanson~\etal~\cite{HOSW11}, who consider formal analogues of the sesquilinear inner product of vector spaces over $\C$.
	Note however that in Galois theory, one may easily construct rings $Q$ which have automorphisms which are \emph{not} self-inverse, which are also referred to as ``conjugation'' operations.
	For instance, for any finite field $F$ of order $k$, one may construct an extension $E$ of order $k^e$ for any $e \ge 3$, equipped with a conjugation operation $\overline s = \varphi(s) = s^k$ such that $\id_E \ne \varphi \circ \varphi$.
}
automorphism of $S$: an operation which preserves $0_S$, $1_S$, and sums/products over $S$.
(This operation may be the identity operation, as it must be for instance in the case $S = \Z_k$ for $k > 1$, and as we conventionally consider for $S = \R$.)
In the case that $S$ is a quadratic extension of a finite field (or any Galois ring), there is a canonical way to produce a non-trivial automorphism of this sort:
for details, the interested reader is referred to Appendix~\ref{apx:quadraticExtensions}.
Having specified such a conjugation operation for a given semiring $S$, we fix the inner product functions associated with $S$ to be those described by Eqn.~\eqref{eqn:interestingInnerProduct}: this reduces to the dot product of Eqn.~\eqref{eqn:boringInnerProduct} in the case that $\overline s = s$ for all $s \in S$.

Any choice of inner product induces a notion of \emph{adjoint}: an involution $M \mapsto M\herm$ on linear transformations, such that $\langle \vec v, M\vec w\rangle = \langle M\herm \vec v, \vec w\rangle$.
For inner products such as those above, 
$M\herm = (\bar M)\trans$; if $\overline s = s$ for all $s \in S$, this reduces to $M\herm = M\,\trans$.
In any case, 
we define a unitary matrix to be one such that $U\herm U = \id$, so that $\langle U\herm \vec v, U \vec w\rangle = \langle \vec v, \vec w \rangle$.
Hanson~\etal~\cite{HOSW11} consider notions of unitarity only for finite field extensions $S = \F_{p^2} \cong \F_p{[\:\!i\:\!]}$, for primes $p \equiv 3 \pmod{4}$ and where $i^2 = -1_S$, which are formally very similar to the field extension $\C = \R{[\:\!i\:\!]}$.
However, our analysis applies to all self-inverse conjugation operations, including the case that $\overline s = s$ for all $s \in S$.
Thus we may speak of ``unitary'' transformations for arbitrary semirings, though it will in some cases amount to ``orthogonality'' (for which $U\:\!\trans U = \id$).

\subsection{Counting complexity and the class $\mathsf{Mod_{\mathit k}P}$}
\label{sec:Mod-k-P}

We now review some basic ideas in counting complexity, including the class \Mod[k]\P\ and the relationship between linear transformations and \#\P. 
We assume familiarity with nondeterministic Turing machines: for introductory references see Refs.~\cite{Papadimitriou-1994,AB-2009}.

Let $\{0,1\}^\ast$ be the set of boolean strings of any finite length.
Valiant~\cite{Valiant-1979} defines \#\P\ as the set of functions $f: \{0,1\}^\ast \to \N$ for which there is a nondeterministic polynomial-time Turing machine $\mathbf N$ such that
\begin{equation}
  f(x) = \#\{ \text{computational branches of $\mathbf N$ which accept, on input $x$} \}.
\end{equation}
The problem \#SAT, of determining the number of satisfying assignments for an instance of SAT, is a prototypical problem in \#\P: for a boolean formula $\varphi$ evaluating a logical formula, it suffices to consider the number of accepting branches of a nondeterministic Turing machine which guesses satisfying assignments of variables for $\varphi$.
Functions in \#\P\ are in general hard to compute, as they represent the result of the branching of nondeterministic Turing machines which are allowed to run for polynomial time.
For instance, the class \NP\ is the set of languages $L$ for which there is a function $f \in \text\#\P$, such that $x \in L$ if and only if $f(x) \ne 0$.

The classes \Mod[k]\P\ for $k \ge 2$ were defined by Beigel, Gill, and Hertrampf~\cite{BGH90}, generalizing the class $\oplus\P = \Mod[2]\P$ defined by Papadimitriou and Zachos~\cite{PZ-1983}.
These classes attempt to capture some of the complexity of \#\P\ functions through decision problems, and are defined similarly to \NP:
\begin{definition}
  For $k \ge 2$ an integer, \Mod[k]\P\ is the set of languages $L \subset \{0,1\}^\ast$ for which there exists a function $f \in \text\#\P$ such that $x \in L$ if and only if $f(x) \not\equiv 0 \pmod{k}$.
\end{definition}


Despite having similar definitions in terms of \#\P\ functions, the relationship between \Mod[k]\P\ and \NP\ is currently unknown.
For instance, for any $k \ge 2$, it is not known whether either containment $\NP \subset \Mod[k]\P$ or $\Mod[k]\P \subset \NP$ holds.
However, one may show that $\UP \subset \Mod[k]\P$ for any $k \ge 2$, where \UP\ is the class of problems in \NP\ which are decidable by nondeterministic Turing machines which accept on at most one branch.\footnote{%
	For $L \in \UP$, by definition the characteristic function $\chi_L: \{0,1\}^\ast \to \{0,1\}$ is in \#\P, which satisfies the acceptance conditions for $\Mod[k]\P$ for any $k \ge 2$.}
In particular, both $\Mod[k]\P$ and $\NP$ contain the deterministic class \P.
Also, from their definitions in terms of \#\P\ functions, both classes of problems are contained in the class $\P^{\text\#\P}$ of problems solvable in polynomial time, by a deterministic Turing machine with access to a \#\P\ oracle.

The classes \Mod[k]\P\ have useful properties when $k$ is a prime power: for instance, 
they are closed under subroutines in that case~\cite{BGH90}.
One might then think of $\Mod[k]\P$ as representing the computational power of an abstract machine which computes answers explicitly in its working memory.
Our results describe computational models with which one may provide such a description of $\Mod[k]\P$, for $k$ a prime power.

\section{Computations on ``modal distributions''}
\label{sec:modalDefinitions}

In this section, we describe a general framework for indeterministic computation, involving \emph{modal distributions} and \emph{modal state spaces}, which subsumes randomized computation and quantum computation.
We define notions of computability and (exact and bounded-error) computational complexity for modal distributions in general, and show how to recover traditional classes for decision complexity from this framework.

We do not assume any familiarity with quantum computation.
However, we adopt several conventions (and present some algebraic machinery) which will be familiar to readers who are familiar with the subject.

\ifsynopsis
\paragraph{Synopsis of our framework.} 

The framework that we develop in this section includes a number of technical details, which we consider important to obtain a versatile and well-defined theory of distributions.
However, these details are not essential to a first reading of the results of Section~\ref{sec:characterization}.
To allow the reader to quickly skip ahead to the main results, we summarize the main ideas of this section below.
\begin{enumerate}[label=\textbf{$\boldsymbol\SectionSymbol$3.\arabic{*}}]
\item \textbf{Distributions, states, and notation.}

	Given a fixed semiring $S$, we interpret elements of $S$ as \emph{amplitudes}, which are substitutes for probabilities.
	We then consider \textbf{$S$-modal distributions} over boolean strings $x \in \{0,1\}^\ast$ as mappings $\psi: \{0,1\}^\ast \to S$ in which $\psi_x \in S$ is the amplitude of $x$.
	These are interpreted as ``distributions'' in the sense that a boolean string $x \in \{0,1\}^\ast$ is considered \textbf{possible} for $\psi$ if $\psi_x \ne 0_S$.

	We represent distributions as vectors.
	Independently distributed boolean strings $x, y \in \{0,1\}^\ast$, governed by distributions $\alpha$ and $\beta$, have a joint distribution given by $\psi = \alpha \ox \beta$, so that $\psi_{x,y} = \alpha_x \beta_y$.
	We write $\sB = S^{\{0,1\}}$ for the sake of brevity, and $\sB\sox{n} = \sB \ox \sB \ox \cdots \ox \sB$ for distributions on $n$-bit strings.

	For the sake of convenience, we adopt Dirac notation as a vector notation.
	A distribution $\psi$ on boolean strings $x \in \{0,1\}^n$ is written $\ket{\psi}$.
	Special exceptions include the null distribution, which is always written $\vec 0$, and standard basis vectors $\vec e_x \in \sB\sox{n}$, which are written as $\ket{x}$.
	Expressions such as $\ket{\alpha}\ket{\beta}\cdots\ket{\omega}$ have implicit tensor products; and we may abbreviate $\ket{x_1}\ket{x_2}\cdots\ket{x_n}$ by $\ket{x_1 x_2 \cdots x_n}$ for $x \in \{0,1\}^n$.
	The adjoint of $\ket{\psi}$ is written $\bra{\psi}$, and the inner product $\langle \phi, \psi \rangle$ is written $\bracket{\phi}{\psi}$.
	
	An \textbf{$S$-modal state space} is a subset of the set of $S$-distributions which includes all standard basis vectors $\ket{x}$ for $x \in \{0,1\}^\ast$, excludes $\vec 0$, and is closed under tensor products.
	These represent distributions which are normalized (in the sense of a probability distribution whose sum is $1$), and and allows us to define a restrictive notion of \textbf{necessity}.
	For a state $\ket{\psi}$ representing distribution over $\{0,1\}^n$, and for a particular bit position $1 \le j \le n$, we say that the $j\textsuperscript{th}$ bit is necessarily $a \in \{0,1\}$ if $\forall x \in \{0,1\}^n : (\psi_x \ne 0) \implies (x_j = a)$.
		
	We refer to these distributions as ``modal'' because they may express notions of ``possibility'' and ``necessity'', but not necessarily any stronger notion of the significance of outcomes.
		
\item
	\textbf{Transformations of states.}

	Continuing the analogy between $S$-modal distributions and probabilities, we consider only \emph{linear} transformations of distributions, so that the possible values $x \in \{0,1\}^\ast$ transform independently of each other.
	For a given choice of state-space (there may be more than one for a given $S$), we also restrict to transformations which map  states to other states.
	
	To define a model of computation, we consider ways that a set of ``primitive'' valid transformations may be combined to produce more complex transformations.
	The primitive transformations which we consider in this article are maps $\ket{\psi} \mapsto \ket{\psi}\ket{0}$ which prepare a new bit in the constant state $\ket{0}$, and \emph{gates} $M: \sB\sox{h} \to \sB\sox{\ell}$ (that is: $2^\ell \x 2^h$ matrices over $S$) for $h,\ell \ge 1$.
	To describe the action of a gate $M$ on some subset $\rA \prsubset \{1,2,\ldots,n\}$ of an $n$-bit string, we form the tensor product of $M$ with identity operations $\idop := \id_{\sB}$ on those bits which are unaffected.
	We may then form matrix-products of these tensor-product operations, which we call \textbf{modal circuits}.
	We interpret these circuits as describing time-ordered transformations of $S$-modal states, in analogy to boolean circuits.
			
\item \textbf{Circuit complexity.}

	The state spaces which we consider, for $S$ a finite field or Galois ring, are
	\begin{itemize}
	\item 
		the set of $S$-distributions $\ket{\psi}$ for which $\exists \ket{\phi} : \bracket{\phi}{\psi} = 1_S$, whose valid transformations are all left-invertible transformations; and
	\item
		the set of $S$-distributions $\ket{\psi}$ for which $\bracket{\psi}{\psi} = 1_S$, whose valid transformations include all ``unitary embeddings'' $U : \sB\sox{n} \to \sB\sox{n}$ such that $U\herm U = \idop\sox{n}$.
	\end{itemize}
	As the valid transformations are left-invertible, we consider sets of primitive gates which are themselves invertible, realising transformations $M: \sB\sox{h} \to \sB\sox{h}$.	
	We define two different circuit models based on these two types of operations: \textbf{invertible $S$-modal circuits} and \textbf{unitary $S$-modal circuits}.
	
	The complexity of a particular valid transformation depends on the primitive gate set which one uses.
	Each primitive gate is associated with a ``cost'', which is bounded above by the complexity of computing its coefficients.
	For a finite sized set of primitive gates, the cost of each gate may then be bounded by a constant.
	
	As with boolean circuits and quantum circuits, efficient algorithms are expressed in terms of \emph{polynomial-uniform circuit families} $\{ C_n \}_{n \ge 1}$.
	These are families where each circuit $C_n$ has a structure that is specified by a deterministic Turing machine which halts in in time $\poly(n)$, and each individual gate has cost at most $\poly(n)$.
	In particular, each $C_n$ acts on at most $n + m$ bits for $m \in \poly(n)$.
	
	We wish to consider the complexity of deciding languages $L$, using $S$-modal circuits, with bounded error.
	However, for $S = \F_k$, the only way to avoid an \emph{unbounded} error decision model is to restrict to \textbf{exact computation}, in which the output (given by the final bit) is necessarily the correct answer for all inputs.
	That is, for invertible or unitary $S$-modal circuits $C_n$, we require that $C_n \ket{x}\ket{0^m} = \ket{\psi_x} \ket{L(x)}$, where $L(x) \in \{0,1\}$ indicates whether $x \in L$.

	We define the complexity classes \GLP[S] and \UnitaryP[S] as the classes of problems which are exactly solvable by polynomial-uniform circuits, of invertible or unitary gates respectively, over $S$.
	The main result of this article is that when $S$ is a finite field or Galois ring of size $k$, we have $\GLP[S] = \UnitaryP[S] = \Mod[k]\P$.
\end{enumerate}
The remainder of this Section elaborates and motivates these ideas, and describes connections to probability theory and quantum information theory.
Readers who are content with this summary may skip to Section~\ref{sec:characterization} (page~\pageref{sec:characterization}).
\fi

\subsection{Modal distributions, states, and valid transformations}
\label{sec:distributions}

We define the theory of modal distributions over boolean strings: the generalization to distributions over other countable sets should be clear.
\begin{definition}
	Let $S$ be a semiring, 
	and $n \ge 0$ an integer.
	An \emph{$S$-distribution} over $\{0,1\}^n$ is a function $\psi: \{0,1\}^n \to S$. 
	We refer to 
	$\psi(x)$ 
	as \emph{the amplitude of $\psi$ at $x$.}
	We say that $x$ is \emph{possible}
	(or \emph{a possible value}) for $\psi$ if $x \in \supp(\psi)$, that is $\psi(x) \ne 0$.\footnote{%
		This definition of ``possibility'' differs slightly from that of Ref.~\cite{SW10}.
		Readers familiar with quantum information theory may be interested in the remarks on this point, on page~\pageref{discn:measurementBasis} toward the end of this Section.
	}
\end{definition}
\noindent

We interpret $\psi$ as an ensemble or result of an indeterministic process, whose outcomes are the boolean strings $\{0,1\}^n$.
Examples include \ifFullArticle indicator functions of subsets $L \subset \{0,1\}^n$, which are $\B$-distributions for the boolean semiring $\B = (\{0,1\}, \vee, \&)$; \fi histograms on $\Sigma^n$, which are $\N$-distributions; probability distributions over $\{0,1\}^n$, which are \mbox{$\R_+$-dist}\-ributions; and quantum state vectors on $\{0,1\}^n$, which are $\C$-distributions.
The modal quantum states of Schumacher and Westmoreland are $\F$-distributions for a given finite field $\F$. \ifFullArticle

We wish to consider ways that an $S$-distribution may (abstractly) represent the state of a computational system, in such a way that the valid states include ones which we conceive of as representing ``point-mass'' distributions.
\begin{definition}
  For a semiring $S$, a \emph{computational basis distribution} on an alphabet $\{0,1\}$ is a distribution $\vec e_x \in S^{\{0,1\}^n}$ such that 
  \begin{equation}
    \vec e_x(y)	=	\begin{cases} 1, & \text{if $x = y$}; \\ 0, & \text{otherwise}. \end{cases}
  \end{equation}
\end{definition}
\noindent
The term ``computational basis'' is used because the distributions $\{ \vec e_x : x \in \{0,1\}^n \}$ do form a basis for the space of distributions $\psi:\{0,1\}^n \to S$, and because we intend this distribution to represent a computational system whose state is $x \in \{0,1\}^n$ with certainty.
These semantics for $\vec e_x$ (and the sense in which $\psi: \{0,1\}^n \to S$ is ``a distribution'' over $x \in \{0,1\}^n$) are provided by Definition~\ref{def:modality}.
We first present the following definition to motivate the relevant mathematical tools, and to suggest the sense in which we wish to think of distributions.

\begin{definition} 
	Let $n \ge 0$ be an integer, $\{0,1\}$ a finite alphabet and $S$ a semiring.
	For $(\rA, \rB)$ a partition of $\{1,2,\ldots,n\}$ and $\psi : \{0,1\}^n \to S$ an $S$-distribution, we say that $\rA$ and $\rB$ are \emph{independent with respect to $\psi$} if
	\begin{equation}
		\psi(x) = \alpha(x_A) \,\beta(x_B)  \quad\text{for all $x \in \{0,1\}^n$},
	\end{equation}
	for some $\alpha: \{0,1\}^A \to S$ and $\beta: \{0,1\}^B \to S$.
	Here $x_{\rA}$ and $x_{\rB}$ are the restrictions of $x$ to the indices of ${\rA}$ and $\rB}$ respectively.
	We call $\alpha$ and $\beta$ the \emph{independent distributions} induced by $\psi$ on $\rA$ and $\rB$.
	If $\rA$ and $\rB$ are not independent with respect to $\psi$, we say that they are \emph{correlated with respect to $\psi$}.
\end{definition}
\noindent
For independent subsystems $\rA$ and $\rB$, we can without ambiguity speak of the ``state of $\rA$'' without reference to the global system.
Note that $\vec e_x$ is always a product distribution, as $\vec e_x(y) = \vec e_{x_1\!}(y_1)\,\vec e_{x_2\!}(y_2)\,\cdots\,\vec e_{x_n\!}(y_n)$ for $x,y \in \{0,1\}^n$: we may interpret this as indicating that each symbol has a definite value which is independent of the others.

\subsubsection{Distributions as vectors}
\else
\subsubsection{Vector representation and Dirac notation}

\fi
We identify $S$-distributions 
$\psi: \{0,1\}^n \to S$ with vectors\ifFullArticle\footnote{%
	Technically, these are only vectors when $S$ is a field, and the maps $\psi: \{0,1\}^n \to S$ form a ``vector space''; we abuse this distinction here.}\else\ \fi
$\psi \in S^{\{0,1\}^n}$\!, and write $\psi_x = \psi(x)$ for $x \in \{0,1\}^n$.
\iftrue 
We express distributions as column vectors with the coefficients of $\psi \in S^{\{0,1\}^n}\!$ in lexicographical order, such as
\begin{equation}
  \psi =	\begin{smatrix}	\psi_{00\cdots000} \\ \psi_{00\cdots001} \\ \psi_{00\cdots010} \\ \vdots \\ \psi_{11\cdots111}
        	\end{smatrix}.
\end{equation}
In particular, for $x \in \{0,1\}^n$, let $[x] \in \N$ be the integer with binary expansion $x$: the distribution $\vec e_x$ is a standard basis vector with a $1$ in the $[x]+1\textsuperscript{st}$ coefficient.

Standard mathematical definitions and constructions for probability distributions and quantum states~\cite{McCullagh-1987,NC} may be extended to $S$-distributions for any semiring $S$.
For instance: if $\rA,\rB \subset \{1,2,\ldots,n\}$ is a partition of $\{1,2,\ldots,n\}$ and if $\psi$ decomposes as a product $\psi(x) = \alpha(x_\rA)\beta(x_\rB)$ for all $x \in \{0,1\}^n$, for some $\alpha \in S^{\{0,1\}^\rA}$ and $\beta \in S^{\{0,1\}^\rB}$, we say that $\rA$ and $\rB$ are independently distributed for $\psi$.
As a vector, we then have $\psi = \alpha \ox \beta$, where $\ox$ is the tensor product.
(In particular, ${\vec e_x ={}} {\vec e_{x_1} \ox{}} \vec e_{x_2} \ox \cdots {{}\ox \vec e_{x_n}}$ for $x \in \{0,1\}^n$.)
Representing $\psi$ as a column vector, we may form tensor products using the Kronecker product, regardless of the particular semiring $S$:
\begin{equation}
		\alpha \ox \beta
  =
		\begin{bmatrix} \alpha_{00\cdots00} \\ \alpha_{00\cdots01} \\ \vdots \\ \alpha_{11\cdots11} \end{bmatrix} \ox \begin{bmatrix} \beta_{00\cdots00} \\ \beta_{00\cdots01} \\ \vdots \\ \beta_{11\cdots11} \end{bmatrix}
	=
		\begin{bmatrix} \alpha_{00\cdots00} \beta_{00\cdots00} \\ \alpha_{00\cdots00} \beta_{00\cdots01}  \\ \vdots \\ \alpha_{00\cdots00} \beta_{11\cdots11} \\  \alpha_{00\cdots01} \beta_{00\cdots00} \\ \alpha_{00\cdots01} \beta_{00\cdots01} \\ \vdots \\ \alpha_{11\cdots11} \beta_{11\cdots11} \end{bmatrix} 
	=
		\left[\;\;\begin{matrix} \\[-1ex] \!\!\!\alpha_{00\cdots00} \beta\!\!\! \\[1ex] \hline \\[-1ex]  \!\!\!\alpha_{00\cdots01} \beta\!\!\! \\[1ex] \hline \\[-2ex]  \vdots \\[1ex] \hline \\[-1ex] \!\!\!\alpha_{11\cdots11} \beta\!\!\! \\[2ex] \end{matrix}\;\;\right],
\end{equation}
where the subscripts of $\alpha_{x_\rA}$ and $\beta_{x_\rB}$ run over $x_\rA \in \{0,1\}^\rA$ and $x_\rB \in \{0,1\}^\rB$.
If $\rA$ and $\rB$ are not contiguous blocks of bits, the tensor product corresponds to the same vector as above, up to a permutation which maps the concatenation $x_\rA x_\rB$ to the string $x$ of which the strings $x_\rA$ and $x_\rB$ are restrictions.
This construction respects the decomposition of $\psi \in S^{\{0,1\}^n}$ over the computational basis $\{ \vec e_x : x \in \{0,1\}^n \}$.


To avoid too many subscripts of the sort seen above, we adopt Dirac notation for convenience.\footnote{%
	We prefer to use Dirac notation rather than the alternative notation of Ref.~\cite{SW10}, as it seems unnecessary to introduce a distinct vector notation for semirings $S \ne \C$.
	Our use of Dirac notation should hopefully not lead to any confusion, as we do not consider any examples of quantum states $\ket{\psi} \in \C^d$ in this article.
}
We write a typical distribution $\psi \in S^{\{0,1\}^n}$ as $\ket{\psi}$ (read as ``ket psi'').
A
A notable exception is the zero vector (of any dimension), which is always denoted $\vec 0$.
We also write computational basis states $\vec e_x$ for $x \in \{0,1\}^\ast$ differently, representing them by $\ket{x} = \ket{x_1 x_2 \cdots x_k}$.
We may write tensor products by concatenation of ``kets'', omitting $\ox$ symbols for the sake of brevity: 
\begin{equation}
  \ket{\alpha} \ox \ket{\beta} \ox \cdots \ox \ket{\omega}
  \;=\;
	\ket{\alpha}\ket{\beta}\cdots\ket{\omega}.
\end{equation}
We identify the standard basis states on strings $x \in \{0,1\}^n$ with the tensor products of the individual bit-values:
\begin{equation}
  \ket{x_1} \ox \ket{x_2} \ox \cdots \ox \ket{x_k} \;=\;	\ket{x_1}\ket{x_2}\cdots\ket{x_k} \;=\; \ket{x_1 \, x_2 \, \cdots \, x_k}.
\end{equation}
When the tensor factors describe distributions on specific ``subsystems'', corresponding to a partitioning $\rA, \rB, \ldots, \Omega$ of $\{1,2,\ldots,n\}$, we may write the subsystems on which the different factors act as subscripts: if $\ket{\alpha}$ is a distribution on $\rA$, and $\ket{\beta}$ a distribution on $\rB$,~\etc\ then we may write the joint distribution of the system as
\begin{equation}
	\ket{\alpha}_{\rA}\ket{\beta}_{\rB}\cdots\ket{\omega}_\Omega.
\end{equation}
We denote the adjoint $\ket\psi\herm$ of a distribution by $\bra{\psi}$ (read ``bra psi''), where this adjoint is defined as in Section~\ref{sec:basicAlgebra}.
We then write inner products $\langle \phi, \psi \rangle$ for vectors $\phi,\psi \in S^{\{0,1\}^n}$ in Dirac notation as a ``bra-ket'', $\bracket{\phi}{\psi}$.
In particular, for any $\ket{\psi} \in S^{\{0,1\}^n}$, we may write the coefficient $\psi_{x}$ for $x \in \{0,1\}^n$ by $\bracket{x}{\psi}$.
For example, we have $\bracket{x}{y} = 0$ for $x \ne y$, and $\bracket{x}{x} = 1$.
\else
	We adopt Dirac notation for vectors $\ket{\psi} \in S^{\{0,1\}^n}$ in the usual way: $\ket{x}$ denotes the standard basis state indexed by $x \in \{0,1\}^n$, $\bra{\psi}$ denotes $\ket{\psi}\herm$, and $\bracket{\phi}{\psi}$ denotes the inner product (over $S$) of the distributions $\ket{\phi}$ and $\ket{\psi}$.
	If $\rA$ and $\rB$ are independent subsystems, with distributions $\ket{\phi}_\rA$ and $\ket{\psi}_\rB$, their joint distribution is formed by the tensor product $\ket{\phi}_{\rA} \ox \ket{\psi}_\rB$ (written $\ket{\phi}_{\rA}\ket{\psi}_{\rB}$ for short).
\fi

\subsubsection{Distribution space and state spaces}
\smallskip

The following definition captures the full range of $S$-distributions over boolean strings $x \in \{0,1\}^\ast$, in a way which is closed under tensor products.
\begin{definition}
	\label{def:distributionSpace}
	For a semiring $S$, write $\sB = S^{\{0,1\}}$ for distributions over one bit $x_j \in \{0,1\}$, and $\sB\sox{n} \ifFullArticle= (S^{\{0,1\}})\sox{n}\fi \cong S^{\{0,1\}^n}$ for distributions over $x \in \{0,1\}^n$.
	The \emph{$S$-distribution space} over $\{0,1\}^\ast$ is then
	\begin{equation}
		\mathscr D \;:=\; S \;\oplus\; \sB \;\oplus\; \sB\sox{2} \;\oplus\; \cdots \;\oplus\; \sB\sox{n} \;\oplus\; \cdots  
	\end{equation}
	where $\oplus$ denotes a direct sum.
	For $\rA \subset \N$ an index-set and $a \in \{0,1\}^\rA$, $a$ is \emph{possible} (or a possible value) for $\ket{\psi}$ on $\rA$ if there is $x \in \supp(\psi)$ such that the substring $x_\rA$ is well-defined, and equal to $a$.
\end{definition}
\noindent
We define the distribution-space using a tensor algebra for the sake of brevity when referring to arbitrary distributions $\ket{\psi} \in \sD$, as in the definition of a state-space below:
\begin{definition}
	\label{def:stateSpace}
	For a semiring $S$, a \emph{state space} over $\{0,1\}^\ast$ is a subset $\sS \subset \sD$ such that:
	\begin{itemize}
	\item 
		$\sS$ is closed under taking tensor products and extracting tensor factors: that is, for $\ket{\psi} \in \sS$ and $\ket{\phi} \in \sD$, we have $\ket{\phi}\ox\ket{\psi}, \ket{\psi}\ox\ket{\phi} \in \sS$ if and only if $\ket{\phi} \in \sS$;
	\item
		$\sS$ contains $\ket{x} \in \sD$ for each $x \in \{0,1\}^\ast$, and does not contain $\vec 0 \in \sD$.
	\end{itemize}
\end{definition}
\noindent
We regard distributions in general as potentially-underspecified descriptions of an ``ensemble'' of strings $x \in \{0,1\}^\ast$ (with the null distribution $\vec 0$ representing a completely unspecified ensemble).
States represent completely specified ensembles, such as normalised probability distributions.
We require states to be closed under tensor products, so  that joint distributions of independently distributed variables can be well-defined.
We also require them to be closed under extraction of tensor factors in order to be able to discuss the states of two subsystems when their distributions are independent.
\begin{definition}
	\label{def:modality}
  Let $\sD$ be a distribution space, $\sS$ be a state space, and let $\ket\psi \in \sD$.
	For a set $\rA \subset \N$ of indices, $a$ is \emph{necessary} for $\ket{\psi}$ on $\rA$ if $\ket{\psi} \in \sS$, and if for all $x \in \supp(\psi)$, $x_\rA$ is well defined and equal to $a$.
\end{definition}
\noindent
Loosely following the terminology of Schumacher and Westmoreland~\cite{SW10}, who called the non-zero distributions over $\F_k$ ``modal quantum states'', we refer to $\sD$ for arbitrary semirings $S$ as a \emph{modal distribution space}, the state-spaces $\sS$ as \emph{modal state spaces}, and the distributions $\ket{\psi} \in \sD$ and $\ket{\psi} \in \sS$ as \emph{modal distributions} or \emph{modal states}.

Note that if $\rA$ is necessarily $a \in \{0,1\}^A$ for some state $\ket{\psi} \in \sS$, we have $\ket{\psi} = \ket{a}_\rA \ox \ket{\psi'}_{\rB}$ for some state $\ket{\psi'} \in \sS$, where $\rB = \{1,2,\ldots,n\} \setminus \rA$.
Let $x \in \{0,1\}^n$ be an element of $\supp(\psi)$: by hypothesis we then have $x_\rA = a$. 
We may then decompose $\ket{x} = \ket{a}_{\rA} \ox \ket{x_\rB}_{\rB}$.
Collecting all of the components of $\ket{\psi}$ in the standard basis together, we have $\ket{\psi} = \ket{a}_\rA \ox \ket{\psi'}$.
Furthermore, because $\sS$ is closed under extraction of tensor factors and $\ket{a} \in \sS$, we have $\ket{\psi'} \in \sS$ as well.



	For computation on modal states, we consider only transformations which preserve the state-space.
	We also restrict ourselves to linear transformations, which therefore transform ``possible values'' $x \in \supp(\psi)$ independently of one another for $\ket{\psi} \in \sD$.

\begin{definition}
  Let $S$ be a semiring, and $\sS$ an $S$-state space.
	A \emph{valid transformation} for $\sS$ is a linear transformation $T: \mathscr D \to \mathscr D$ such that $T\ket{\psi} \in \sS$ for any $\ket{\psi} \in \sS$, and furthermore such that $(T \ox \id_{\sB}\sox{n}) \ket{\psi} \in \sS$ as well for any $n \ge 0$ and $\ket{\psi} \in \sS$ (where we interpret the latter map as performing the identity on $\sB\sox{m}$ for $m < n$).
\end{definition}
\noindent
\ifFullArticle
We restrict ourselves to linear transformations, as we are interested in the case where each possible value of a system is transformed independently of other possibilities, as with probabilities and quantum amplitudes.\fi
The condition involving $T \ox \id_{\sB}\sox{n}$ ensures that $T$ is meaningful as an operation on a subsystem, so that its validity is not dependent on the context in which it is applied.
\ifscholium
Restricting the set of valid transformations to linear transformations of distributions corresponds to the usual notion that a ``distribution'' is transformed according to a \emph{superposition principle}: distributions represent ``possibilities'' over computational basis states, and the way that the different possible states transform are independent of one another, so that the transformation of arbitrary states is linear.
\fi

More than one possible state space may exist for a given semiring $S$, and determines the set of valid transformations.
\ifFullArticle
For instance: while the set of $\R$-distributions $\ket{\psi}$ for which $\psi_x \ge 0$ and $\sum_x \psi_x = 1$ are preserved by stochastic transformations (which suffices to simulate randomized computation), the set of $\R$-distributions such that $\bracket{\psi}{\psi} = 1$ is instead preserved by orthogonal transformations (which suffice to simulate quantum computations~\cite{Shi03,Ahar03}).
\else 
For instance: $\R$-distributions $\ket{\psi}$ for which $\psi_x \ge 0$ and $\| \psi \|_1 = 1$ are preserved by stochastic transformations, and represent randomized computation.
The $\R$-distributions such that $\bracket{\psi}{\psi} = 1$ are instead preserved by orthogonal transformations, which suffice to simulate quantum computation~\cite{Shi03,Ahar03}.
\fi

\ifscholium
We may consider states $\ket{\psi}$ in either $\sS_2$ and $\sS_\ast$, whose amplitudes include opposite pairs $a, -a \in S$.
The linearity of the valid transformations may then result in \emph{destructive interference} of amplitudes, which limits the set of ``classical'' transformations included among the valid transformations.
A \emph{classical transformation} is a linear transformation $T: \sD \to \sD$ which maps
\begin{equation}
  T \ket{x} = \ket{f(x)}	\qquad\text{for some $f: \{0,1\}^\ast \to \{0,1\}^\ast$};
\end{equation}
however, if $f(y) = f(z)$ for some distinct $y,z \in \{0,1\}^n$, it follows that
\begin{equation}
  T\Bigl(\ket{y} - \ket{z}\Bigr) = \ket{f(y)} - \ket{f(z)} = \vec 0 \notin \sS.
\end{equation}
Because $\ket{y} - \ket{z} \in \sS_\ast$ for any ring $S$, it follows that $T$ is not valid for the generic state-space; similar problems arise for the $\ell_2$ state space.
\ifscholium
More generally, even for valid transformations $T$ of an $S$-state space, it may be that $\bra{r} T \ket{x} \ne 0$ and $\bra{r} T \ket{y} \ne 0$, but that there exists a state $\ket{\psi}$ for which $\psi_x, \psi_y \ne 0_S$ such that $\bra{r} T \ket{\psi} = 0$.
We say in this case that $T$ causes (or gives rise to) \emph{destructive interference} at $r$, for the input $\ket{\psi}$.
(This phenomenon is familiar in quantum computation, where it is used \eg~in phase estimation~\cite{K95}.)
\fi
Thus, irreversible functions $f: \{0,1\}^n \to \{0,1\}^m$ on the computational basis will generally not extend linearly to valid transformations $\ket{x} \mapsto \ket{f(x)}$.
However, by the remarks on reversible computation in Section~\ref{sec:simulatingNTM}, the valid transformations on $\sS_2$ and $\sS_\ast$ are still capable of simulating universal computation on $\{0,1\}^\ast$, because permutations of $\{0,1\}^\ast$ induce permutation operators on $\sD$, which are valid transformations of any state-space.
\fi

\vspace*{-2ex}
\paragraph{An aside: on measurement.}
Readers familiar with quantum computation may wonder why ``possibility'' and ``necessity'' are defined with respect to the standard basis.\label{discn:measurementBasis} 
We do so in contrast to Schumacher and Westmoreland~\cite{SW10}, who define a basis-dependent notion of measurement.
Our aim here is to present an abstract theory of distributions on $\{0,1\}^\ast$, which also includes the cases $S = \R_+$ of probability distributions and $S = \N$ of histograms (though they are not the main subject of this article).
The general theory therefore only defines possibility and necessity of definite boolean strings $x \in \{0,1\}^\ast$, which are distinguishable from each other in every such model.
In special cases such as $S = \F_k$ or $S = \C$, ``measurement bases'' may be recovered by considering which outcomes are possible or necessary after an invertible transformation of the state (a treatment which is not uncommon in the theory of quantum computation~\cite{NC}).\footnote{%
	With regards to quantum measurement, different ``measurement bases'' arise from different couplings of a system to measurement devices.
	Each such coupling imposes relationships between easily perceived degrees of freedom of the measurement device, with degrees of freedom in the measured system.
	This coupling determines the ``measurement basis'' of the system; however, the actually observed outcome is presented in the easily perceived degrees of freedom of the measurement device.
	Indeed, whatever one's interpretation of the process, the presentation of information about a system in easily perceived degrees of freedom is what makes for a useful ``measurement''.	
	We therefore feel justified in not providing an explicit role for ``measurement bases'' (or an explicit discussion of conditional distributions which should give rise to a notion of ``measurement collapse'') in the foundations of the general framework of modal distributions.
	}

\subsubsection{Specific state spaces of interest}
\label{sec:specialStateSpaces}

For the sake of concreteness, we now introduce state-spaces which correspond to randomized computation, to quantum computation, and to the models of computation which are the main subject of this article.
Following Refs.~\cite{SW10,WS11,HOSW11}, and also in analogy to probabilistic computation, we consider three different sorts of state-spaces in the case that $S$ is a finite ring:
\begin{definition}
	\label{def:generalizedGeneric}
  Let $S$ be a non-trivial ring.
  \begin{romanum}
  \item
		The \emph{generic state space} $\sS_\ast \subset \sD$ is the set of distributions $\ket{\psi} \in \sD$ for which there exists $\ket{\phi} \in \sD$ such that $\bracket{\phi}{\psi} = 1$; in other words, the set of distributions $\ket{\psi} \in \sD$ for which the ideal $I_\psi = \sum_x \psi_x S$ is the ring $S$ itself.\footnote{%
			Note that if $S$ is a field, then $\sS_\ast = \sD \setminus \{ \mathbf 0 \}$ is the set of states considered by Refs.~\cite{SW10,WS11}.
		}
	\item
		The \emph{$\ell_1$ state space} $\sS_1 \subset \sS_\ast$ is the set of distributions $\ket{\psi}$ for which $\sum\limits_x \psi_x = 1$.
	\item
		The \emph{$\ell_2$ state space} $\sS_2 \subset \sS_\ast$ is the set of distributions $\ket{\psi}$ for which $\bracket{\psi}{\psi} = 1$.
  \end{romanum} 
\end{definition}
\noindent
It is not difficult to show that $\mathscr S_\ast$, $\sS_1$, and $\sS_2$ are state-spaces: we prove this in Lemmas~\ref{lemma:genericStateSpace}--\ref{lemma:ell-2-stateSpace} (starting at page~\pageref{lemma:genericStateSpace}) in Appendix~\ref{apx:validTransformations}. 

The state-spaces described above motivate the study of three classes of operators on $\mathscr D$: invertible operators, \emph{affine operators} (operators which preserve the sum of the coefficients of the distributions they act upon), and \emph{unitary embeddings} (operators $U$ for which $U\herm U = \id_{\sD}$, which preserve inner products):
\begin{proposition}
\label{lemma:validTransformations}
Let $S$ be a Galois ring~\cite{Wan-2003}: a finite ring such that $p^r \cdot 1_S = 0_S$ for some prime $p$ and integer $r \ge 1$, and whose non-units are the set $p S$.
\begin{romanum}
\item
	The valid transformations of $\sS_\ast$ are all left-invertible transformations of $\mathscr D$.
\item
	The valid transformations of $\sS_1$ are all transformations $T: \mathscr D \to \mathscr D$ for which $\sum\limits_y \bra{y} T \ket{x} = 1_S$ for each $x \in \{0,1\}^\ast$.
\item
	The valid transformations of $\sS_2$ are all transformations $T: \sD \to \sD$ such that $T\herm T \equiv \id_\sD \pmod{p^{\lceil r/2 \rceil}}$ if $p$ is odd, or $T\herm T \equiv \id_\sD \pmod{2^{\lceil (r-1)/2 \rceil}}$ if ${p = 2}$.
	For $S$ any finite field or cyclic ring of odd order, we in fact have $T\herm T = \id_\sD$.
	Conversely, all operators $T: \sD \to \sD$ for which $T\herm T = \id_\sD$ are valid transformations of $\sS_2$.
\end{romanum}
\end{proposition}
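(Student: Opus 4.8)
The three claims share a common shape: for each state space I must show the stated algebraic condition is both necessary and sufficient for validity, where validity also carries the tensor-stability clause $(T\ox\id_{\sB}\sox{n})\ket\psi\in\sS$. The plan is to dispose of the three \emph{sufficiency} directions first, since each is a one-line calculation that automatically survives tensoring. For $\sS_\ast$, if $LT=\id$ then every coefficient $\psi_x=\sum_y L_{xy}(T\psi)_y$ lies in the ideal $I_{T\psi}=\sum_y(T\psi)_y S$, so $I_\psi\subseteq I_{T\psi}$; hence $I_\psi=S$ forces $I_{T\psi}=S$, and $L\ox\idop$ left-inverts $T\ox\idop$, so $T$ is valid. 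For $\sS_1$, if every column sum of $T$ equals $1_S$ then $\sum_y\bra y T\ket\psi=\sum_x\psi_x$, which is $1$ whenever $\sum_x\psi_x=1$, and column sums multiply under $\ox$. For $\sS_2$, if $T\herm T=\idop$ then $\bracket{T\psi}{T\psi}=\bra\psi T\herm T\ket\psi=\bracket\psi\psi$ and $(T\ox\idop)\herm(T\ox\idop)=(T\herm T)\ox\idop=\idop$. The converse for $\sS_1$ is then immediate and needs no tensoring: since $\ket x\in\sS_1$ for each $x$, validity gives $T\ket x\in\sS_1$, i.e. $\sum_y\bra y T\ket x=1_S$, which is exactly the claim.

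For the converse of $\sS_\ast$ I would exploit that a Galois ring $S$ is a \emph{local} ring with maximal ideal $pS$ and residue field $k=S/pS$. Regarding $T$ as a matrix over $S$, the standard fact over local rings is that $T$ is left-invertible if and only if its reduction $\overline T$ over $k$ has full column rank (equivalently, some maximal minor of $T$ is a unit). So suppose $\overline T$ does \emph{not} have full column rank; pick a nonzero $\overline v$ with $\overline T\,\overline v=\vec 0$ and lift it to a vector $v$ over $S$ having at least one unit entry. Then $I_v=S$, so $\ket v\in\sS_\ast$, yet $Tv\equiv\vec0\pmod{pS}$ has all entries in $pS$, whence $I_{Tv}\subseteq pS\subsetneq S$ and $T\ket v\notin\sS_\ast$, contradicting validity. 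Thus validity forces full column rank modulo $pS$, i.e. left-invertibility.

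The real work is the converse for $\sS_2$. Writing $G=T\herm T$ and $H=G-\idop$ (a Hermitian matrix), validity says $\bra\psi H\ket\psi=0$ for every unit vector $\psi$, on any number of bits. Taking $\psi=\ket x$ gives $H_{xx}=0$, so $H$ has zero diagonal. To reach the off-diagonal entries I would polarize on two-coordinate vectors $\psi=a\ket x+b\ket y$, for which $\bra\psi H\ket\psi=\overline a b\,H_{xy}+\overline b a\,H_{yx}=\overline a b H_{xy}+\overline{\overline a b H_{xy}}$ (using $H_{yx}=\overline{H_{xy}}$), the trace-to-the-base-ring of $\overline a b H_{xy}$. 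Tensor-stability lets me weaken the constraint $\bracket\psi\psi=1$ to ``$\bracket\psi\psi$ a unit'': for a product state one has $\bra{\psi\ox\phi}(H\ox\idop)\ket{\psi\ox\phi}=\bra\psi H\ket\psi\cdot\bracket\phi\phi$, so choosing $\phi$ with $\bracket\phi\phi$ inverse to $\bracket\psi\psi$ isolates $\bra\psi H\ket\psi=0$. I must then understand, for which $a,b$ the norm $\bracket\psi\psi=\overline a a+\overline b b$ is a unit, and what ideal the resulting values $\overline a b$ generate.

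This last point is the crux and the step I expect to be the main obstacle: it is a number-theoretic question about the \emph{norm form} of $S$. When $p$ is odd and the conjugation is trivial (in particular for a cyclic ring $\Z_{p^r}$ of odd order, or a finite field), one may take $a=b=1_S$, so that $\overline a a+\overline b b=2$ is a unit and $\overline a b=1$; the trace reads $2H_{xy}=0$, forcing $H_{xy}=0$ and hence $T\herm T=\idop$ exactly, which gives the stated exactness in those cases. In the remaining cases --- even characteristic, or a nontrivial conjugation on a proper Galois extension --- the trace carries an unavoidable factor of $p$, and the constraint $\bracket\psi\psi\in S^\times$ prevents $\overline a b$ from being made a unit, so one can only force $H_{xy}$ to vanish modulo a fixed power of $p$. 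Carrying out this analysis carefully, and disentangling the several off-diagonal entries that are forced to appear together once unit vectors must be supported on more than two coordinates (as already happens over $\Z_4$), is what produces the precise bounds $p^{\lceil r/2\rceil}$ for odd $p$ and $2^{\lceil(r-1)/2\rceil}$ for $p=2$: the factor of two, respectively the ``square-root'' exponent $\lceil r/2\rceil$, is exactly the precision lost through the trace and through the representability of units as sums of norms. I would isolate this as a self-contained lemma on sums of norms in the Galois ring, and expect it to be the longest and most delicate part of the argument.
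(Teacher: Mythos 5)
Your treatment of parts \parit{i} and \parit{ii}, and of all three sufficiency directions, is correct. For \parit{i} you take a genuinely different route from the paper: the paper runs an explicit Gaussian-elimination argument, iteratively adjoining rows $\bra{\phi_\ell}M^{(\ell-1)}$ (whose existence is guaranteed by validity) and row-reducing to exhibit a left inverse, whereas you invoke the local-ring structure of a Galois ring and reduce modulo the maximal ideal $pS$. Your version is shorter and cleaner, and the lifting argument (a nonzero kernel vector of $\overline{T}$ lifts to a vector with a unit entry, hence a state in $\sS_\ast$ mapped outside $\sS_\ast$) is sound; you only need the same care as the paper about restricting the ``infinite matrix'' $T$ to the finite supports involved.

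The genuine gap is in the necessity direction of \parit{iii}. You correctly set up the framework (diagonal entries of $H = T\herm T - \idop$ vanish; the tensor trick relaxes $\bracket{\psi}{\psi}=1$ to $\bracket{\psi}{\psi}$ a unit; polarization on $a\ket{x}+b\ket{y}$ yields trace conditions on $\bar a b\, H_{xy}$), and you correctly dispose of the odd-characteristic trivial-conjugation case via $a=b=1$. But the general bounds $p^{\lceil r/2\rceil}$ and $2^{\lceil(r-1)/2\rceil}$ are exactly what you defer to an unproven ``sums of norms'' lemma, and your diagnosis of where those exponents come from is not right: they are not produced by a \emph{linear} analysis of which values $\bar a b$ are attainable subject to $\bar a a + \bar b b$ being a unit. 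A purely linear analysis yields constraints of the form $\operatorname{Tr}(c\,\varepsilon)=0$ for $\varepsilon = \bra{y}T\herm T\ket{x}$ and $c$ ranging over some set; in particular $a=b=1$ gives only $\bar\varepsilon = -\varepsilon$, which for a nontrivial conjugation is satisfied by units (e.g.\ $\varepsilon = \sqrt d$) and so cannot by itself force $\varepsilon$ into any proper ideal. The paper's key move, which your plan does not contain, is \emph{quadratic and self-referential}: having derived $\bar\varepsilon=-\varepsilon$ from the state $\sum_{j=0}^{k-1}\ket{x}\ket{j}+\ket{y}\ket{0}$, it feeds $\varepsilon$ back into the coefficients of a second test state, $\varepsilon\sum_{j=1}^{k-1}\ket{x}\ket{j}+(1_S+\varepsilon)\ket{y}\ket{1}$, whose norm equals $1_S$ precisely because $\bar\varepsilon=-\varepsilon$. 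Applying validity to this state yields the relation $2\varepsilon^2=0_S$, and the stated exponents then fall out of the structure of nilpotents in a Galois ring ($\varepsilon^2\in p^rS$ forces $\varepsilon\in p^{\lceil r/2\rceil}S$, with the loss of one factor of $2$ when $p=2$). Without some such quadratic relation your argument cannot reach the $\lceil r/2\rceil$ exponent in the nontrivial-conjugation case, so the deferred lemma is not a routine technicality but the actual content of part \parit{iii}.
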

\noindent
The proof of this proposition is technical, and is deferred to Appendix~\ref{apx:validTransformations} (Lemma~\ref{lemma:validTransformations-a} on page~\pageref{lemma:validTransformations-a}).
These families of operators preserve the state-spaces $\sS_\ast$, $\sS_1$, and $\sS_2$ respectively, and suggest related but distinct models of computation.

In the case of the unitary embeddings for $k > 2$, the valid transformations of $\sS_2$ may not all satisfy $T\herm T = \id_\sS$ in the Galois ring $S$, but only satisfy $T\herm T \equiv \id_\sS \pmod{\kappa}$ for a prime power $\kappa = p^t$.
However, the equivalence $T\herm T \equiv \id_\sS \pmod{\kappa}$ in the ring $S$ amounts to equality (of the equivalence classes) of $T\herm T$ and $\id_\sD$ in the smaller Galois ring $S' = S/\kappa S$.
Using a formalism of bounded error described in Sections~\ref{sec:bddErrorComputation} and~\ref{sec:reduceBddErrorToExact}, we may then use the unitary transformations of one Galois ring to simulate all valid $\sS_2$ transformations of a larger Galois ring.
Thus, even if not all valid transformations are unitary in some Galois rings $S$, they motivate the study of unitary transformations on other Galois rings $S'$.

\subsection{Modal circuits}
\label{sec:specialModalCircuitComplexity}

Any state-space $\sS$ (as defined in Definition~\ref{def:stateSpace}) is associated with a set of valid transformations which preserve it.
If the standard basis states $\ket{x}$ for $x \in \{0,1\}^\ast$ represent pieces of information, the valid transformations determine ways in which that information may be transformed.
A modal theory of computability describes how valid transformations of modal distributions may be decomposed into simpler, finitely-described transformations.

We now describe such decompositions of valid transformations, in analogy to circuit models for randomized computation and quantum computation.
We take the opportunity to describe conventions and gates of interest for our analysis, with the state-spaces of Definition~\ref{def:generalizedGeneric} in mind.
The material of this section is not substantially different from standard concepts of probabilistic computation or quantum computation.
Readers familiar with quantum computation can expect to be familiar the material in this section, and may skip ahead to Section~\ref{sec:modalCircuitComplexity} (page~\pageref{sec:modalCircuitComplexity}). 

\subsubsection{Modal computability}

For a fixed state-space $\sS$, we are interested in decomposing valid operations into ``primitive'' operations, to consider the computability and complexity of transformations of modal states.
We decompose them into operations of the following two sorts:
\begin{itemize}
\item
	\emph{Preparation operations}: injections $P: \sB\sox{n} \to \sB\sox{n+1}$ of the form $P \ket{\psi} = \ket{\psi}\ox\ket{\alpha}$ for some constant state $\ket{\alpha} \in \sB$\ifFullArticle, adjoining a bit prepared in some distribution which is independent of the input.
	We may represent these injections simply as state-vectors $\ket{\varphi} : \sB\sox{0} \to \sB\sox{1}$, in a tensor product with the identity operation acting on all other bits\fi\ (particularly $\ket{0}$, \ie~a ``fresh'' bit).
\item
	\emph{Bounded-arity transformations} (or \emph{gates}): a transformation $M: \sB\sox{h} \to \sB\sox{\ell}$ acting on an $h$-bit subsystem of the entire system, and producing an $\ell$-bit system as output.
	When considering $M$ among a collection $\{T_1, T_2, \ldots \}$ of transformations, we require that a finite representation of each coefficient $\bra{y} M \ket{x}$ be computable from $x \in \{0,1\}^h$, $y \in \{0,1\}^\ell$, and the index $j$ such that $M = T_j$.
\end{itemize}
\noindent
These operations compose to form more complex transformations $C: \sB\sox{n} \to \sB\sox{N}$.

The tensor product allows us to describe primitive operations $M_1, M_2, \ldots, M_m$ performed in parallel, when the sets of bits $\rA_1, \rA_2, \ldots, \rA_m$ on which they act are disjoint.
This is done by taking the tensor product of the operators $M_j$ (together with the identity operator acting on those bits not affected by the operations $M_j$).
In particular, if $\idop := \id_{\sB}$ is the identity operator on a single bit, we allow transformations $M: \sB\sox{h} \to \sB\sox{\ell}$ to be performed on contiguous subsets of bits by considering the tensor product operator $(\idop \ox \cdots \ox \idop \ox M \ox \idop \ox \cdots \ox \idop)$, for any finite number of identity operators on either side.
We may then multiply several such global operators to represent operations which are performed in sequence.

We describe compositions $C$ of the gates as \emph{$S$-modal circuits}, in analogy to boolean circuits and quantum circuits.

\subsubsection{Examples and notation}

For any semiring $S$, the gates of classical boolean circuit complexity preserve the $\sS_1$ state space.
Consider for instance {\small AND}, {\small OR}, and {\small NOT} gates, with fanout of bits explicitly represented by another gate {\small FANOUT}.
These gates act on $S$-modal distributions with the transformations
\begin{equation}
	\label{eqn:classicGates}
	\arraycolsep=2pt\def\arraystretch{0.9}
	\textrm{\small AND} = \mbox{\footnotesize$\left[\:\!\begin{matrix} 1 & 1 & 1 & 0 \\ 0 & 0 & 0 & 1 \end{matrix}\:\!\right]$}, ~~
	\textrm{\small OR} = \mbox{\footnotesize$\left[\:\!\begin{matrix} 1 & 0 & 0 & 0 \\ 0 & 1 & 1 & 1 \end{matrix}\:\!\right]$}, ~~
	\textrm{\small NOT} = \mbox{\footnotesize$\left[\:\!\begin{matrix} 0 & 1 \\ 1 & 0 \end{matrix}\:\!\right]$}, ~~\text{and}~~
	\textrm{\small FANOUT} = \mbox{\footnotesize$\left[\:\!\begin{matrix} 1 & 0 \\ 0 & 0 \\ 0 & 0 \\ 0 & 1 \end{matrix}\:\!\right]$},
\end{equation}
each acting on one or two bits.
\begin{subequations}
That is, we have
\begin{align}
		\textrm{\small AND} \bigl( a_0 \ket{00} + a_1 \ket{01} + a_2 \ket{01} + a_3 \ket{11} \bigr)
	&=
		(a_0 + a_1 + a_2) \ket{0} + a_3 \ket{1},
  \\[0.5ex]
		\textrm{\small OR} \bigl( a_0 \ket{00} + a_1 \ket{01} + a_2 \ket{01} + a_3 \ket{11} \bigr)
	&=
		a_0 \ket{0} + (a_1 + a_2 + a_3) \ket{1},
  \\[0.5ex]
		\textrm{\small NOT} \bigl( a \ket{0} + b \ket{1} \bigr)
	&=
		b \ket{0} + a \ket{1},
  \\[0.5ex]
		\textrm{\small FANOUT} \bigl( a \ket{0} + b \ket{1} \bigr)
	&=
		a \ket{00} + b \ket{11}.
\end{align}
\end{subequations}
From the above, we may see that these are affine transformations.
We may use these to describe more complex operations: for example, the logical formula $f(x_1,x_2,x_3) = (x_1 \mathbin{\&} x_2) \vee (x_2 \mathbin{\&} x_3)$ corresponds to an operator $F: \sB\sox{3} \to \sB\sox{1}$ expressed by
\begin{equation}
	\label{eqn:simpleComposition}
	\arraycolsep=2pt\def\arraystretch{0.9}
  F = 
	\mbox{\footnotesize$\left[\:\!\begin{matrix} 1 & 1 & 1 & 0 & 1 & 1 & 0 & 0 \\ 0 & 0 & 0 & 1 & 0 & 0 & 1 & 1 \end{matrix}\:\!\right]$}
	= \textrm{\small OR}\,\bigl(\textrm{\small AND} \ox \textrm{\small AND})\, (\idop \ox \textrm{\small FANOUT} \ox \idop).
\end{equation}
For a semiring $S$ in which $2$ is invertible (such as $\R_+$ or $\Z_k$ for odd $k$), we may also consider a single-bit gate
\begin{equation}
	\label{eqn:UNIF}
	\arraycolsep=2pt\def\arraystretch{0.9}
	\textrm{\small UNIF} = \mbox{\footnotesize$\left[\:\!\begin{matrix} 2^{-1} & 2^{-1} \\ 2^{-1} & 2^{-1} \end{matrix}\:\!\right]$},   
\end{equation}
which maps any state to a uniform distribution over $\ket{0}$ and $\ket{1}$, essentially representing the outcome of a fair coin flip (albeit only by formal analogy when $S \not\subset \R_+$).
It follows that for any $\ket{\psi} \in \sB$,
\begin{equation}
  \textrm{\small FANOUT} \cdot \textrm{\small UNIF} \ket{\psi} = 2^{-1} \bigl( \ket{00} + \ket{11} \bigr),
\end{equation}
which is not a standard basis state, nor a tensor product.
This is a consequence of the fact that the two bits produced as output are correlated (perfectly correlated, in this case).
Further transformations may be performed independently on each bit, though the results of those transformations may remain correlated.

Note that $\textrm{\small AND}$, $\textrm{\small OR}$, and $\textrm{\small UNIF}$ are all non-invertible: a straightforward calculation shows that ${\textrm{\small AND}\bigl(\ket{01} - \ket{10}\bigr)} = {\textrm{\small OR}\bigl(\ket{01} - \ket{10}\bigr)} = \vec 0$ and ${\textrm{\small UNIF}\bigl(\ket{0} - \ket{1}\bigr) = \vec 0}$.
Thus they are (usually\footnote{%
	The one exception is in the case $S = \Z_2$, in which case the $\sS_1$ and $\sS_2$ state-spaces are the same, and valid transformations of $\sS_2$ may be non-invertible.
	})
not valid operations for the generic and $\ell_2$ state spaces.
For transformations $M: \sB\sox{h} \to \sB\sox{\ell}$ to be invertible, we require that $h \le \ell$.
To simplify the study of transformations of $\sS_\ast$ and $\sS_2$ in this article, we limit ourselves to bijective gates $M: \sB\sox{h} \to \sB\sox{h}$ for $h \ge 1$.

Using bijective gates, we may simulate the traditional classical gates of Eqn.~\eqref{eqn:classicGates} via elementary techniques of reversible computation~\cite{Toffoli-1980,NC}, as follows.
We define single-bit \text{\small NOT} gates, two-bit \text{\small CNOT} gates, and three-bit \textsc{\small TOFFOLI} gates, which realize the following transformations of standard basis states:
	\begin{subequations}
	\label{eqn:CNOTs}
	\begin{align}
		\textrm{\small NOT} \ket{a} &= \ket{1-a}, \\
		\textrm{\small CNOT} \ket{c}\ket{a} &= \ket{c}\ket{a \oplus c},	\quad\text{ and} \\
		\textrm{\small TOFFOLI}\ket{c}\ket{b}\ket{a} &= \ket{c}\ket{b}\ket{a \oplus bc}   
	\end{align}
	\end{subequations}
	for $a,b,c \in \{0,1\}$, where $\oplus$ here denotes the XOR operation.
	Extending Eqn.~\eqref{eqn:CNOTs} linearly, their actions on arbitrary distributions $\ket{\psi} \in \sD$ are given by the (block) matrices
	\begin{equation}
	  \textrm{\small NOT} = X = \mbox{\small$\left[\begin{matrix} 0 \!&\! 1 \\ 1 \!&\! 0 \end{matrix}\right]$},
	  \quad\!
	  \textrm{\small CNOT} = \mbox{\small$\left[\begin{matrix} \idop \!&\! 0 \\ 0 \!&\! X\end{matrix}\right]$},
	  \quad\!
	  \textrm{\small TOFFOLI} = \mbox{\small$\left[\begin{matrix} \idop \!&\! 0 \!&\! 0 \!&\! 0 \\ 0 \!&\! \idop \!&\! 0 \!&\! 0 \\ 0 \!&\! 0 \!&\! \idop \!&\! 0 \\ 0 \!&\! 0 \!&\! 0 \!&\! X \end{matrix}\right]$}
	\end{equation}
	over the standard basis.
	We may simulate the gates of Eqn.~\eqref{eqn:classicGates} using the gates of Eqn.~\eqref{eqn:CNOTs} and preparation operations (in the case of $\textrm{\small AND}$ and $\textrm{\small OR}$, by producing the outputs of the classical logic gates as the final output bit):
	\begin{subequations}
	\begin{align}
			\ket{a}\ket{b}\ox\Bigl[\textrm{\small AND} \ket{a}\ket{b}\Bigr] &= \Bigl[\textrm{\small TOFFOLI}\,\bigl(\idop \ox \idop \ox \ket{0}\bigr)\Bigr]\ket{a}\ket{b},
		\\[1ex]
			\ket{a}\ket{b}\ox\Bigl[\textrm{\small OR} \ket{a}\ket{b}\Bigr] &= \Bigl[\bigl(X \ox X \ox X\bigr)\:\!\textrm{\small TOFFOLI}\,\bigl(X \ox X \ox \ket{0}\bigr)\Bigr]\ket{a}\ket{b},
	  \\[1ex]
			\textrm{\small FANOUT}\ket{a} &= \Bigl[\textrm{\small CNOT}\,\bigl(\idop \ox \ket{0}\bigr) \Bigr]\ket{a}.
	\end{align}
	\end{subequations}
	If we consider the final bit of a reversible circuit to compute the output of a boolean function $c: \{0,1\}^n \to \{0,1\}$ for some $n \ge 1$, the above equations allow us to simulate classical boolean circuits by reversible circuits.
	(We present an example in the next Section.) 
	
	For the $\ell_1$ state space, the map $\textrm{\small ERASE}: \sB\sox{1} \to \sB\sox{0}$ given by $
	  \textrm{\small ERASE} = \bigl[ 1 \;\;\; 1 \bigr]
	$
	is also a valid transformation. 
	We call this the \emph{erasure gate}, as $(\textrm{\small ERASE} \ox \idop) \ket{a}\ket{b} = \ket{b}$ for all $b \in \{0,1\}$.
	Then we may decompose the gates of Eqn.~\eqref{eqn:classicGates} exactly as
	\begin{subequations}
	\begin{align}
		\textrm{\small AND} &= \bigl(\textrm{\small ERASE} \ox \textrm{\small ERASE} \ox \idop\bigr)\,\textrm{\small TOFFOLI}\,\bigl(\idop \ox \idop \ox \ket{0}\bigr);
	\\
		\textrm{\small OR} &= \bigl(\textrm{\small ERASE} \ox \textrm{\small ERASE} \ox X\bigr)\,\textrm{\small TOFFOLI}\,\bigl(X \ox X \ox \ket{0}\bigr);
	\\
		\textrm{\small FANOUT} &= \textrm{\small CNOT} \bigl( \idop \ox \ket{0} 
		\bigr);
	\\
			\textrm{\small NOT} &= X.
	\end{align}
	\end{subequations}
	Thus, when considering transformations of the $\ell_1$ state space as well, we limit ourselves to gates $M: \sB\sox{h} \to \sB\sox{h}$ with the same number of input and output bits (with the exception of $\textrm{\small ERASE}$).
	
Operations on non-contiguous sets of bits may be allowed if one of the valid transformations is the two-bit {\small SWAP} operation, which re-orders a pair of consecutive bits:
\begin{equation}
	\label{eqn:SWAP}
	\arraycolsep=4pt
  \textrm{\small SWAP} = \mbox{\footnotesize$\left[\begin{matrix} 1 & 0 & 0 & 0 \\ 0 & 0 & 1 & 0 \\ 0 & 1 & 0 & 0 \\ 0 & 0 & 0 & 1 \end{matrix}\right]$}.
\end{equation}
We may simulate a gate $G$ acting on a non-contiguous set of bits, by applying a suitable permutation $P$ which puts the desired bits in a contiguous block, performing $G$, and then undoing the permutation $P$.
Any permutation of bits may be generated by transpositions of pairs of bits.
Then, given a set of primitive operations in a model of computation, we may perform those permutations provided that \text{\small SWAP} can be decomposed into those primitive operations.
While there are models of computation in which the \text{\small SWAP} operation is not a valid transformation of states~\cite{JM-2013}, as a permutation operator it preserves the $\sS_\ast$, $\sS_1$, and $\sS_2$ state-spaces, and so it is a valid operation which we include among our primitive operations with $\textrm{\small NOT}$, $\textrm{\small CNOT}$, and $\textrm{\small TOFFOLI}$.
Together, these gates may simulate any boolean formula~\cite{NC,Toffoli-1980}.

\subsubsection{Circuit diagrams and implicit SWAP operations}

It is helpful to depict modal circuits with diagrams.
We draw these in a way similar to classical logic circuits, with wires representing individual bits.
By convention, we draw them with the inputs on the left and outputs on the right: the order of the bits from top to bottom correspond to the order of the tensor factors in our equations, from left to right.
In the diagrams, gates are represented by labelled boxes (or symbols) placed on the bits on which they act.

As an example, the circuit $F: \sB\sox{3} \to \sB\sox{1}$ described in Eqn~\eqref{eqn:simpleComposition} is illustrated in the top portion of Figure~\ref{fig:simpleDiagrams} (page~\pageref{fig:simpleDiagrams}).
An equivalent circuit using only bijections $M: \sB\sox{h} \to \sB\sox{h}$ is depicted below it.
\begin{subequations}%
\label{eqn:CNOTs-diagrams}%
\allowdisplaybreaks
We visually represent the gates by
\begin{align}
  \textrm{\small NOT}: &&&
  \begin{aligned}
  \begin{tikzpicture}
    \draw (0,0) -- (1,0);
    \node [anchor=west] at (1,0) {$\ket{1-a}$};
    \node [anchor=east] at (0,0) {$\ket{a}$};
    \draw [black] (0.5,0) circle (4pt);
    \draw ($(0.5,0) + (0,4pt)$) -- ($(0.5,0) + (0,-4pt)$);
  \end{tikzpicture}
  \end{aligned}
\\[1ex]
  \textrm{\small CNOT}: &&&
	\begin{aligned}
  \begin{tikzpicture}
    \draw (0,0) -- (1,0);
    \draw (0,0.5) -- (1,0.5);
    \node [anchor=east] at (0,0) {$\ket{a}$};
    \node [anchor=east] at (0,0.5) {$\ket{c}$};
    \node [anchor=west] at (1,0) {$\ket{a \oplus c}$};
    \node [anchor=west] at (1,0.5) {$\ket{c}$};
    \filldraw [black] (0.5,0.5) circle (2pt);
    \draw [black] (0.5,0) circle (4pt);
    \draw (0.5,0.5) -- ($(0.5,0) + (0,-4pt)$);
  \end{tikzpicture}
	\end{aligned}
\\[1ex]
  \textrm{\small TOFFOLI}: &&&
	\begin{aligned}
  \begin{tikzpicture}
    \draw (0,0) -- (1,0);
    \draw (0,0.5) -- (1,0.5);
    \draw (0,1) -- (1,1);
    \node [anchor=east] at (0,0) {$\ket{a}$};
    \node [anchor=east] at (0,0.5) {$\ket{b}$};
    \node [anchor=east] at (0,1) {$\ket{c}$};
    \node [anchor=west] at (1,0) {$\ket{a \oplus bc}$};
    \node [anchor=west] at (1,0.5) {$\ket{b}$};
    \node [anchor=west] at (1,1) {$\ket{c}$};
    \filldraw [black] (0.5,1) circle (2pt);
    \filldraw [black] (0.5,0.5) circle (2pt);
    \draw [black] (0.5,0) circle (4pt);
    \draw (0.5,1) -- ($(0.5,0) + (0,-4pt)$);
  \end{tikzpicture}
	\end{aligned}
\end{align}
where the $\oplus$ denotes negation on the final bit conditioned (for standard basis states) on the bits marked with dots being in the state $\ket{1}$.
When we wish to represent a gate such as those of Eqn.~\eqref{eqn:CNOTs-diagrams} but which acts on non-consecutive bits, we allow the vertical connecting line to cross over any bits not involved (using the dots and $\oplus$ symbols to mark which bits the gate acts on).
This is illustrated by the first three gates in the bottom circuit of Figure~\ref{fig:simpleDiagrams}.
When we need to represent a $\textrm{\small SWAP}$ operation (or other permutation) explicitly, we do this by drawing crossing wires representing the interchanged bits.
\end{subequations}

It will be convenient to consider circuits which perform negation operations similar to $\textrm{\small CNOT}$ and $\textrm{\small TOFFOLI}$, but conditioned on more than two bits.
We define $\Lambda^{\!\ell} \!X: \sB\sox{\ell+1} \to \sB\sox{\ell+1}$, which performs the transformation
\begin{equation}
	\label{eqn:multiply-controlled-not}
	\Lambda^{\!\ell} \!X \ket{c_1}\ket{c_2}\cdots\ket{c_\ell}\ket{a} \;\,=\,\; \ket{c_1}\ket{c_2}\cdots\ket{c_\ell}\ket{a \oplus c_1 c_2 \cdots c_\ell}.
\end{equation}
For example, $\textrm{\small CNOT} = \Lambda^{\!1} \! X$ and $\textrm{\small TOFFOLI} = \Lambda^{\!2} \! X$.
Using standard techniques~\cite{Toffoli-1980,NC}, we may simulate these recursively by $\textrm{\small TOFFOLI}$ gates, using additional work-space.
We make use of such gates, and depict them similarly to $\textrm{\small TOFFOLI}$ gates,
\begin{equation}
  \Lambda^{\!\ell} \!X \;\equiv\;
	\begin{aligned}
  \begin{tikzpicture}
    \draw (0,0) -- (1,0);
    \draw (0,0.5) -- (1,0.5);
    \draw (0,1.5) -- (1,1.5);
    \draw (0,2) -- (1,2);
    \filldraw [black] (0.5,2) circle (2pt);
    \filldraw [black] (0.5,1.5) circle (2pt);
    \filldraw [black] (0.5,0.5) circle (2pt);
    \draw [black] (0.5,0) circle (4pt);
    \draw (0.5,2) -- ($(0.5,0) + (0,-4pt)$);
    \node at (0.125,1.1) {$\vdots$};
    \node at (0.875,1.1) {$\vdots$};
    \node at (1.25,1.25) [anchor=west] {\;$\mathllap{\left.\begin{array}{c}\\[9ex]\end{array}\right\}}$ $\ell$ bits.};
  \end{tikzpicture}
	\end{aligned}  
\end{equation}
For the sake of simplicity, this diagram does not depict the additional work space used to implement the $\Lambda^{\!\ell} \!X$ gate from $\textrm{\small TOFFOLI}$ gates.
We can account for the ancilla bits which are used if necessary, but this will not affect the asymptotic measures of circuit size which we consider.

\begin{figure}[t]
	\begin{center}
  	\begin{tikzpicture}[scale=0.75]
	  \node [draw,
						isosceles triangle, isosceles triangle apex angle=45, shape border rotate=180,
						minimum height=4ex, inner sep=3pt, fill=white] (t) at (1.5,0.75) {};
	  \node [draw, and gate US, inner sep=1pt] (u) at (3,1.5) {}; 
	  \node [draw, and gate US, inner sep=1pt] (v) at (3,0) {}; 
	  \node [draw, or gate US, inner sep=1pt] (w) at (4.5,0.75) {}; 
	  
	  \node (a2) at (-0.25,0.75) {\small $\ket{x_2}$};
	  \node (a1) at ($(a2 |- u.input 1)$) {\small $\ket{x_1}$};
	  \node (a3) at ($(a2 |- v.input 2)$) {\small $\ket{x_3}$};
	  \node (omega) [anchor=west] at (6,0.75) {\small $\ket{(x_1 \& x_2) \vee (x_2 \& x_3)}$};

		\draw [-, out=0, in=180] (a1) to (u.input 1);
		\draw [-, out=0, in=180] (a2) to (t);
		\draw [-, out=0, in=180] (a3) to (v.input 2);
		\draw [-, out=0, in=180] (t.25) to (u.input 2);
		\draw [-, out=0, in=180] (t.335) to (v.input 1);
		\draw [-, out=0, in=180] (u.output) to (w.input 1);
		\draw [-, out=0, in=180] (v.output) to (w.input 2);
		\draw [-, out=0, in=180] (w.output) to (omega);
	\end{tikzpicture}
	\\[4ex]
	\begin{tikzpicture}
	  \coordinate (x1-0) at (0,0);
	  \xdef\prev{1}
	  \foreach \j in {2,3,4,5,6,7} {%
			\coordinate (x\j-0) at ($(x\prev-0) + (0,-0.5)$);
			\xdef\prev{\j}
		}
		
		\xdef\prev{0}
		\foreach \t/\dt in {1/0.5,2/0.75,3/0.75,4/0.75,5/0.35,6/0.35,7/0.75} {
			\foreach \j in {1,2,3,4,5,6,7} {
				\coordinate (x\j-\t) at ($(x\j-\prev) + (\dt,0)$);
				\draw (x\j-\prev) -- (x\j-\t);
			}
			\xdef\prev{\t}
		}
		
		\node at (x1-0) [anchor=east] {\small $\ket{x_1}$};
		\node at (x2-0) [anchor=east] {\small $\ket{x_2}$};
		\node at (x3-0) [anchor=east] {\small $\ket{x_3}$};
		\foreach \j in {4,5,6,7}
			\node at (x\j-0) [anchor=east] {\small $\ket{0}$};

		\filldraw [black] (x2-1) circle (2pt);
		\draw [black] (x4-1) circle (4pt);
		\draw [black] (x2-1) -- ($(x4-1) + (0,-4pt)$);
		
		\filldraw [black] (x1-2) circle (2pt);
		\filldraw [black] (x2-2) circle (2pt);
		\draw [black] (x5-2) circle (4pt);
		\draw [black] (x1-2) -- ($(x5-2) + (0,-4pt)$);

		\filldraw [black] (x3-3) circle (2pt);
		\filldraw [black] (x4-3) circle (2pt);
		\draw [black] (x6-3) circle (4pt);
		\draw [black] (x3-3) -- ($(x6-3) + (0,-4pt)$);

		\draw [black] (x5-4) circle (4pt);
		\draw [black] (x6-4) circle (4pt);
		\draw [black] ($(x5-4) + (0,4pt)$) -- ($(x5-4) + (0,-4pt)$);
		\draw [black] ($(x6-4) + (0,4pt)$) -- ($(x6-4) + (0,-4pt)$);

		\filldraw [black] (x5-5) circle (2pt);
		\filldraw [black] (x6-5) circle (2pt);
		\draw [black] (x7-5) circle (4pt);
		\draw [black] (x5-5) -- ($(x7-5) + (0,-4pt)$);

		\draw [black] (x5-6) circle (4pt);
		\draw [black] (x6-6) circle (4pt);
		\draw [black] (x7-6) circle (4pt);
		\draw [black] ($(x5-6) + (0,4pt)$) -- ($(x5-6) + (0,-4pt)$);
		\draw [black] ($(x6-6) + (0,4pt)$) -- ($(x6-6) + (0,-4pt)$);
		\draw [black] ($(x7-6) + (0,4pt)$) -- ($(x7-6) + (0,-4pt)$);

		\node at (x1-7) [anchor=west] {\small $\ket{x_1}$};
		\node at (x2-7) [anchor=west] {\small $\ket{x_2}$};
		\node at (x3-7) [anchor=west] {\small $\ket{x_3}$};
		\node at (x4-7) [anchor=west] {\small $\ket{x_2}$};
		\node at (x5-7) [anchor=west] {\small $\ket{x_1 \& x_2}$};
		\node at (x6-7) [anchor=west] {\small $\ket{x_3 \& x_2}$};
		\node at (x7-7) [anchor=west] {\small $\ket{(x_1 \& x_2) \vee (x_3 \& x_2)}$};
	\end{tikzpicture}
	\end{center}
\caption{%
	Two circuits to evaluate $f(x_1,x_2,x_3) = (x_1 \mathbin\& x_2) \vee (x_2 \mathbin\& x_3)$, using different sets of gates/preparation operations.
	Lines running from left to right represent bits which are the inputs/outputs of gates in the circuit.
	Initial states (input values) of the bits are presented on the left, and final states (output values) are presented on the right.
	~\textbf{Top:}~A~boolean circuit involving traditional boolean logic gates.
	From left to right, the gates are $\textrm{\footnotesize FANOUT} : \sB \to \sB\sox{2}$, $\textrm{\footnotesize AND}: \sB\sox{2} \to \sB$, and $\textrm{\footnotesize OR}: \sB\sox{2} \to \sB$.
	~\textbf{Bottom:}~A~circuit involving only invertible operations $M: \sB\sox{h} \to \sB\sox{h}$,
	simulating the irreversible circuit above.
	The $\oplus$ symbols represent logical negation operations on bits, performed either unconditionally ({\footnotesize NOT} gates), or conditioned on one or two bits ({\footnotesize CNOT} and {\footnotesize TOFFOLI} gates).
	Control bits are denoted by solid dots, connected to the $\oplus$ symbol by a solid vertical line. 
	({\footnotesize SWAP} operations are used implicitly, to move the control and target operations adjacent to one another while leaving other bits unaffected.)
	The output $f(x_1, x_2, x_3)$ is computed on the bottom-most bit.	
}
\label{fig:simpleDiagrams}
\end{figure}
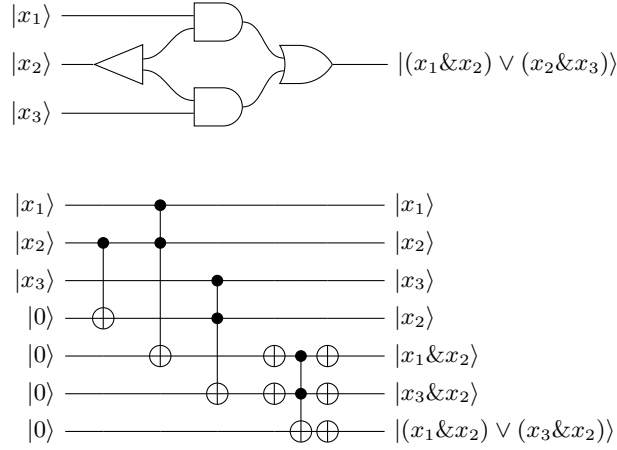

To simplify equations, we omit $\textrm{\small SWAP}$ operations (and tensor products with $\idop$) by using subscripts to denote which bits each operation acts on.
For instance, consider the action of the left-most operation in the bottom circuit of Figure~\ref{fig:simpleDiagrams} on the first four bits, performing the operation $\ket{x_1}\ket{x_2}\ket{x_3}\ket{x_4} \mapsto \ket{x_1}\ket{x_2}\ket{x_3}\ket{x_4 \oplus x_2}$ on standard basis states.
Rather than representing this explicitly as a decomposition such as
\begin{equation}
	(\idop \ox \textrm{\small SWAP} \ox \idop) (\idop \ox \idop \ox \textrm{\small CNOT}) (\idop \ox \textrm{\small SWAP} \ox \idop),
\end{equation}
we instead denote the same operation by $\textrm{\small CNOT}_{2,4}$, which denotes that it is a $\textrm{\small CNOT}$ operation acting only on the second and fourth tensor factor (in that order).

\subsubsection{A sample calculation}

As an example, we now compute the action of the bottom circuit of Figure~\ref{fig:simpleDiagrams}.
We would write the composition of operations in circuit as the product
\begin{equation}
	X_7 \;
	X_6 \;
	X_5 \;
	\textrm{\small TOFFOLI}_{5,6,7} \;
	X_6 \;
	X_5 \;
	\textrm{\small TOFFOLI}_{3,4,6} \;
	\textrm{\small TOFFOLI}_{1,2,5} \;
  \textrm{\small CNOT}_{2,4} \;
\end{equation}
ordered from right to left (acting as linear transformations of column vectors).
We then represent the effect of each operation in the circuit, as follows:
\begin{equation}
	\label{eqn:sampleCalculation}
	\mspace{-18mu}
  \begin{aligned}[b]
			|x_1, x_2, x_3, 0, 0, &0, 0 \rangle
    \\[0.5ex]\mapstoname{\textrm{CNOT}_{2,4}}{}&
			\ket{x_1, x_2, x_3, x_2, 0, 0, 0}
		\\\mapstoname{\textrm{TOFFOLI}_{1,2,5}}{}&
			\ket{x_1,\, x_2,\, x_3,\, x_2,\, x_1 \& x_2,\, 0,\, 0}
		\\\mapstoname{\textrm{TOFFOLI}_{3,4,6}}{}&
			\ket{x_1,\; x_2,\; x_3,\; x_2,\; x_1 \& x_2,\; x_3 \& x_2,\; 0}
		\\\mapstoname{X_6 X_5}{}&
			\ket{x_1,\big.\; x_2,\; x_3,\; x_2,\; \neg(x_1 \& x_2),\; \neg(x_3 \& x_2),\; 0}
		\\\mapstoname{\textrm{TOFFOLI}_{5,6,7}}{}&
			\ket{x_1,\Big.\; x_2,\; x_3,\; x_2,\; \neg(x_1 \& x_2),\; \neg(x_3 \& x_2),\; \neg(x_1 \& x_2) \mathbin\& \neg(x_3 \& x_2)}
		\\\mapstoname{X_7 X_6 X_5}{}&
			\ket{x_1,\Big.\; x_2,\; x_3,\; x_2,\; (x_1 \& x_2),\; (x_3 \& x_2),\; (x_1 \& x_2) \vee  (x_3 \& x_2)}.
  \end{aligned}
  \mspace{-57mu}
\end{equation}
If we performed this circuit on an input $\ket{\psi} \ket{0000}$ for some state $\ket{\psi} = a_{000} \ket{000} + a_{001} \ket{001} + \cdots + a_{111} \ket{111}$, the result would simply be a linear combination of standard basis states as in the final line of Eqn.~\eqref{eqn:sampleCalculation}, for $x_1 x_2 x_3 \in \{0,1\}^3$ running over all possible values.
We could then decompose the output as
\begin{subequations}
\begin{align}
  C \ket{\psi}\ket{0000} &= \ket{\varphi'}\ket{0} + \ket{\varphi''}\ket{1}, \quad \text{where}
  \\[3ex]
  \ket{\varphi'} \;&= \;\; \sum_{\mathclap{\substack{x \in \{0,1\}^3 \\ f(x) = 0}}} \; a_x \ket{x_1,\, x_2,\, x_3,\, x_2,\, x_1\&x_2,\, x_2\&x_3},
  \\[0.5ex]
  \ket{\varphi''} \;&= \;\; \sum_{\mathclap{\substack{x \in \{0,1\}^3 \\ f(x) = 1}}} \; a_x \ket{x_1,\, x_2,\, x_3,\, x_2,\, x_1\&x_2,\, x_2\&x_3}.
\end{align}
\end{subequations}
Such decompositions play a role in our description of how modal circuits are used to solve decision problems.

\subsection{Modal circuit complexity}
\label{sec:modalCircuitComplexity}

\subsubsection{Uniform circuit families and efficient exact computation}

To consider problems that may be ``efficiently'' solved by $S$-modal circuits, we consider families $\{C_n\}_{n \ge 1}$ of circuits $C_n: \sB\sox{n} \to \sB\sox{N}$, for some $N \in \N$ depending on $n$, with the following constraints:
\begin{romanum}
\item
	$\{C_n\}_{n \ge 1}$ is \emph{polynomial-time uniform}~\cite{AB-2009,Papadimitriou-1994}: there is a deterministic Turing machine which, on input $1^n$, computes the construction of $C_n$ as a composition of primitive gates and preparation maps in time $O(\poly n)$, where each gate/preparation map $T_\ell$ may be represented by its label $\ell \in \{0,1\}^\ast$.	
\item
	The gates and preparation maps of $\{ C_n \}_{n \ge 1}$ are \emph{polynomial-time specifiable}: there is a polynomial-time deterministic Turing machine which, for each gate or preparation map $T_\ell$ used in each circuit $C_n$, computes representations of all of the coefficients of $T_\ell$ in the standard basis in total time $\poly(|\ell|)$.
	We also require that there be polynomial-time bounded deterministic Turing machines which compute representations of $a + b \in S$ and $ab \in S$, and to decide $a \mathrel{?\!=} 0_S$, from representations of coefficients $a,b \in S$.
\end{romanum}
Constraint~\parit{ii} allows us to consider circuit families $\{ C_n \}_{n \ge 1}$, in which $C_n$ is completely representable in time $O(\poly n)$, but where the number of distinct gates used by $C_n$ may grow with $n$.\footnote{%
	For models of computation such as deterministic or randomized circuit families, or indeed the $\Z_k$-modal circuits which we study in this article, polytime-specifiable gate-sets may be simulated with only polynomial overhead by constant-sized gate-sets.
	However, we do not consider it extravagant to allow gates which can be specified in polynomial time.
	Furthermore, as we argue in Appendix~\ref{apx:EQPthesis},\label{discn:polytime-specifiable-subst} requiring constant-sized gate sets is an undue restriction on the study of \emph{exact} quantum algorithms.
}
One can study more limited circuit families, such as logspace-uniform circuits with logspace-specifiable gates (modifying the definitions above accordingly), or logspace-uniform circuits with \emph{constant-time} specifiable gates (\ie~a~finite gate set), \etc.

We associate a cost to each gate and preparation map, which bounds the time required to compute any of its coefficients:
\label{discn:gate-cost}
for circuits constructed from a finite gate-set, all gates have an equivalent cost under asymptotic analysis.
In many cases it may also be reasonable to specifically restrict preparation operations to a finite set, \eg~to preparation of the state $\ket{0}$.

We adopt the convention that circuit families $\{C_n\}_{n\ge 1}$ to solve decision problems produce their answer on the last bit of their output.
For a language $L$ and $x \in \{0,1\}^\ast$, we write $L(x) = 0$ for $x \notin L$, and $L(x) = 1$ for $x \in L$.
Thus we are interested in decomposing the final state $\ket{\psi_x}$ of a computation as
\begin{equation}
  C_n \ket{x} \;=\; \ket{\psi'}\ket{0} + \ket{\psi''}\ket{1},
\end{equation}
and deciding on the basis of the conditional distributions $\ket{\psi'}, \ket{\psi''} \in \sD$.
\begin{definition}
  Let $L$ be a language and $\{ C_n \}_{n \ge 1}$ be a circuit family.
	Then this circuit family \emph{efficiently decides $L$ exactly} if $\{C_n\}_{n \ge 1}$ is polynomial-time uniform with polynomial-time specifiable gates, and the final bit of $C_n \ket{x}$ is necessarily $L(x)$ for each $x \in \{0,1\}^n$ (that is, $C_n\ket{x}= \ket{\varphi}\ket{L(x)}$ for some $\ket{\varphi} \in \sS$).
\end{definition}
\noindent

\subsubsection{Efficient bounded-error computation}
\label{sec:bddErrorComputation}

Exact computation is a restrictive condition for certain semirings $S$.
For instance, in the case $S = \R_+$ of randomised computations, problems which are exactly solvable are merely those in $\P$: the probabilities can play essentially no role in exact algorithms.
On the other hand, we are not interested in circuit families $\{ C_n \}_{n \ge 1}$ such that $C_n \ket{x} = \ket{\varphi'}\ket{0} + \ket{\varphi''}\ket{1}$, where $\ket{\varphi''} \ne \vec 0 \,{\iff}\, x \in L$.
This corresponds to determining whether the output $1$ is impossible or merely possible, which 
we interpret 
as decision with unbounded error. 

We may formulate a theory of bounded-error computation over arbitrary semirings through ``significance'' functions $\{0,1\}: S \to \R_+$, similar to metrics or absolute value functions,\footnote{%
	The usual properties of absolute values, such as $\sigma(st) = \sigma(s)\sigma(t)$, would imply for Galois rings that $\sigma(s) = 0$ for all zero divisors $s$.
	We see no reason to require this to be the case, and so define ``significance functions'' to be more general than absolute value functions.
}
which distinguish the significance of various amplitudes.
\begin{definition}
  Let $S$ be a semiring. A \emph{significance function} $\sigma: S \to \R_+$ is a function such that $\sigma(0_S) = 0$, $\sigma(1_S) = 1$, and which is \emph{monotone}: that is, for all $s,t,u \in S$, ${\sigma(u) \le \sigma(t)} \implies {\sigma(su) \le \sigma(st)}$.
  A significance function $\sigma$ is \emph{effective} if there is a polynomial-time deterministic Turing machine which decides,
	from representations of $u \in S$ and $a \in \R_+$, which of $\sigma(u) < a$, $\sigma(u) = a$, or $\sigma(u) > a$ holds. 
\end{definition}
\noindent 
One may consider a computational outcome $\ket{\varphi'}\ket{0} + \ket{\varphi''}\ket{1}$ to have bounded error, by describing conditions for the conditional distributions $\ket{\varphi'}, \ket{\varphi''} \in \sD$ to be ``insignificant''.
In particular, any transformation of an insignificant distribution should again be insignificant.
This motivates the following definition:
\begin{definition}
	\label{def:bddErrorDecision}
  Let ${0 \le a < b \le 1}$ be real constants, $S$ a semiring, $\sigma: S \to \R_+$ be an effective significance function, $L$ be a language, and $\{ C_n \}_{n \ge 1}$ be an $S$-modal circuit family.
	Then this circuit family \emph{efficiently decides $L$ with $\sigma$-error bounds $(a,b)$} if $\{C_n\}_{n \ge 1}$ is polynomial-time uniform with polynomial-time specifiable gates, and for any $x \in \{0,1\}^n$, we have
$		C_n \ket{x} \;=\; \ket{\varphi'}\ket{1{-}L(x)} + \ket{\varphi''}\ket{L(x)}, $
	such that:
	\begin{romanum}
	\item
		for every $y \in \{0,1\}^\ast$ and valid transformation $T$, we have $\sigma\bigl(\bra{y}T\ket{\varphi'}\bigr) \le a$;
	\item
		there is a $y \in \{0,1\}^\ast$ and a valid transformation $T$\!, such that $\sigma\bigl(\bra{y}T\ket{\varphi''}\bigr) \ge b$.	
	\end{romanum}
\end{definition}
\noindent
If such a bounded-error algorithm is a subroutine of some other procedure, final outcomes which depend on an incorrect result from the subroutine will be ``insignificant'' (in the sense that its significance can never surpass the threshold $a$), 
while correct results of a subroutine always admit a way to produce a ``significant'' amplitude.

\subsubsection{Examples}
\label{sec:modalComputationConventionalExamples}

For the sake of concreteness, we illustrate how these definitions of ``exact'' and ``bounded error'' computation are realised in conventional models of computation.

\vspace*{-1ex}
\paragraph{Deterministic computation.}
	Polynomial-time deterministic computation represents a trivial case of modal computation.
	Consider any semi-ring $S$, with a state-space $\sS_{\delta}$ consisting of standard basis states $\ket{x}$ for $x \in \{0,1\}^\ast$.
	This is a state-space for which the valid transformations are all permutation operations.
	The circuit-families over $\sS_{\delta}$ which efficiently solve decision problems are then polytime-uniform circuits composed of permutations which are themselves computable in polynomial time.
	These circuits can be simulated in $\P$ simply by simulating the action of each gate on standard basis states; and conversely, uniform circuit families over boolean logic gates can decide all problems in $\P$.
	By standard arguments in circuit complexity, the same holds for logspace-uniform circuit families transforming $\sS_{\delta}$ using logspace-specifiable gates.	

\vspace*{-1ex}
\paragraph{Randomized computation.}
	We may recover $\BPP$ as the decision problems which are solvable with bounded $\sigma$-error using efficient computation on the $\sS_1$-state space over $\R_+$, where $\sigma(u) = u$.	
	As $\max_{y,T} \sigma(\bra{y} T \ket{\varphi}) = \| \varphi \|_1$, the ``significance'' of a conditional distribution is simply its probability.
	A randomized algorithm may simulate a $\R_+$-modal gate, by computing the distribution of each possible transition, and then sampling from that distribution.
	The sampling process may be approximate (\eg~if some gate coefficients are irrational): it suffices to take a rational approximations to within some small $\varepsilon$ for each coefficient, such that the approximate probabilities add to $1$.
	(These approximations can be computed in time $O(\poly \log(1/\varepsilon))$, as $\sigma$ is effective.)
	A circuit may be simulated, by simulating each gate in turn, and producing the appropriate output bit.
	If there are $T$ gates in the circuit, and there are $N \in O(\poly n)$ bits in the circuit, this simulation yields a distribution which differs from the output distribution of the circuit by at most $2^N T\varepsilon$.
	If our approximations are precise enough that $\varepsilon \in o(1/T 2^N)$, the probabilities of obtaining either output $0$ or $1$ differ from the circuit by at most $o(1)$.
	Thus a polytime-uniform, polytime-specifiable circuit which decides some language $L$ with $\sigma$-error bounds $(\tfrac{1}{4},\tfrac{3}{4})$ may be simulated by a randomized Turing machine in polynomial time with error bounds $(\tfrac{1}{3},\tfrac{2}{3})$.
	Conversely, polytime-uniform $\R_+$-modal circuit families can simulate randomized logic circuits: thus the class decided by such ``modal circuits'' is simply $\BPP$.
	We may similarly characterize $\RP$ or $\co\RP$ by imposing further restrictions on output amplitudes from the modal circuits, for \YES\ or for \NO\ instances.

\vspace*{-1ex}
\paragraph{Quantum computation.}
The unitary circuit model consists of polytime-uniform families of $\C$-modal circuits which preserve the $\ell_2$-state space, constructed from a finite (\ie~constant-time specifiable) gate set .
The definitions of $\EQP$ and $\BQP$~\cite{BV97} are then equivalent to the classes of decision problems which can be decided by such circuits, respectively exactly or with constant probability of error.
We may recover $\BQP$ using polytime-specifiable gates,\footnote{%
	The definition of \EQP\ is unlikely to be similarly robust to a change to polytime-specifiable gate sets.
	We argue in Appendix~\ref{apx:EQPthesis}\label{discn:EQP-nonrobust} that, in fact, a broader definition of efficient exact quantum computation is warranted.
}
with $\sigma$-error bounds $(\tfrac{1}{3},\tfrac{2}{3})$
for $\sigma(u) = |u|^2$, as follows.

A gate set $\mathcal G \subset \mathrm{SU}(2^N)$ is \emph{approximately universal} if products of $T_\ell \in \mathcal G$ (and tensor products with $\idop$) generate a dense subgroup of $\mathrm{SU}(2^M)$ for any $M \ge N$.
By the Solovay--Kitaev theorem~\cite{KSV-2002,DN2006}, any approximately universal unitary gate set which is closed under inverses may efficiently approximate any $U \in \mathrm{SU}(2^L)$, in that there are $W_1, W_2, \ldots, W_T \in \mathcal G$ which can be discovered in time $O(\poly(2^L) \poly \log(1/\varepsilon))$, such that $\| U - W_1 W_2 \cdots W_T \|_\infty < \varepsilon$.
Any polytime-specifiable gate $U \in \mathrm{U}(2^L)$ acts on at most $L \in O(\log n)$ qubits, and is proportional to an operator $U' \in \mathrm{SU}(\poly n)$.
The ``significance'' of a conditional distribution is its probability with respect to the Born rule: that is, $\max_{y,T} \sigma(\bra{y} T \ket{\varphi}) = \| \varphi \|_2^2$\,.
As unit-modulus proportionality factors make no difference to the significance of the outcomes, we may simulate $U$ simply by substituting it with $U'$, which we may simulate (approximately but efficiently) using any approximately universal gate-set by Solovay--Kitaev.

Thus, as with randomized computation, the bounded-error conditions for the modal circuits correspond to bounded error conditions for quantum algorithms; with minor changes to the error bounds, we may simulate any polytime-uniform, polytime-specifiable circuit on the $\ell_2$-state space of $\C$ by a bounded-error quantum algorithm.

\medskip
The above illustrates how several existing notions of bounded-error computation may be described in the framework of modal computation, and hopefully convinces the reader that this framework is well-formulated.
The task of the following Section is to similarly apply this framework to distributions over $\Z_k$ and other Galois rings.

\section{Modal computation on Galois rings}
\label{sec:modalComputationGaloisRings}

Using the general framework of the previous Section, we now present the computational model of our main results: transformations of Galois-ring-valued distributions.
Our motivation is that Galois rings include the special case of fields $\F_k$ considered by Refs.~\cite{SW10,WS11,SW-2012}, as well as the more familiar cyclic rings $\Z_k$ for prime powers $k$.
We present the complexity classes motivated by the definitions of the preceding Section, and state the main results of the article, to be proven in Section~\ref{sec:characterization}.
\pagebreak

Throughout this Section, $k = p^r$ denotes some power of a prime $p$.
We consider mainly the case of cyclic rings $\Z_k$, which includes fields of prime order when $r = 1$.
We then indicate how our results extend to Galois rings in general.

\subsection{Elementary modal gate sets for cyclic rings}

For each state-space $\sS_\ast$, $\sS_1$, and $\sS_2$ as in Definition~\ref{def:generalizedGeneric}, we consider circuits involving preparation of $\ket{0} \in \sB$ and gates acting on at most four bits. 
The latter are represented by matrices over $\Z_k$ of shape ${2 \!\x\! 2}$, ~${4 \!\x\! 4}$, ~${8 \!\x\! 8}$, or ${16 \!\x\! 16}$.
For each of the state-spaces we consider, there are only finitely many valid operations on four or fewer bits, including all of the gates of Eqn.~\eqref{eqn:CNOTs} as well as the \textrm{\small SWAP} gate.
While some smaller gate sets (for each of $\sS_\ast$, $\sS_1$, or $\sS_2$) may generate the same sets of transformations, allowing all gates on four or fewer bits yields at most a constant factor advantage.

Our results do not require the ability to simulate arbitrary valid operations.
However, any valid transformation of $\sS_\ast$, $\sS_1$, or $\sS_2$ can indeed be simulated using four-bit gates.
For instance, invertible gates on four or fewer bits (together with preparation of the state $\ket{0}$) can simulate any valid transformation $T: \sB\sox{n} \to \sB\sox{n}$ of $\sS_\ast$, in that they can be composed to perform an operation $M: \sB\sox{n+m} \to \sB\sox{n+m}$ such that
\begin{equation}
  M \ket{x}\ket{0^m} \;=\; \bigl(T\ket{x}\bigr)\ox\ket{0^m} \quad\text{for all $x \in \{0,1\}^n$}.
\end{equation}
Such an operator $M$ can be found by simulating simple Gaussian elimination on $T$, following the techniques described in Ref.~\cite[Lemma~8.1.4]{KSV-2002} to obtain a decomposition into \textrm{\small TOFFOLI} gates and two-qubit invertible gates.
The gates $\textrm{\small NOT}$, $\textrm{\small CNOT}$, $\textrm{\small TOFFOLI}$, and $\textrm{\small SWAP}$ are used to single out a pair of standard basis states $\ket{a}, \ket{b}$ to act upon for $a,b \in \{0,1\}^n$, performing a permutation of the standard basis on $n+1$ bits such that $\ket{a}\ket{0} \mapsto \ket{11\cdots10}\ket{1}$ and $\ket{b}\ket{0} \mapsto \ket{11\cdots11}\ket{1}$.
One may then simulate an elementary row-operation between the rows $a$ and $b$ by performing a suitable operation on the final two bits, and then undoing the permutation of the standard basis states.
The row-operation is chosen to yield a matrix with no coefficient in row $b$ and column $a$, realising a single step of Gaussian elimination; the result follows by simulating the entire Gaussian elimination.
This result is technical, and a straightforward modification of known results for quantum circuits; similar techniques exist to decompose affine operations or unitary operations into gates acting on four or fewer bits.\footnote{%
	In the case of affine operations in particular, a few minor changes in the analysis are required to the approach of Ref.~\cite[Lemma~8.1.4]{KSV-2002}, as not all affine operations are invertible.
	The details are straightforward, but we omit them here, as they do not bear on our results.
}

One of our main results is to characterise the power of unitary circuits acting on $\Z_k$-distributions (Section~\ref{sec:simulatingModkPbyModal}).
A corollary of these results is that there is a set of eight gates which suffice to efficiently simulate any polynomial-time uniform circuit with polynomial-time specifiable gates, consisting of invertible or unitary transformations on $\Z_k$-modal distributions.
A different set of eight gates suffice to simulate polynomial-time uniform circuits with polynomial-time specifiable gates over $\Z_k$.


\subsection{Bounded error reduces to exact computation for $\Z_{p^r}$}

\label{sec:signFns}

Consider the special case of a prime-order cyclic ring $\Z_p$ (taking $k = p^1$).
We may show that there is a unique choice of significance function $\sigma_p : \Z_p \to \R_+$:
\begin{equation}
  \sigma_p(s) = \begin{cases}
                  1,	&	s \ne 0; \\
                  0,	& s = 0.
                \end{cases}
\end{equation}
The uniqueness of this function is due to the fact that $\Z_p \setminus \{0\}$ (written $\Z_p^\times$) is a finite multiplicative group.
Every element $a \in \Z_p^\times$ has order at most $p-1$: then if $\sigma(a) \le \sigma(1)$ for some significance function $\sigma : \Z_p \to \R_+$ and $a \in \Z_p^\times$, it follows that
\begin{equation}
  \sigma(1) \ge \sigma(a) \ge \sigma(a^2) \ge \cdots \ge \sigma(a^{p-1}) = \sigma(1) = 1
\end{equation}
by monotonicity; and similarly if $\sigma(a) \ge \sigma(1)$.
As $\sigma(0) = 0$ and $\sigma(1) = 1$ by definition, we then have $\sigma = \sigma_p$.

For cyclic rings $\Z_k$ for prime powers $k = p^r$ where $r > 1$, there is more than one significance function.
However, all of the significance functions can be related to a canonical significance function which generalises $\sigma_p$.
\begin{lemma}
  Let $k = p^r$ for $r > 1$ and a prime $p$.
  If $\sigma: \Z_k \to \Z_+$ is a significance function, there is a strictly increasing function $f: \R_+ \to \R_+$ and an integer $1 \le \tau \le r$ such that $\sigma(s) = f(\sigma_{p^\tau}(s))$, where
  \begin{equation}
    \sigma_{p^\tau}(s) = \begin{cases}
														1/p^t,	&	\text{$s = p^t a$ for a multiplicative unit $a \in \Z_k^\times$ and $t < \tau$};	\\
														0,			&	\text{if $s \in p^\tau \Z_k$}.
                         \end{cases}    
  \end{equation}
\end{lemma}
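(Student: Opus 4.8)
The plan is to reduce $\sigma$ to a function of the $p$-adic valuation alone, show that this function decreases strictly until it reaches $0$ and is then identically $0$, and finally read off $\tau$ and build $f$ by interpolation. First I would show that $\sigma$ is invariant under multiplication by units, so that $\sigma(s)$ depends only on $t = v_p(s)$ (the largest power of $p$ dividing $s$, with $v_p(0) = r$). For a unit $a \in \Z_k^\times$ of multiplicative order $m$, exactly as in the prime-order argument given above one of $\sigma(a) \le \sigma(1_S)$ or $\sigma(a) \ge \sigma(1_S)$ holds; applying monotonicity with multiplier $a$ repeatedly and using $a^m = 1_S$ forces $\sigma(a) = \sigma(1_S) = 1$. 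Then $\sigma(a) = \sigma(1_S)$ together with monotonicity used in both directions (with an arbitrary multiplier $s$) gives $\sigma(as) = \sigma(s)$ for every $s$. Since any two elements of valuation $t$ differ by a unit factor, $\sigma$ is constant on each set $\{p^t a : a \in \Z_k^\times\}$, so I may define $g(t) := \sigma(p^t)$ for $0 \le t \le r$, with $g(0) = \sigma(1_S) = 1$ and $g(r) = \sigma(0_S) = 0$.

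Next I would establish the shape of $g$. Monotonicity first shows $g$ is non-increasing: one has $\sigma(p) \le \sigma(1_S) = 1$, since otherwise $\sigma(1_S) < \sigma(p)$ would give $1 < \sigma(p) \le \sigma(p^2) \le \cdots \le \sigma(p^r) = 0$, which is absurd; multiplying $\sigma(p) \le \sigma(1_S)$ by $p^t$ then yields $g(t+1) \le g(t)$. The main obstacle — and the one genuinely non-trivial point — is to upgrade this to a \emph{strict} decrease while $g$ is positive; without it one might fear pathological significance functions such as the nonzero-indicator, which on $\Z_{p^r}$ for $r > 1$ is in fact not monotone. Here I would exploit the equality form of monotonicity: if $g(t) = g(t+1)$ for some $t$, then $\sigma(p^{t+1}) = \sigma(p^t)$, and multiplying by $p^s$ gives $g(t+s) = g(t+1+s)$ for all $s \ge 0$; chaining these relations down to $g(r) = 0$ forces $g(t) = 0$. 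Hence $g(t) = g(t+1)$ is impossible whenever $g(t) > 0$, so $g$ strictly decreases as long as it is positive and equals $0$ from that point onward.

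Finally I would set $\tau = \min\{t : g(t) = 0\}$, so that $1 \le \tau \le r$ and $1 = g(0) > g(1) > \cdots > g(\tau-1) > g(\tau) = \cdots = g(r) = 0$. By its definition $\sigma_{p^\tau}(s) = p^{-t}$ when $v_p(s) = t < \tau$ and $\sigma_{p^\tau}(s) = 0$ when $v_p(s) \ge \tau$, so $\sigma_{p^\tau}$ takes the $\tau + 1$ distinct values $1 > p^{-1} > \cdots > p^{-(\tau-1)} > 0$. I define $f(p^{-t}) = g(t)$ for $0 \le t < \tau$ and $f(0) = 0$; since the values $g(\tau) < g(\tau-1) < \cdots < g(0)$ increase in step with the inputs $0 < p^{-(\tau-1)} < \cdots < 1$, this $f$ is strictly increasing on its finite domain, and I extend it to a strictly increasing $f : \R_+ \to \R_+$ by, say, piecewise-linear interpolation between these points together with any increasing continuation beyond $1$. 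Verifying $\sigma(s) = g(v_p(s)) = f(\sigma_{p^\tau}(s))$ separately in the cases $v_p(s) < \tau$ and $v_p(s) \ge \tau$ then completes the argument.
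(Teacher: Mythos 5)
Your proposal is correct and follows essentially the same route as the paper: unit-invariance of $\sigma$ (so that $\sigma$ depends only on the $p$-adic valuation), monotonicity forcing the sequence $\sigma(p^t)$ to be non-increasing and strictly decreasing until it hits zero, and a piecewise-linear interpolation to build $f$. The only cosmetic difference is that you propagate equalities forward to derive a contradiction where the paper propagates the strict inequality at $\tau-1$ backward via the contrapositive of monotonicity; these are the same argument.
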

\begin{proof}
  It is easy to show that $\sigma(s) = 1$ for all multiplicative units $s \in \Z_k^\times$, for the same reason as for $\sigma_p$ above.
  More generally, for any $t \ge 0$ and multiplicative unit $a$, we have $\sigma(p^t) \ge \sigma(p^t a) \ge \sigma(p^t)$.
  As every element of $\Z_k$ is either a unit or a multiple of $p$, the value of $\sigma(s)$ is determined by the smallest value of $t$ such that $s \in p^t \Z_k$.
  
	As $\sigma(1) \ge \sigma(p^k) = \sigma(0)$, it follows that $\sigma(1) \ge \sigma(p) \ge \sigma(p^2) \ge \cdots \ge \sigma(p^k) = 0$.
	Let $\tau > 0$ be the smallest integer for which $\sigma(p^\tau) = 0$: then $\sigma(p^{\tau-1}) > \sigma(p^\tau)$ by construction.
	Let $0 \le t < \tau$, and $\delta = \tau - t - 1$; then we have
	\begin{equation}
	  \sigma(p^\delta p^t) > \sigma(p^\delta p^{t+1}).
	\end{equation}
	By the contrapositive of the monotonicity property, it follows that $\sigma(p^t) > \sigma(p^{t+1})$.
	Let $f: \R_+ \to \R_+$ be a piece-wise linear function such that $f(0) = 0$ and $f(1/p^t) = \sigma(p^t)$ for integers $0 \le t < \tau$: then $f$ is strictly increasing, and $\sigma = f \circ \sigma_{p^\tau}$.
\end{proof}

Extending the Lemma above, for any significance function $\sigma: \Z_k \to \R_+$, there is a non-decreasing function $f$ such that $\sigma = f \circ \sigma_{p^r} = f \circ \sigma_k$: it suffices to take $f(x) = 0$ for sufficiently small $x$, and let $f$ otherwise be strictly increasing.
For any complexity class which depends on distinguishing whether $\sigma(s) \le a$ or $\sigma(s) \ge b$, for amplitudes $s \in \Z_k$ and constants $0 \le a < b \le 1$, we may without loss of generality reduce the analysis to the following significance function:
\begin{definition}
  The \emph{canonical significance function} $\sigma_k: \Z_k \to \R_+$ for a prime power $k = p^r$ is the function satisfying
	\begin{equation}
	\label{eqn:canonicalSignificanceFn}
  \sigma_k(s) = \begin{cases}
                  1/p^t,	&	\text{if $s \in p^{t\!\;} \Z_k \:\!\setminus\:\! p^{t\raisebox{0.25ex}{$\scriptscriptstyle +$}\!\!\;1\!\;}\!\!\: \Z_k$ for $0 \le t < r$}\;; \\
                  0,	&	\text{if $s = 0$}.
                \end{cases}
	\end{equation}
\end{definition}
\noindent
This function satisfies $0 \le \sigma_k(st) \le \sigma_k(s)\sigma_k(t)$ for all $s,t \in \Z_k$.

\label{sec:reduceBddErrorToExact}
We now show that bounded $\sigma_k$-error computation may be reduced to exact computation on \mbox{$\Z_\kappa$-modal} distributions, where $\kappa = p^\tau$ for some $0 < \tau \le r$, by the classification of significance functions above for $\Z_k$ where $k = p^r$.
This holds for each of the three models of computation we consider: by circuits composed of invertible transformations, affine transformations, or unitary transformations.

For error bounds $0 \le a < b \le 1$, let $\{ C_n \}_{n \ge 1}$ be a circuit family which efficiently decides a language $L$ with $\sigma_k$-error bounds $(a,b)$, for the canonical significance function $\sigma_k$\,.
Because $\sigma_k$ takes only the values $0$ and $1/p^t$ for ${0 \le t < r}$, there is an integer $0 < \tau \le r$ for which $\{ C_n \}_{n \ge 1}$ decides $L$ with $\sigma_k$-error bounds $(a_k,b_k)$, where
\begin{equation}
	a_k = \sigma_k(p^\tau)  \qquad\text{and}\qquad b_k = \sigma_k(p^{\tau-1}).
\end{equation}
Consider an input $x \in \{0,1\}^n$ for some $n \ge 1$, and let $C_n \ket{x} = \ket{\varphi'}\ket{1{-}L(x)} + \ket{\varphi''}\ket{L(x)}$.
\begin{subequations}
It follows that
\begin{align}
\label{eqn:significanceBounds-mod-k}
	\!\!\!\!
	\bracket{y}{\varphi'} &\in \,p^\tau \Z_k \qquad\!\!\!\text{for all $y \in \{0,1\}^\ast$};
	\\
	\!\!\!\!
	\bra{y} T \ket{\varphi''} &\notin \,p^\tau \Z_k \qquad\!\!\!\text{for some $y \in \{0,1\}^\ast$ and some valid transformation $T$}.
\end{align}
\end{subequations}
For any choice of state-space, if $\bracket{y}{\varphi'} \in \kappa \Z_k$ for all $y \in \{0,1\}^\ast$, it follows that $\ket{\varphi'} \in p^\tau \sD$: in other words, $\ket{\varphi'} \equiv \vec 0 \pmod{p^\tau}$.
Conversely, there is a transformation $T$ for which $T \ket{\varphi''} \not\equiv \vec 0 \pmod{p^\tau}$, so that $\ket{\varphi''}$ is \emph{not} equivalent to the zero vector.
Thus we have
\begin{equation}
  C_n \ket{x} \;\equiv\; \ket{\varphi''} \ket{L(x)}	\pmod{p^\tau}, \quad \text{for $\ket{\varphi''} \not\equiv \vec 0 \bmod{p^\tau}$}.
\end{equation}
Note that $\Z_{p^\tau} \cong \Z_k / p^\tau \Z_k$, so it only remains to show that the circuit family $\{ C_n \}_{n \ge 1}$ on $\Z_k$-modal distributions can be used to obtain a valid (\ie,~an invertible, affine, or unitary) circuit on $\Z_{p^\tau}$-distributions as well.
This easily follows by the fact that invertibility of transformations over $\Z_k$, or congruences $\| \varphi \|_1 \equiv 1 \pmod{k}$ or $\bracket \varphi \varphi \equiv 1 \pmod{k}$, imply the same properties evaluated modulo $p^\tau$.

Thus, bounded-error computation in $\Z_k$ is equivalent to exact computation modulo $\kappa = p^\tau$ for some $1 \le \tau \le r$.
In particular, if $k$ is prime, $\Z_k$-modal computation does not admit any notion of bounded-error computation except for exact computation.

\subsection{Modal complexity classes for cyclic rings}
\label{sec:galoisModalComplexity}


For cyclic rings $\Z_k$, we now define notions of complexity for exact $\Z_k$-modal computation, for circuits composed of invertible, affine, or unitary transformations (as described following Proposition~\ref{lemma:validTransformations}).
Motivated by the observations of the preceding sections, we consider polytime-uniform circuit families $\{ C_n \}_{n \ge 1}$, consisting of arbitrary gates on four or fewer bits and preparation of bits in the state $\ket{0}$, subject to the constraints of the corresponding circuit model:
\begin{definition}
	\label{def:modalCircuitComplexity}
  Let $k > 1$ be an integer and $L \subset \{0,1\}^\ast$ be a language.
  \begin{romanum}
  \item
		$L \in \GLP[\Z_k]$ if and only if there is an invertible circuit family $\{ C_n \}_{n \ge 1}$ which efficiently decides $L$ exactly: specifically, a polytime-uniform family of circuits $C_n$ which consist of $m \in O(\poly n)$ preparation operations and $O(\poly n)$ invertible $\Z_k$-modal gates, such that $C_n \ket{x}\ket{0^m} = \ket{\psi'} \ket{L(x)}$ for all $x \in \{0,1\}^n$.
  \item
		$L \in \AffineP[\Z_k]$ if and only if there is an affine circuit family $\{ C_n \}_{n \ge 1}$ which efficiently decides $L$ exactly: specifically, a polytime uniform family of circuits $C_n$ which consist of $m \in O(\poly n)$ preparation operations and $O(\poly n)$ affine $\Z_k$-modal gates, such that $C_n \ket{x}\ket{0^m} = \ket{\psi'} \ket{L(x)}$ for all $x \in \{0,1\}^n$.
  \item
		$L \in \UnitaryP[\Z_k]$ if and only if there is a unitary circuit family $\{ C_n \}_{n \ge 1}$ which efficiently decides $L$ exactly: specifically, a polytime uniform family of circuits $C_n$ which consist of $m \in O(\poly n)$ preparation operations and $O(\poly n)$ unitary $\Z_k$-modal gates, such that $C_n \ket{x}\ket{0^m} = \ket{\psi'} \ket{L(x)}$ for all $x \in \{0,1\}^n$.
  \end{romanum}
\end{definition}
\noindent
These classes capture \emph{exact polynomial time computation} by \parit{i}~invertible, \parit{ii}~affine, and \parit{iii}~unitary operations over $\Z_k$, respectively.
(These definitions may be readily generalized to any ring $R$, by replacing $\Z_k$ in each instance by $R$, albeit limiting to polynomial-time specifiable gates in case $R$ is infinite.)

With these classes, we may describe the main problems posed by this article as follows.
The main result of Ref.~\cite{WS11} is to show $\text{UNIQUE-SAT} \in \GLP[\Z_{2}]$, suggesting that general ``modal quantum'' computation in the style of Refs.~\cite{SW10,WS11,SW-2012} is a powerful model of computation.
Can we characterize the power of such models in terms of traditional complexity classes?
Furthermore, as every unitary circuit is invertible, we have $\UnitaryP[\Z_k] \subset \GLP[\!\Z_k]$.
Refs.~\cite{WS11,HOSW11} conjecture in effect that this containment is strict, at least for primes $k \equiv 3 \pmod{4}$: does this conjecture hold?

Using standard techniques in counting complexity, it is not difficult to show that
\begin{equation}
\label{eqn:affineCharn}
  \AffineP[\Z_k] = \Mod[k]\P
\end{equation}
for $k$ a prime power.
This equality summarizes certain robust intuitions regarding oracle simulation techniques for \Mod[k]\P\ by describing them in terms of affine transformations of distributions.
(We sketch these techniques to justify Eqn.~\eqref{eqn:affineCharn} in Section~\ref{sec:affineCharn}.)
This motivates the question of whether there is a similar relationship for invertible and unitary transformations.
The main technical results of this article (Lemmas~\ref{lemma:modularContainsModal} and~\ref{lemma:modalContainsModular}) are to prove that in fact, for $k$ a prime power,
\begin{equation}
	\label{eqn:classEqualities}
	\GLP[\Z_k] \subset \Mod[k]\P \subset \UnitaryP[\Z_k],
\end{equation}
so that these classes are in fact equal.
Furthermore, $\GLP[\Z_k] = \UnitaryP[\Z_k] = \AffineP[\Z_k]$ for arbitrary integers $k > 1$ (including $k$ divisible by multiple primes), and these are equal to $\Mod[k]\P$ if and only if $\Mod[k]\P$ is closed under oracle reductions.

\subsection{Extending to Galois rings in general}
\label{sec:galoisRingBasics}

We now show how to reduce the analysis of modal computation on Galois-ring-valued distributions, to the case of cyclic rings $\Z_k$ for prime powers $k$.
We do so in to address ``modal quantum'' computation in general, which considers distributions whose amplitudes may range over a finite field, but also simply for the sake of generality.
Readers who are only interested in computation with $\Z_k$-valued distributions may safely proceed to Section~\ref{sec:characterization} (page~\pageref{sec:characterization}).

\subsubsection{Elementary remarks on finite fields and Galois rings}

A \emph{finite field} (or \emph{Galois field}) is a finite ring in which every non-zero element has an inverse.
The simplest examples are $\Z_p$, for $p$ prime.
For any prime power $k = p^r$,  there is a finite field with size $k$, which has structure of the vector space $\Z_p^r$ together with a multiplication operation which extends the scalar multiplication over this vector space.
This field is unique up to isomorphism, and is denoted $\F_k$.
The non-zero elements of $\F_k$ form a cyclic group under multiplication, generated by a single element.
For the general theory of finite fields and their extensions, see \eg~Refs.~\cite{Hungerford-1974,Wan-2003}.

Finite fields are an example of \emph{Galois rings} (see Wan~\cite{Wan-2003} for an elementary treatment).
A Galois ring $R$ is a finite commutative ring, which is both local and a principal ideal ring --- that is, the set of non-units $Z = {\{ z \:\!{\in}\:\! R \;|\;  1_R \notin z R \}}$ form an ideal, and furthermore is of the form $Z = p R = \{ p r \,|\, r {\in} R \}$ for some $p \in \N$. 
By standard number theoretic arguments, one may show that $p$ must be prime.
Elementary techniques (see \eg~Lemmas~14.2 and~14.4 of Ref.~\cite{Wan-2003} respectively) then suffice to show that $k = \Char(R)$ is a power of $p$, and that $|R| = k^e$ for some $e \ge 1$.
As with finite fields, there is a unique Galois ring $\mathrm{GR}(k,k^e)$ of character $k$ and cardinality $k^e$,
up to isomorphism.
Examples include the cyclic rings $\Z_k = \mathrm{GR}(k,k)$ for $k = p^r$ a prime power, and finite fields $\F_k = \mathrm{GR}(p,p^r)$.

\subsubsection{Relationship to prime-power order cyclic rings}

The following standard results (see Ref.~\cite{Wan-2003}) allow us to reduce the analysis of modal computation for any Galois ring $R = \mathrm{GR}(k,k^e)$ in terms of the cyclic ring $\Z_k$.
This also allows us to represent a linear transformation of an $R$-valued distribution to linear transformations of $\Z_k$-valued distributions.
\begin{itemize}
\item
	The ring $C = \{c \cdot 1_R \mid c \in \Z \}$ is isomorphic to the cyclic ring $\Z_k$.
	By convention we identify $C$ with $\Z_k$.
	We may construct $R$ as an extension $R = \Z_k[\tau]$, where $\tau$ is the root of some monic irreducible polynomial $f$ over $\Z_k$ of degree $e$,
	\begin{equation}
	  f(x) = x^e - f_{e-1} x^{e-1} - \cdots - f_1 x^1 - f_0,
	\end{equation}
	for coefficients $f_j \in \Z_k$.
	(If $k$ is prime, any such $f$ suffices.
	If $k = p^r$ for $r > 1$, the construction is similar but subject to additional constraints on $f$; in particular we require that $f_0$ be a unit modulo $k$.)
	As $f_0 = \tau(\tau^{e-1} - f_{\!\!\:e\:\!\text{--}\:\!1} \tau^{e-2} - \cdots - \tau_1)$ is a unit in $\Z_k$, it follows in particular that $\tau$ is a unit in $R$.
\item
	Any element $a \in R$ may be presented in the form $a_0 + a_1 \tau + \cdots + a_{e-1} \tau^{e-1}$ for coefficients $a_j \in \Z_k$, as higher powers of $\tau$ may be reduced via the relation
	\begin{equation}
		\label{eqn:reduceTau}
			\tau^e = f_{e-1} \tau^{e-1} + \cdots + f_1 \tau + f_0,
	\end{equation}
	which holds in $R$ by virtue of $f(\tau) = 0_R$.	
	Addition of elements of $R$ may be performed term-wise modulo $k$.
	Multiplication is performed by taking products of the polynomials over $\tau$, and simplifying them according to Eqn.~\eqref{eqn:reduceTau}.
\item
	Any element $a \in R$ can be represented by a vector in $\Z_k^e$.
	The standard basis vectors $\vec e_0, \vec e_1, \ldots, \vec e_{e-1}$ represent the monomials $1_{\!\!\:R\:\!}$, $\tau^1$, \ldots, $\tau^{e-1}$.
	Addition of elements of $R$ is then represented by vector addition, and multiplication by any constant $r \in R$ may be represented as a linear transformation $T_r: \Z_k^e \to \Z_k^e$ of these vectors, with columns representing how $\tau^\ell r$ decomposes into a linear combination of $\tau^j$ for $0 \le j \le e-1$.
\item
	Any linear transformation $M$ acting on vectors $\vec v \in R^d$ can be represented by a transformation of $\Z_k^{ed}$, where the coefficients $M_{i,j}$ are represented by $e \x e$ blocks.
	One decomposes $M$ as a linear combination of matrices $M^{(j)} \tau^j$, where $M\sur{j}$ is a $d \x d$ matrix over $\Z_k$.
	Multiplication of the coefficients of $\vec v$ by $\tau^j$ can be represented by a $\Z_k$-linear transformation of each coefficient $v_i \in R$ independently; one then transforms the resulting vector by $M\sur{j}$.
	Summing over all $j$, we may represent $M$ as an $ed \x ed$ matrix, acting on $e$-dimensional blocks.
\end{itemize}
These remarks show that $R$-distributions, and transformations of them, may be reduced to linear algebra over the cyclic ring $\Z_k$.

\subsubsection{Reduction of modal computation over Galois rings to cyclic rings}

In some cases, the above simulation techniques immediately suffice to reduce the complexity of $R$-modal circuits to the case of cyclic rings.
Because $\Z_k \subset R$, we trivially have the containments
\begin{align}
		\GLP[\Z_k] &\subset \GLP[\!\!\;R]\:\!,
	&
		\AffineP[\Z_k] &\subset \AffineP[\!\!\;R]\:\!,
	&
		\UnitaryP[\Z_k] &\subset \UnitaryP[\!\!\;R].
\end{align}
Furthermore, for invertible and affine transformations, the reverse containments also hold.
(Any matrix over $\Z_k$ which simulates an invertible matrix over $R$, is itself invertible; and if $\tilde M$ is a matrix over $\Z_k$ which simulates an affine matrix $M$ over $R$, then all columns of $\tilde M$ sum to $1$, as a simple corollary to the same property of $M$.)
Thus $\GLP[\Z_k] = \GLP[R]$ and $\AffineP[\Z_k] = \AffineP[R]$\,: using coefficients over the Galois ring $R$ provides at most a polynomial savings in work.

For unitary circuits, the situation is more complicated.
While every linear transformation $M$ over $R = \mathrm{GR}(k,k^e)$ can be easily simulated by a linear transformation $\tilde M$ over $\Z_k$, this does not mean that $\tilde M$ is a unitary transformation whenever $M$ is.
We consider a concrete example.
We may construct the finite field $\F_{25}$ as a ring extension $\Z_5[\sqrt 3]$, consisting of elements $r + s\sqrt 3$ for $r,s \in \Z_5$, where $\sqrt{3}{\:\!}^{\:\!2} = 3 \in \Z_5$. 
This ring admits a conjugation operation of the form $\overline{(r + s\sqrt 3)} = r - s\sqrt{3}$ for $r,s \in \Z_5$.
Then the matrix
\begin{equation}
  U = \begin{bmatrix}
        2 + \sqrt 3 & 0 \\[1ex] 0 & 2 - \sqrt 3
      \end{bmatrix}
\end{equation}
is unitary over $\F_{25}$\,, as $(2 + \sqrt 3)(2 - \sqrt 3) = 4 - 3 = 1$.
The simulation technique of the preceding Section would have us represent bit-vectors over $\F_{25}$ by vectors over $\Z_5$ as follows for $r_j,s_j \in \Z_5$:
\begin{equation}
  \Bigl(r_0+s_0\sqrt 3\Bigr) \ket{0} + \Bigl(r_1 + s_1\sqrt 3\Bigr) \ket{1} \;\mapstoname\quad\; r_0 \ket{00} + s_0 \ket{01} + r_1 \ket{10} + s_1 \ket{11}.
\end{equation}
In this representation, the matrix over $\Z_5$ which simulates the effect of $U$ on the standard basis is
\begin{equation}
  \tilde U = \begin{smatrix}
        2 & \!\phantom{-}3 & \!\phantom{-}0 & \!\phantom{-}0
       \\
				1 & \!\phantom{-}2 & \!\phantom{-}0 & \!\phantom{-}0
			\\
				0 & \!\phantom{-}0 & \!\phantom{-}2 & -3
			\\
				0 & \!\phantom{-}0 & -1 & \!\phantom{-}2
      \end{smatrix}.
\end{equation}
However, $\tilde U$ is not unitary over $\Z_5$.
Only the identity function satisfies the conditions $\overline 0 = 0$, $\overline 1 = 1$, and $\overline{(a+b)} = \overline a + \overline b$ on $\Z_5$; then we have $\tilde U\herm = \tilde U\:\!\trans$, so that
\begin{equation}
  \tilde U\herm \tilde U
  = \begin{smatrix}
        2 & \!\phantom{-}1 & \!\phantom{-}0 & \!\phantom{-}0
       \\
				3 & \!\phantom{-}2 & \!\phantom{-}0 & \!\phantom{-}0
			\\
				0 & \!\phantom{-}0 & \!\phantom{-}2 & -1
			\\
				0 & \!\phantom{-}0 & -3 & \!\phantom{-}2
      \end{smatrix}
		\begin{smatrix}
        2 & \!\phantom{-}3 & \!\phantom{-}0 & \!\phantom{-}0
       \\
				1 & \!\phantom{-}2 & \!\phantom{-}0 & \!\phantom{-}0
			\\
				0 & \!\phantom{-}0 & \!\phantom{-}2 & -3
			\\
				0 & \!\phantom{-}0 & -1 & \!\phantom{-}2
      \end{smatrix}
  = \begin{smatrix}
        0 & \!\phantom{-}2 & \!\phantom{-}0 & \!\phantom{-}0
       \\
				2 & \!\phantom{-}3 & \!\phantom{-}0 & \!\phantom{-}0
			\\
				0 & \!\phantom{-}0 & \!\phantom{-}0 & \!\phantom{-}2
			\\
				0 & \!\phantom{-}0 & \!\phantom{-}2 & \!\phantom{-}3
      \end{smatrix}.
\end{equation}
While the unitary matrices over $\F_{25}$ are represented by some group of linear transformations over $\Z_5$, that group is not a subset of the unitary transformations over $\Z_5$.

Despite this barrier to simulation, our main results $\GLP[\Z_k] \subset \Mod[k]\P \subset \UnitaryP[\Z_k]$ for prime powers $k$ (Lemmas~\ref{lemma:modularContainsModal} and~\ref{lemma:modalContainsModular}) imply that
\begin{equation}
	\UnitaryP[R] \subset \GLP[R] = \GLP[\Z_k] \subset \Mod[k]\P \subset \UnitaryP[\Z_k],
\end{equation}
so that $\UnitaryP[\Z_k] = \UnitaryP[R]$ nevertheless.
Computing with amplitudes in $R$ instead of $\Z_k$ then provides at most a polynomial size advantage, for unitary circuits as well.

Thus, for the computational models of this article, the above remarks serve to reduce the complexity of $R$-modal computation to $\Z_k$-modal computation.
This allows us to simplify our analysis by singling out the case of distributions over cyclic rings.

\section{The computational power of $\Z_k$-modal distributions}
\label{sec:characterization}

The preceding Section defined models of computation on $\Z_k$-valued distributions, along the lines of circuit complexity.
In this Section, we characterize the power of ``efficient'' computation (in the sense of Section~\ref{sec:modalCircuitComplexity}) in those models for all integers $k \ge 2$.
In particular, our main result is that the computational power of efficient algorithms in those models is exactly \Mod[k]\P\ when $k$ is a fixed prime power.

In Section~\ref{sec:affineCharn}, we demonstrate that $\AffineP[\Z_k] = \Mod[k]\P$ for prime powers $k$.
As we remarked in Section~\ref{sec:galoisModalComplexity}, this result is already implicit in counting complexity, and is part of the motivation of our investigation of the case for $\GLP[\Z_k]$ and $\UnitaryP[\Z_k]$.
We then show how to supplement these techniques with standard results of number theory and reversible computation, to prove the containments ${\GLP[\Z_k] \subset \Mod[k]\P \subset \UnitaryP[\Z_k]}$ in Sections~\ref{sec:simulatingModalByModkP} and~\ref{sec:simulatingModkPbyModal} respectively.

\subsection{Intuitions from the case of affine circuits}
\label{sec:affineCharn}

Nondeterministic Turing machines can in a sense simulate linear transformations of exponentially large size,
using standard techniques of counting complexity.
Each possible configuration (state + head position + tape contents) of the nondeterministic Turing machine represents a standard basis vector $\ket{x}$ representing a particular binary string $x \in \{0,1\}^\ast$.
From some initial configuration, the machine branches into a distribution over these configurations, weighted according to the number of branches ending at each configuration.
Each transition, deterministic or non-deterministic, governs how the distribution of configurations transforms with time.
(We describe how this is done in greater detail, as a part of the detailed proof of Lemma~\ref{lemma:modularContainsModal}.)


The principal differences between nondeterministic Turing machines and $\Z_k$-modal circuits is in the fact that \textbf{(a)}~nondeterministic Turing machines (in effect) simulate transformations of $\N$-distributions rather than $\Z_k$-distributions, and \textbf{(b)}~the two models have different conditions for distinguishing between \YES/\NO\ instances.
For prime powers $k$, these distinctions may effectively be removed when we compare $\Mod[k]\P$ algorithms to affine $\Z_k$-modal circuits.
For $k \ge 2$, $\Mod[k]\P$ algorithms only distinguish between a number of accepting branches which is either a multiple of $k$ (for \NO\ instances) or not a multiple of $k$ (for \YES\ instances).
For these algorithms, the distributions over Turing machine configurations are then in effect $\Z_k$-valued rather than $\N$-valued.
By Ref.~\cite[Theorems~23 and~30]{BGH90}, we may assume that a $\Mod[k]\P$ algorithm accepts with one branch mod $k$ for \YES\ instances, and with zero branches mod $k$ for \NO\ instances:\footnote{%
	For $k$ a prime, applying Fermat's Little theorem, this can be easily done with a Turing machine which simulates a \Mod[k]\P\ algorithm $k-1$ times in parallel and which accepts only if each simulation accepts.
	For $k = p^r$ a prime power, one performs more elaborate simulations to simulate testing whether the first $r$ digits of the $p$-adic expansion of the number of accepting paths is non-zero.
}
this implies in particular that $\Mod[k]\P$ is closed under negation.
These standard results motivate the claim of Eqn.~\eqref{eqn:affineCharn} on page~\pageref{sec:galoisModalComplexity}:
\begin{lemma}
	\label{lemma:affine}
  For any prime power $k \ge 2$, $\AffineP[\Z_k] = \Mod[k]\P$.
\end{lemma}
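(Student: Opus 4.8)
The plan is to prove the two inclusions separately, noting that $\AffineP[\Z_k]\subseteq\Mod[k]\P$ holds for every $k$, while the reverse inclusion uses that $k$ is a prime power. For the forward inclusion I would exploit that an affine circuit preserves the coefficient-sum of a distribution. Given an exact affine family $\{C_n\}$ deciding $L$, write each $C_n$ as a product $G_T\cdots G_1$ of its gate operators, and observe that the amplitude $\bra{z}C_n\ket{x}$ is a sum, over the exponentially many computation paths $x=z_0\to z_1\to\cdots\to z_T=z$, of the products $\prod_t\bra{z_t}G_t\ket{z_{t-1}}$ of gate entries in $\Z_k$. Grouping output-$1$ paths by the residue $c\in\Z_k$ of their product and letting $n_c(x)$ count those with product $\equiv c$, each $n_c\in\#\P$, so $F(x):=\sum_c c\,n_c(x)\in\#\P$ satisfies $F(x)\equiv\sum_z\bra{z\,1}C_n\ket{x}\pmod{k}$, the total amplitude reaching configurations whose last bit is $1$. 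Exactness and affineness then pin this total down: for $x\notin L$ every output-$1$ configuration is impossible, so $F(x)\equiv 0$; for $x\in L$ the output bit is necessarily $1$, so all output-$0$ amplitude vanishes and, the coefficient-sum being preserved at $1$, the output-$1$ total equals $1\not\equiv 0$. Hence $x\in L\iff F(x)\not\equiv 0\pmod{k}$, witnessing $L\in\Mod[k]\P$.

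For $\Mod[k]\P\subseteq\AffineP[\Z_k]$ I would invoke the normal form available for prime powers: by \cite{BGH90} (Theorems~23 and~30, as used above), any $L\in\Mod[k]\P$ admits a nondeterministic machine whose number of accepting branches is $\equiv L(x)\pmod{k}$, i.e.\ congruent to $1$ for \YES\ instances and to $0$ for \NO\ instances. I would simulate this machine by a reversible circuit $R$ with $R\ket{x}\ket{b}\ket{0^w}\ket{0}=\ket{x}\ket{b}\ket{w(x,b)}\ket{A(x,b)}$ for each branch string $b$; as a permutation, $R$ is affine. The task is then to deposit $\#\{b:A(x,b)\}\bmod k$ as the amplitude of the output bit, using only affine gates and preparations of $\ket{0}$. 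Here I would spread the branch register into a superposition over all $b$ with affine gates (whose columns sum to $1_S$ over $\Z_k$), run $R$, and contract the branch and work registers with the erasure gate $\textrm{ERASE}=[\,1\ \ 1\,]$, whose tensor powers sum amplitudes and so realise the count modulo $k$.

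The delicate point, and the step I expect to be the main obstacle, is \emph{exactness}: the output bit must be \emph{necessarily} $L(x)$, meaning every configuration with the wrong output value carries amplitude exactly $0_S$, not merely a cancelling sum. The mod-$k$ normal form is what makes this succeed: for \NO\ instances the accept-count is $\equiv 0$, so the relevant output-$1$ amplitudes are literally $0$ in $\Z_k$; for \YES\ instances the count is $\equiv 1$, and coefficient-sum preservation together with an uncomputation of the branch and work registers forces all surviving amplitude onto the output-$1$ subspace. This is also precisely where the prime-power hypothesis enters, since for general $k$ the normal form (equivalently, closure of $\Mod[k]\P$ under the required reductions) may fail. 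A secondary subtlety is that over even prime powers the balanced gate $\textrm{\small UNIF}$ is unavailable, so the superposition must instead be assembled from the weighted spreading gates that the $\ell_1$ constraint does permit; I would argue, following standard counting-complexity constructions (and the affine analogue of Lemma~\ref{lemma:modularContainsModal}), that the weighted branch-sums obtained this way still range over all of $\Mod[k]\P$.
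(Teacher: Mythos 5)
Your proposal is correct and follows essentially the same route as the paper's (sketched) proof: the forward inclusion via a weighted path-count of the affine circuit in $\#\P$, with exactness plus preservation of the coefficient-sum pinning the output-$1$ total to $0$ or $1$ modulo $k$, and the reverse inclusion via the \cite{BGH90} normal form, a reversible simulation of the acceptance predicate, an affinely-prepared branch superposition, and $\textrm{\small ERASE}$ to collapse the accept-count onto the output bit. The one step you leave as ``I would argue'' --- spreading the branch register affinely when $2$ is not a unit --- is exactly where the paper supplies its explicit gadget, preparing each branch bit $b_i$ with a flag $s_i$ in the state $\ket{01}+\ket{11}+(k-1)\ket{00}$ (coefficient sum $\equiv 1 \pmod k$) and conditioning the rest of the computation on all $s_i=1$; note also that $\textrm{\small ERASE}$ alone suffices for exactness there, so the uncomputation you mention is not needed.
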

\begin{proof}[Proof (sketch).]
For any polytime-uniform $\Z_k$-affine circuit family $\{ C_n \}_{n \ge 1}$ and input $x \in \{0,1\}^\ast$, consider a nondeterministic Turing machine $\mathbf T$ which simulates each gate of $C_n$ in sequence, branching non-deterministically according to the gate amplitudes and re-writing the tape contents simulating the bits of the circuit as required.
(This technique is standard; we describe it in more detail in the proof of Lemma~\ref{lemma:modularContainsModal} for completeness.)
Suppose that $\{C_n\}_{n\ge 1}$ efficiently decides some language $L \in \AffineP[\Z_k]$ exactly, and that $\mathbf T$ accepts in each branch only when the output bit is $1$.
Then $\mathbf T$ accepts on a non-zero number of branches modulo $k$ if and only if $x \in L$, so that $L \in \Mod[k]\P$.

Conversely, for any language $L \in \Mod[k]\P$ and input $x \in \{0,1\}^\ast$, consider a non-deterministic Turing machine $\mathbf T$, which simulates non-deterministic machines $\mathbf T_0$ and $\mathbf T_1$ in parallel to test whether $x \in \bar L$ or $x \in L$ (respectively) by $\Mod[k]\P$ algorithms.
We allocate a particular tape cell $A$ on which $\mathbf T$ writes a $0$ in those branches where $\mathbf T_0$ accepts and/or $\mathbf T_1$ rejects, and $1$ in those branches where the reverse occurs.
Either $\mathbf T_0$ or $\mathbf T_1$ (but not both) accept in one branch modulo $k$, indicating whether $x \notin L$ or $x \in L$.
Then $\mathbf T$ accepts in $1$ branch modulo $k$ for all $x \in \{0,1\}^\ast$, writing either $0$ or $1$ onto $A$ in exactly one branch modulo $k$ according to whether $x$ is a \NO\ or a \YES\ instance.

We may represent the computational branches of the machine $\mathbf T$ above by a boolean string $b \in \{0,1\}^{B}$, for $B \in O(\poly n)$.
To simulate the branching, we prepare each bit $b_i$ of the branching string (together with some auxiliary bit $s_i$) in a distribution $\ket{\rho}_{b_i s_i} = \ket{01} + \ket{11} + (k-1) \ket{00}$, which is a $\Z_k$-affine operation on two bits.
Determining whether $\mathbf T$ accepts in any particular branch is a problem in $\P$, and so may be decided by a polytime-uniform boolean circuit family using $\textrm{\small AND}$, $\textrm{\small OR}$, $\textrm{\small NOT}$, $\textrm{\small FANOUT}$, and $\textrm{\small SWAP}$.
This may then be simulated by a polytime-uniform $\Z_k$-affine circuit, conditioned on $s_1 s_2 \cdots s_B = 11\cdots1$.
By construction, the bit representing the cell $A$ contains $1$ in one branch mod $k$, if and only if $x \in L$.
To obtain an exact $\Z_k$-affine algorithm, in which branches containing the incorrect answer have no effect, we may apply the $\textrm{\small ERASE} = [1\;\;\;1]$ operator on all bits aside from the bit $A$.
The output will then be simply $\ket{0}$ if $x \notin L$ and $\ket{1}$ if $x \in L$.  
\end{proof}
This result formalises well-known intuitions for the counting classes $\Mod[k]\P$, for $k$ a prime power.
Our contribution is to obtain similar results for invertible and unitary circuits as well, by solving the following problems:
\begin{itemize}
\item 
	The decision criterion of $\Mod[k]\P$ only counts the number of accepting branches of a nondeterministic Turing machine.
	How can it make the finer distinction, for a final computational state $C_n \ket{x} = \ket{\psi_0}\ket{0} + \ket{\psi_1}\ket{1}$ of an invertible modal circuit family $\{ C_n \}_{n \ge 1}$, whether $\ket{\psi_0}$ or $\ket{\psi_1}$ are zero?
\item
	For unitary circuit families $\{ C_n \}_{n \ge 1}$ in particular, how can it simulate the branching and the counting of accepting branches of a nondeterministic Turing machine without access to non-invertible operations such as $\text{\small ERASE} = [ 1 \;\;\; 1 ]$?
\end{itemize}

\subsection{Simulation of invertible $\Z_k$-circuits by $\mathsf{Mod_{\mathit k}P}$ algorithms}	
\label{sec:simulatingModalByModkP}

To simulate exact $\Z_k$-modal algorithms, a nondeterministic Turing machine must somehow detect whether there are non-zero amplitudes for the output $\ket{1}$ (corresponding to an answer of \YES), even if the sum of these amplitudes is a multiple of $k$.
For invertible circuits, it suffices to apply the technique of ``uncomputation'' from reversible computation, which in the exact setting produces a standard basis state as output.

\begin{lemma}
	\label{lemma:modularContainsModal}
  For any $k \ge 2$, $\GLP[\Z_k] \subset \Mod[k]\P$.
\end{lemma}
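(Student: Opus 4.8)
The plan is to show that any $L \in \GLP[\Z_k]$, decided by an invertible circuit family $\{C_n\}_{n \ge 1}$, is witnessed by a $\#\P$ function $f$ with $f(x) \equiv L(x) \pmod k$; then $f(x) \not\equiv 0 \pmod k$ holds exactly when $x \in L$, giving $L \in \Mod[k]\P$. The engine is the standard counting-complexity simulation of a $\Z_k$-linear map by branch-counting: a nondeterministic machine keeps a current configuration $y \in \{0,1\}^N$, and for each gate $G$ acting on bits currently reading $z$ it branches into $G_{w,z}$ copies (taking the representative of $G_{w,z}$ in $\{0,\dots,k-1\}$) that rewrite those bits to $w$. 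Since integer branch-counts compose by matrix multiplication, the number of branches ending in a fixed configuration $y$ is congruent mod $k$ to the amplitude $\bra{y} C_n \ket{x}\ket{0^m}$.

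The obstacle is that reading the answer naively fails: counting the branches whose output bit is $1$ computes $\sum_{y_{\mathrm{out}}=1}\psi_y \bmod k$, which for a \YES-instance equals $\|\varphi\|_1 \bmod k$ and may vanish modulo $k$ through destructive interference even though $\ket{\varphi} \ne \vec 0$. To eliminate this, I would first replace $C_n$ by an \emph{uncomputed} circuit whose output is a single definite standard-basis amplitude. Adjoin a fresh answer bit $a$ in $\ket{0}$ and form
\[
  C_n' \;=\; \bigl(C_n^{-1}\ox\idop_a\bigr)\,\textrm{\small CNOT}_{(n+m),\,a}\,\bigl(C_n\ox\idop_a\bigr),
\]
where the $\textrm{\small CNOT}$ copies the output bit of $C_n$ onto $a$. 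By exactness $C_n\ket{x}\ket{0^m} = \ket{\varphi}\ket{L(x)}$ has output bit \emph{necessarily} $L(x)$, so the copy step yields the product state $\ket{\varphi}\ket{L(x)}\ket{L(x)}_a$; applying $C_n^{-1}$ restores the first register to $\ket{x}\ket{0^m}$, so that $C_n'\ket{x}\ket{0^m}\ket{0} = \ket{x}\ket{0^m}\ket{L(x)}$ exactly. The amplitude of the single target configuration $(x,0^m,1)$ is therefore $L(x) \in \{0,1\}$, with no cancellation possible.

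Finally, let $\mathbf N$ simulate $C_n'$ by the branch-counting above and accept precisely on branches ending in configuration $(x,0^m,1)$ (the machine knows $x$, $m$, and the circuit from the uniformity and gate-specifiability hypotheses). Then $f(x) = \#\{\text{accepting branches of }\mathbf N\} \equiv L(x) \pmod k$, and $\mathbf N$ runs in polynomial time with branching bounded by a constant per gate, so $f \in \#\P$. The only step using invertibility is the availability of $C_n^{-1}$: each gate is an invertible matrix over $\Z_k$ on at most four bits, so its inverse is computable in constant time, and the reversed sequence of gate inverses gives a polytime-uniform, polytime-specifiable circuit for $C_n^{-1}$. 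I expect the uncomputation step to be the conceptual crux—it is exactly what fails for non-invertible (or merely affine) gates, and it is why the unitary case must be treated separately in Lemma~\ref{lemma:modalContainsModular}.
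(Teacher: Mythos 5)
Your proposal is correct and follows essentially the same route as the paper: the same uncompute construction $C'_n = (C_n^{-1}\ox\idop)\cdot\textrm{\small CNOT}\cdot(C_n\ox\idop)$ producing the standard basis state $\ket{x}\ket{0^m}\ket{L(x)}$, followed by the same nondeterministic branch-counting simulation of the gates mod $k$ with acceptance on the unique target configuration. Your framing of why naive readout fails (cancellation of $\|\varphi\|_1$ mod $k$) and why invertibility is the enabling hypothesis matches the paper's motivation exactly.
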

\begin{proof}
  Let $L \in \GLP[\Z_k]$ be decided by an invertible polytime-uniform $\Z_k$-circuit family $\{C_n\}_{n \ge 1}$  such that $C_n \ket{x} = \ket{\psi_x}\ket{L(x)}$ for each $x \in \{0,1\}^n$.
	Suppose that $C_n$ requires $m$ preparation operations: we may suppose that these are all performed at the beginning of the algorithm, in parallel.
	Using $\textrm{\small SWAP}$ operations, we may suppose that they are initially used to prepare a contiguous block of bits in the state $\ket{0}$.
	Abusing notation slightly, we may then describe each $C_n$ as an invertible operation $C_n: \sB\sox{n+m} \to \sB\sox{n+m}$ such that $C_n \ket{x}\ket{0^m} = \ket{\psi_x}\ket{L(x)}$.
  
	Let $T \in O(\poly(n))$ be the number of gates contained in $C_n$.
	Then we may construct a circuit $C'_n$ as illustrated in Figure~\ref{fig:modalUncompute}, consisting of $2T+1$ gates such that $C'_n \ket{x}\ket{0^{m+1}} = \ket{x}\ket{0^{m}}\ket{L(x)}$, as follows:
	\begin{figure}
	\begin{center}
	  \begin{tikzpicture}
	    \coordinate (x1-0) at (0,0);
	    \xdef\prev{x1}
	    \foreach \j in {x2,x3,x4} {%
				\coordinate (\j-0) at ($(\prev-0) + (0,-0.2)$);
				\xdef\prev{\j}
			}
			
	    \coordinate (w1-0) at ($(\prev-0) + (0,-0.6)$);
	    \xdef\prev{w1}
	    \foreach \j in {w2,w3,w4,w5} {%
				\coordinate (\j-0) at ($(\prev-0) + (0,-0.2)$);
				\xdef\prev{\j}
			}
			
			\coordinate (a-0) at ($(\prev-0) + (0,-0.6)$);
			
			\xdef\prev{0}
			\foreach \t in {1,2,3,4} {%
				\foreach \j in {x1,x2,x3,x4,w1,w2,w3,w4,w5,a} {%
					\coordinate (\j-\t) at ($(\j-\prev) + (1,0)$);
					\draw (\j-\prev) -- (\j-\t);
				}
				\xdef\prev{\t}
			}
			
			\node (x1-1) at (x1-1) {};
			\node (w5-1) at (w5-1) {};
			\node (C) at ($(x1-1)!0.5!(w5-1)$) {\large$C_n^{\phantom{-1}}$};
			\node (C) [draw,fill=white,inner sep=3pt,fit=(x1-1)(w5-1)(C)] {};
			\node at (C) {\large$C_n$};
			
			\filldraw [black] (w5-2) circle (2pt);
			\draw [black] (a-2) circle (4pt);
			\draw (w5-2) -- ($(a-2) + (0,-4pt)$);
			
			\node (x1-3) at (x1-3) {};
			\node (w5-3) at (w5-3) {};
			\node (C) at ($(x1-3)!0.5!(w5-3)$) {\large$C_n^{-1}$};
			\node (C) [draw,fill=white,inner sep=3pt,fit=(x1-3)(w5-3)(C)] {};
			\node at (C) {\large$C_n^{-1}$};

			\coordinate (x1'-0) at (-4,0);
	    \xdef\prev{x1'}
	    \foreach \j in {x2',x3',x4'} {%
				\coordinate (\j-0) at ($(\prev-0) + (0,-0.2)$);
				\xdef\prev{\j}
			}
			\node [label=west:$n \;\biggl\{\!\!$] at ($(x1'-0)!0.5!(x4'-0)$)  {};
			
	    \coordinate (w1'-0) at ($(\prev-0) + (0,-0.6)$);
	    \xdef\prev{w1'}
	    \foreach \j in {w2',w3',w4',w5'} {%
				\coordinate (\j-0) at ($(\prev-0) + (0,-0.2)$);
				\xdef\prev{\j}
			}
			\node [label=west:$m \;\Biggl\{\!\!$] at ($(w1'-0)!0.5!(w5'-0)$)  {};
			
			\coordinate (a'-0) at ($(\prev-0) + (0,-0.6)$);
			
			\xdef\prev{0}
			\foreach \t in {1,2} {%
				\foreach \j in {x1',x2',x3',x4',w1',w2',w3',w4',w5',a'} {%
					\coordinate (\j-\t) at ($(\j-\prev) + (1,0)$);
					\draw (\j-\prev) -- (\j-\t);
				}
				\xdef\prev{\t}
			}

			\node (x1'-1) at (x1'-1) {};
			\node (a'-1) at (a'-1) {};
			\node (C) at ($(x1'-1)!0.5!(a'-1)$) {\large$C'_n$};
			\node (C) [draw,fill=white,inner sep=3pt,fit=(x1'-1)(a'-1)(C)] {};
			\node at (C) {\large$C'_n$};
			
			\node at ($(a'-2)!0.5!(x1-0)$) {\Large$\equiv$};
	\end{tikzpicture}
	\end{center}
	  \caption{%
			Schematic of an invertible modal circuit $C'_n$, constructed from another invertible modal circuit $C_n$ and its inverse.
			Given a decomposition of $C_n$, we may decompose $C_n^{-1}$ as inverse of each gate of $C_n$ performed in reverse order.
			The top $n$ bits represent the input, and the remaining $m+1$ bits are workspace bits which are each prepared initially in the state $|0\rangle$.
	  }
	  \label{fig:modalUncompute}
	\end{figure}
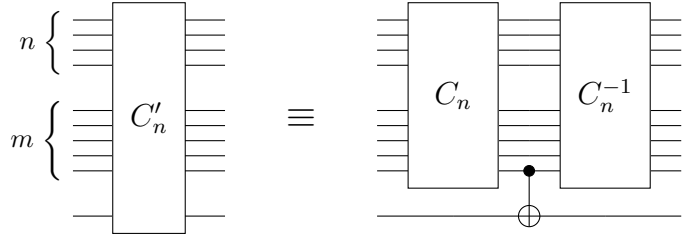	
	\begin{enumerate}
	\item
		Perform $C_n \ox \idop$ on $\ket{x}\ket{0^m}\ket{0}$, obtaining $\ket{\psi_x}\ket{L(x)}\ket{0}$.
	\item
		Perform a \textsc{cnot} gate on the final two bits, obtaining $\ket{\psi_x}\ket{L(x)}\ket{L(x)}$.
	\item
		Perform $C_n^{-1} \ox \idop$ (that is, perform the inverse of each gate in $C_n$ in reverse order), obtaining $\ket{x}\ket{0^m}\ket{L(x)}$.
	\end{enumerate}
	The final state is then a standard basis state 
	with $L(x)$ stored in the final bit.
	It then suffices to consider a nondeterministic Turing machine simulating $C'_n$.

	We may simulate $C'_n$ on a nondeterministic Turing machine $\mathbf N$, in such a way that for $x \in L$ the machine accepts on one branch modulo $k$, and for $x \notin L$ the machine accepts on a number of branches which is a multiple of $k$\iftrue.
	\begin{itemize}
	\item
		Consider a sequence $G_1, G_2, \ldots, G_{2T+1}$ of integer matrices with coefficients ranging from $0$ to $k-1$, which act on $\Z^{\{0,1\}^{n+m+1}}\!$ and whose coefficients are congruent mod $k$ to the action of the gates of $C'_n$ on the computational basis.
		We interpret the matrices $G_t$ as describing a transition function on boolean strings, with an integer weight assigned to each transition.
	\item
		From an initial configuration with $x\,0^{m+1}$ on the tape, we simulate the action of the matrices $G_1$, $G_2$, \ldots\ in sequence by performing nondeterministic transitions.
		\begin{enumerate}
		\item
			Before each matrix $G_t$, we take the contents of the tape $x\sur{t} \in \{0,1\}^{n+m+1}$ in each computational branch as representing a standard basis state.
		\item
			We non-deterministically select a string $x\sur{t+1} \in \{0,1\}^{n+m+1}$.
		\item
			Compute $c = \bra{\smash{x^{(t+1)}}} G_t \ket{\smash{x^{(t)}}} \in \N$, and create a further $c$ branches (\eg~by creating $k$ different branches and immediately halting in a rejecting state in $k-c$ of them).
		\item
			Replace $x\sur{t}$ on the tape with $x\sur{t+1}$, and proceed to the next iteration.
		\end{enumerate}
	\item
		Accept in every branch for which the final contents of the tape has the form $x \,0^m \, 1$, and reject otherwise. 
	\end{itemize}
	It is easy to show by induction that, after simulating the $t\textsuperscript{th}$ gate as above, the number of computational branches in which a given string $x\sur{t} \in \{0,1\}^n$ is written in the index space is given by
	\begin{equation} 
		N(x,t,x\sur{t}) = \bra{\smash{x\sur{t}}} G_t \cdots G_2 G_1 \ket{x}.
	\end{equation}
	Finally we have $N(x,2T+1,y) = \bra{y} G_{2T+1} \cdots G_2 G_1 \ket{x}$.
	By hypothesis, this is equivalent to $\alpha_y := \bra{y} C'_n \ket{x} \pmod{k}$.
	By construction, $C'_n \ket{x}\ket{0^{m-n}}\ket{0} = \ket{x}\ket{0^{m-n}}\ket{L(x)}$, so that
	\begin{equation}
		\alpha_y = \begin{cases} 1 &	\text{if $y = x\,0^{m}\,1$ and $L(x) = 1$}, \\
															0	& \text{otherwise}.
		            \end{cases}
	\end{equation}
	Then $\mathbf N$ accepts on precisely one branch modulo $k$ if $L(x) = 1$, and on zero branches modulo $k$ if $L(x) = 0$.
	Thus $L \in \Mod[k]\P$, so that $\GLP[\Z_k] \subset \Mod[k]\P$.
	\else,
		using standard techniques (see Ref.~\cite{BDHM92}), simply by having the machine accept on all branches of the computation for which $L(x) = 1$: then the number of accepting branches of the computation will be congruence to $1$ mod $k$ if $L(x) = 1$, and congruent to $0$ mod $k$ if $L(x) = 0$.
	\fi
\end{proof}

The above result does not require $k \ge 2$ to be a prime power, or for the gates to act on $O(1)$ bits, and so applies for all moduli.
(We similarly have $\AffineP[\Z_k] \subset \Mod[k]\P$ for all $k \ge 2$.)

\subsection{Simulation of $\mathsf{Mod_{\mathit k}P}$ algorithms by unitary $\Z_k$-modal circuits}	
\label{sec:simulatingModkPbyModal}

For prime powers $k \ge 2$, we may show a strong version of the converse to Lemma~\ref{lemma:modularContainsModal}, in which all gates are unitary modulo $k$.
For any given modulus $k \ge 2$, our proof involves a relatively small set of unitary gates, which is therefore able to efficiently simulate any other finite unitary gate-set (or indeed any polytime-specifiable invertible gate-set).
Following the approach of Lemma~\ref{lemma:affine}, we use the classical gates $\textrm{\small NOT}$, $\textrm{\small CNOT}$, $\textrm{\small TOFFOLI}$, $\textrm{\small SWAP}$, and four more gates which we now describe.

We would like an operation, with which to simulate the branching of a non\-determ\-in\-istic Turing machine well enough to count its accepting branches modulo $k$.	
It will usually not be possible to prepare the uniform superposition $\ket{\varphi} = \ket{0} + \ket{1}$ on each bit individually with unitary gates, as these will not be $\ell_2$ states in the case that ${\bracket{\varphi}{\varphi} \ne 1}$.
As with the $\Z_k$-affine circuits in Lemma~\ref{lemma:affine}, we may circumvent this problem by considering gates which only conditionally creates a uniform distribution.


\begin{lemma}
	\label{lemma:orthogonalBranchingOptor}%
  For any $k \ge 2$, and for $\sB = \Z_p^2$, there is a unitary matrix $K: \sB\sox{3} \to \sB\sox{3}$ such that for some $\ket{\gamma} \in \sB\sox{2}$,
  \begin{equation}
		\label{eqn:unitaryBranching-mod-primes}
		K \ket{000} \,=\, \ket{\gamma}{\ox}\ket{0} \,+\, \Bigl(\ket{0} + \ket{1}\Bigr)\:\!{\ox}\ket{11}  
  \end{equation}
\end{lemma}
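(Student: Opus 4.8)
The plan is to first pin down exactly which vector $K\ket{000}$ must be and to verify that it can be a unit vector, then to invoke transitivity of the orthogonal group on vectors of fixed norm to produce $K$. Since the conjugation on $\Z_p$ is the identity, ``unitary'' here means $K\herm K = \idop$ with $K\herm = K\trans$; that is, $K$ must be an orthogonal transformation preserving the bilinear form $\bracket{\cdot}{\cdot}$ on $\sB\sox{3} \cong \Z_p^8$. As $\bracket{000}{000} = 1$, the right-hand side of Eqn.~\eqref{eqn:unitaryBranching-mod-primes} must be a unit vector. The two summands are orthogonal, since $\ket{\gamma}\ox\ket{0}$ has its third bit equal to $0$ while $(\ket{0}+\ket{1})\ox\ket{11}$ has its third bit equal to $1$. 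Hence I would compute $\bigl\| \ket{\gamma}\ket{0} + (\ket{0}+\ket{1})\ket{11}\bigr\|^2 = \bracket{\gamma}{\gamma} + 2$, so the right-hand side is a unit vector precisely when $\bracket{\gamma}{\gamma} = -1$ in $\Z_p$.

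Next I would exhibit such a $\ket{\gamma} \in \sB\sox{2} \cong \Z_p^4$. This is elementary: for odd $p$ the equation $x^2 + y^2 = -1$ is solvable over $\Z_p$ (every element of a finite field is a sum of two squares), so $\ket{\gamma} = x\ket{00} + y\ket{01}$ works; for $p = 2$ we have $-1 = 1$ and $\ket{\gamma} = \ket{00}$ suffices. Writing $\ket{v}$ for the resulting right-hand side of Eqn.~\eqref{eqn:unitaryBranching-mod-primes}, we have arranged $\bracket{v}{v} = 1$.

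It then remains to produce an orthogonal $K$ with $K\ket{000} = \ket{v}$. Both $\ket{000}$ and $\ket{v}$ are anisotropic unit vectors of the nondegenerate bilinear space $(\Z_p^8, \bracket{\cdot}{\cdot})$, so by Witt's extension theorem -- equivalently, by the Cartan--Dieudonn\'e theorem, which generates the orthogonal group from reflections -- the orthogonal group acts transitively on vectors of the fixed nonzero norm $1$, and such a $K$ exists. Constructively, I would send $\ket{000}\mapsto\ket{v}$ by the reflection in the hyperplane orthogonal to $\ket{v} - \ket{000}$ whenever that difference is anisotropic, and by a product of two such reflections otherwise; the remaining columns of $K$ are then any completion to an orthonormal basis, which is guaranteed by the same theorem.

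The main obstacle I anticipate is the characteristic-$2$ case, since over $\Z_2$ the symmetric form is also alternating and the reflection/Witt machinery degenerates. I would handle $p = 2$ by a direct argument: there $\bracket{v}{v} = 1$ forces $\ket{v}$ to have odd Hamming weight, and the standard form $\sum_i x_i^2$ on $\Z_2^n$ represents $1$ and splits off hyperbolic planes, so one can complete $\ket{v}$ to an orthonormal basis of $\Z_2^8$ by an explicit Gram--Schmidt--type procedure and take the columns of $K$ to be that basis. Everything else (the norm computation and the existence of $\ket{\gamma}$) is routine; the transitivity/completion step is where the real content lies.
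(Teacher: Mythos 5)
Your argument is sound for prime moduli, but the lemma is needed --- and the paper proves it --- over $\Z_k$ for \emph{every} integer $k \ge 2$ (the ``$\Z_p$'' in the statement is a typo for $\Z_k$: the matrix $K$ is used in Lemma~\ref{lemma:modalContainsModular} as a $\Z_k$-unitary gate for arbitrary prime powers $k$, and again in Section~\ref{sec:modalNotPrimePower} for arbitrary $k$). Two of your steps break outside the prime case. First, $-1$ need not be a sum of two squares in $\Z_k$: the squares modulo $8$ are $\{0,1,4\}$, and no two of these sum to $7 \equiv -1$; likewise $3 \equiv -1$ is not a sum of two squares modulo $4$. You must use all four coordinates of $\ket{\gamma}$, which is exactly why the paper invokes Lagrange's four-square theorem to solve $a^2+b^2+c^2+d^2 = k-1$ \emph{over the integers}, a solution that then reduces modulo any $k$. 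Second, and more seriously, your transitivity/completion step leans on Witt's extension theorem and Cartan--Dieudonn\'e, which are theorems about quadratic forms over \emph{fields}; over $\Z_4$ already, both $\ket{v}-\ket{000}$ and $\ket{v}+\ket{000}$ can be isotropic (the field-case argument that at least one is anisotropic uses that $4\bracket{v}{v}$ is a unit, which fails when $4 \mid k$), so neither reflection is defined, and no off-the-shelf transitivity statement for the orthogonal group over $\Z_k$ is available. Completing $\ket{v}$ to an orthonormal basis is the real content of the lemma, and the paper does it by an explicit integer construction: identifying $\Z^8$ with the octonions and right-multiplying the chosen vector by the eight basis units yields eight mutually orthogonal integer vectors of common norm $k+1$, so the resulting integer matrix satisfies $K\trans K = (k+1)\idop \equiv \idop \pmod{k}$ for every $k$ simultaneously.

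Restricted to prime $k$, your route is a legitimate alternative: for odd $p$ Witt's theorem does give transitivity of the orthogonal group on vectors of a fixed nonzero norm, and your $\F_2$ patch (an odd-weight vector extends to an orthonormal basis because the form restricted to its orthogonal complement is non-degenerate and non-alternating) is correct. But as written the proposal does not establish the lemma in the generality in which the paper states and uses it.
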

\begin{proof}
  Associate an octonian (an element of the 8-dimensional $\star$-algebra arising via the Cayley--Dickson construction~\cite[\SectionSymbol2.2]{Baez-2002} on the quaternions) to each integer vector $\vec v = [ v_0 \;\; v_1 \;\; \cdots \;\; v_7 ]\,\trans \in \Z^8$,
	\begin{equation}
	  \omega_{\vec v}	\,=\,	v_0 \vec e_0 \,+\, v_1 \vec e_1 \,+\, v_2 \vec e_2  \,+\, v_3 \vec e_3 \,+\, v_4 \vec e_4 \,+\, v_4 \vec e_5 \,+\, v_6 \vec e_6 \,+\, v_7 \vec e_7,
	\end{equation}
	where $\mathbf e_0 = 1$ and $\mathbf e_j$ are imaginary units for $0 < j < 8$.
	The integer dot-product of vectors can be evaluated as $\vec v \cdot \vec w = \mathrm{Re}(\bar \omega_{\vec v} \omega_{\vec w})$, where $\bar \omega_{\vec v} = v_0 \vec e_0 - v_1 \vec e_1 - \cdots - v_7 \vec e_7$ and where $\mathrm{Re}(\omega_{\vec v}) = v_0$ extracts the real part.
	Consider an integer solution to $a^2 + b^2 + c^2 + d^2 = k-1$: by the Lagrange four squares theorem, there is a solution for every $k \ge 1$.
	Then we define the vector
	\begin{equation}
			\begin{aligned}[b]
				\vec v_0 \;&=\; \begin{bmatrix} a & 0 & b & 1 & c & 0 & d & 1 \end{bmatrix}\trans
			\\&=\;
			  \Bigl(a\ket{00} + b\ket{01} + c\ket{10} + d\ket{11}\Bigr)\ox\ket{0} \,+\, \Bigl(\ket{0} + \ket{1}\Bigr)\ox\ket{11}.
			\end{aligned}
	\end{equation}
	For each $0 \le j \le 7$ define an octonian $\omega_j = \omega_{\vec v_0} \vec e_j\;\!$.
	Then we have 
	\begin{equation}
	  \bar\omega_j \omega_j \;=\;	a^2 + b^2 + 1 + c^2 + d^2 + 1 \;=\; k+1
	\end{equation}
	for each $0 \le j \le 7$, by construction.
	For each $0 \le h,j \le 7$, we also have
	\begin{equation}
			\mathrm{Re}\bigl(\bar \omega_h \omega_j\bigr)
		=
			- \mathrm{Re}\bigl(\bar \omega_h \omega_h \vec e_h \vec e_j\bigr)
		=
			- \mathrm{Re}\bigl((k-1) \vec e_h \vec e_j\bigr)
		=
			\begin{cases}
				k+1,	&	\text{if $h = j$}; \\ 0, & \text{if $h \ne j$},
			\end{cases}
	\end{equation}
	as $\vec e_h^2 = -1$, whereas $\vec e_h \vec e_j$ is imaginary if $h \ne j$.
	For each $0 \le j \le 7$, let $\vec v_j \in \Z^8$ be the vector such that $\omega_j = \omega_{\vec v_j}$.
	Then these vectors are orthogonal, and if we let $K = {\bigl[\vec v_0 \mid \vec v_1 \mid \cdots \mid \vec v_7 \bigr]}$, $K\,\trans K$ is equivalent to the $8 \x 8$ identity matrix mod $k$.
\end{proof}
For the purposes of our analysis, any such operator $K$ will suffice.
For any integer $k \ge 2$, one may efficiently find solutions to $a^2 + b^2 + c^2 + d^2 = k-1$ by randomized algorithms~\cite{RS-1986}, and then consider circuits with this gate included as a primitive gate.
We also consider a conditionally controlled version of $K$,
\begin{equation}
	\begin{split}
		\Lambda K &: \sB\sox{4} \to \sB\sox{4}
	\\
		\Lambda K &\ket{c} \ket{\psi} = \ket{c} \ox K^c \ket{\psi}	\qquad\text{for $\ket{\psi} \in \sB\sox{3}$ and $c \in \{0,1\}$}
	\end{split}
\end{equation}
which will allow us to simulate $K$ depending on certain conditions, pre-computed in another bit.
We also include the inverses of $K$ and $\Lambda K$ as primitive gates.
Together with the classical reversible gates, these suffice to demonstrate: 
\begin{lemma}
	\label{lemma:modalContainsModular}
  For any prime power $k \ge 2$, $\Mod[k]\P \subset \UnitaryP[\Z_k]$.
\end{lemma}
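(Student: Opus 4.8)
The plan is to build a polynomial-time uniform family of unitary $\Z_k$-modal circuits that counts the accepting branches of a nondeterministic machine modulo $k$, mirroring the affine construction of Lemma~\ref{lemma:affine} but replacing its two non-invertible ingredients --- the branching preparation $\ket{\rho} = \ket{01}+\ket{11}+(k-1)\ket{00}$ and the $\textrm{\small ERASE}$ gate --- with unitary substitutes. First I would invoke the normalisation results of \cite{BGH90}: for $L \in \Mod[k]\P$ there is a polynomial-time predicate $f(x,b)$ on branch strings $b \in \{0,1\}^B$, with $B \in \poly(n)$, such that $\#\{b : f(x,b)=1\} \equiv L(x) \pmod{k}$ (one accepting branch modulo $k$ for \YES, none for \NO). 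Since evaluating $f$ is a $\P$-computation, it can be realised exactly by a reversible sub-circuit over $\textrm{\small NOT}$, $\textrm{\small CNOT}$, $\textrm{\small TOFFOLI}$, and $\textrm{\small SWAP}$, with its workspace uncomputed afterward.

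The role of the orthogonal branching gate $K$ of Lemma~\ref{lemma:orthogonalBranchingOptor} is to generate, inside a ``valid-flag'' subspace, the uniform superposition over branch bits that $\ket{\rho}$ produced affinely: applying $K$ (or its controlled form $\Lambda K$) to a fresh triple $\ket{000}$ yields $(\ket{0}+\ket{1})\ox\ket{11}$ together with an inert garbage term $\ket{\gamma}\ket{0}$ whose flag is $\ne 11$, so that conditioning every subsequent gate on the flag bits being $\ket{11}$ leaves the garbage untouched. Iterating this over $B$ sites produces $\sum_b \ket{b}$ (tensored with valid flags) plus garbage, after which the reversible evaluation of $f$ is applied. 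The counting step then plays the branching backwards: because $K\trans K \equiv \idop \pmod{k}$ was engineered via the Lagrange four-squares construction precisely so that the ``uniform'' row sums coefficients, applying the adjoint branching coherently collects $\sum_b$ of the accepting amplitudes into a single distinguished basis component, whose amplitude is $\equiv \#\{b : f(x,b)=1\} \equiv L(x) \pmod{k}$.

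The main obstacle --- and the genuinely delicate point --- is \emph{exactness}: the definition of $\UnitaryP[\Z_k]$ demands that the output bit be \emph{necessarily} $L(x)$, whereas the device above only controls the \emph{amplitude} of the distinguished component. Two subtleties must be overcome. First, ``count $\equiv 0 \pmod{k}$'' is strictly weaker than ``no accepting branch,'' since a \NO\ instance may have $k, 2k, \dots$ accepting branches; one therefore cannot simply read off the accept/reject bit, as output $\ket{1}$ remains \emph{possible} for \NO\ instances. Second, the obvious remedy --- detect the distinguished component with a multiply-controlled $\textrm{\small NOT}$ into a fresh answer bit and then uncompute the counting unitary --- gives a perfectly definite answer only on the side where the count vanishes (a one-sided, $\co\RP$-like guarantee), because the rejecting branches are genuine configurations that a unitary must map somewhere, leaving residual ``wrong-output'' components on the other side. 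A purely logical combination of such one-sided bits cannot succeed even using that $\Mod[k]\P$ is closed under complement (a $\bar L$-computation), since the residual all-zero component is shared by \YES\ and \NO\ instances and so cannot be separated by any function of the output bits; exactness must come from genuine destructive interference rather than bookkeeping.

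The approach I would pursue to close this gap is to organise the uncomputation so that, precisely when the count is $\equiv 0 \pmod{k}$, the output-$\ket{1}$ amplitudes cancel \emph{coherently}, rather than merely summing to a multiple of $k$ on one component. The orthogonality of $K$ modulo $k$ (so that the adjoint branching inverts the forward branching exactly) together with the one-branch normalisation is what I expect to force the whole accepting sector to recombine onto the distinguished component, leaving $L(x)$ as its only surviving trace and returning all remaining amplitude to a fixed garbage state tensored with $\ket{0}$; a final multiply-controlled $\textrm{\small NOT}$ conditioned on that component then writes $L(x)$ into the output bit necessarily. Verifying that each primitive --- the classical reversible gates together with $K$, $\Lambda K$, and their inverses (the eight gates named) --- is unitary modulo $k$, and that the family is polynomial-time uniform with polynomial-time specifiable gates, would then give $L \in \UnitaryP[\Z_k]$, and hence $\Mod[k]\P \subset \UnitaryP[\Z_k]$.
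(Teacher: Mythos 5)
Your architecture matches the paper's for the first two thirds of the argument: normalise the $\Mod[k]\P$ machine so that it accepts on $0$ or $1$ branches modulo $k$, simulate its acceptance predicate by a reversible circuit, use the gate $K$ of Lemma~\ref{lemma:orthogonalBranchingOptor} to prepare the uniform superposition over branching strings inside a flagged subspace, and apply $K\trans$ to collect the accepting amplitudes onto a single distinguished basis state whose amplitude is $\alpha \equiv |\mathbf A_x| \pmod{k}$. You also correctly isolate the one genuinely delicate point, namely that controlling the \emph{amplitude} of one component does not by itself make the output bit \emph{necessarily} equal to $L(x)$.

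However, the mechanism you propose for closing that gap does not work as stated, and the mechanism the paper actually uses is absent from your sketch. You suggest that the residual (non-distinguished) amplitude can be returned to ``a fixed garbage state tensored with $\ket{0}$'', after which a multiply-controlled $\textrm{\small NOT}$ conditioned on the distinguished component writes $L(x)$ into the answer bit necessarily. But if a nonzero garbage component survives with the answer bit in state $\ket{0}$ while the distinguished component carries $\ket{1}$, then for \YES\ instances the output $0$ remains possible and exactness fails; and the construction does not make the garbage vanish. The paper's resolution is to introduce a flag $\mathbf C'$ that detects failure to land on the distinguished all-ones configuration, to \emph{conditionally reverse} the entire subcircuit $\tilde C_n$ on that flag, and only then to apply the final multiply-controlled $\textrm{\small NOT}$. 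Soundness is then the easy case: $\alpha = 0$ means the whole state carries $\mathbf C' = 1$, is reversed to a single standard basis state, and yields output necessarily $0$. Completeness is where interference is needed (not soundness, as your sketch suggests): for $\alpha = 1$ one must show that the surviving residual term $\ket{\Psi'''}$ has no overlap with the all-ones string that would trigger the final gate and corrupt the answer bit, and this follows from the identity $\bracket{1\cdots1}{\Psi'''} = \bracket{\Psi'}{\Psi'} = 1 - \alpha^2 = 0$ in $\Z_k$ — the residual state is \emph{self-orthogonal} over $\Z_k$, a consequence of the unitarity of $\tilde C_n$ together with the one-branch normalisation. Without the conditional-reversal step and this self-orthogonality computation, the proof is incomplete.
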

\begin{proof}
	The acceptance condition of any nondeterministic Turing machine, in a given computational branch, can be computed in polynomial time by a deterministic Turing machine.
	We may represent the choices of transitions made by a nondeterministic Turing machine by a binary string $b$, which we refer to as the \emph{branching string} of that branch.	
	Furthermore, any problem in \P\ may be computed by a polytime-uniform reversible circuit family (consisting of \textrm{\small NOT}, \textrm{\small CNOT}, \textrm{\small TOFFOLI}, and \textrm{\small SWAP} gates)~\cite{Toffoli-1980}.
	We may then represent a nondeterministic Turing machine $\mathbf N$ which halts in polynomial time, by
	\begin{romanum} 
	\item
		a polytime-uniform reversible circuit family $\{R_n\}_{n \ge 1}$, acting on
	\item
		an input string $x \in \{0,1\}^n$, an auxiliary branching string $b \in \{0,1\}^B$, and $m$ work bits for $m,B \in O(\poly n)$, 
	\end{romanum}
  such that $R_n$ outputs $1$ on its final bit if and only if $\mathbf N$ accepts the input $x$ in the computational branch uniquely labelled by $b$.

  From these remarks, it follows for $L \in \Mod[k]\P$
  that there is a polynomial-uniform reversible circuit family $\{R_n\}_{n\ge 1}$ of this sort, acting on $N = n+b+m$ bits, for which $L(x) = 1$ if and only if
  \begin{equation}
		\#\Bigl\{ b \in \{0,1\}^B \,\Big|\, R_n(x,b,0^m)_{N} \;\!{=}\:\! 1\Bigr\} \;\not\equiv\; 0 \pmod{k}.
  \end{equation}
	Furthermore, without loss of generality, we suppose that 
	the number of accepting branches is congruent either to $0$ or to $1$ modulo $k$~\cite[Theorems~23 and~30]{BGH90}, and similarly assume that the total number of branches of the computation is equivalent to $1$ modulo $k$.
	Then either the number of accepting branches is zero modulo $k$ and the number of rejecting branches is one modulo $k$, or vice-versa.

	We consider a polytime-uniform unitary $\Z_k$-modal circuit family $\{C_n\}_{n\ge 0}$ acting on 
	$n + 5B + 3m + 1$ bits, as illustrated in Figure~\ref{fig:orthogonalCircuit}.
	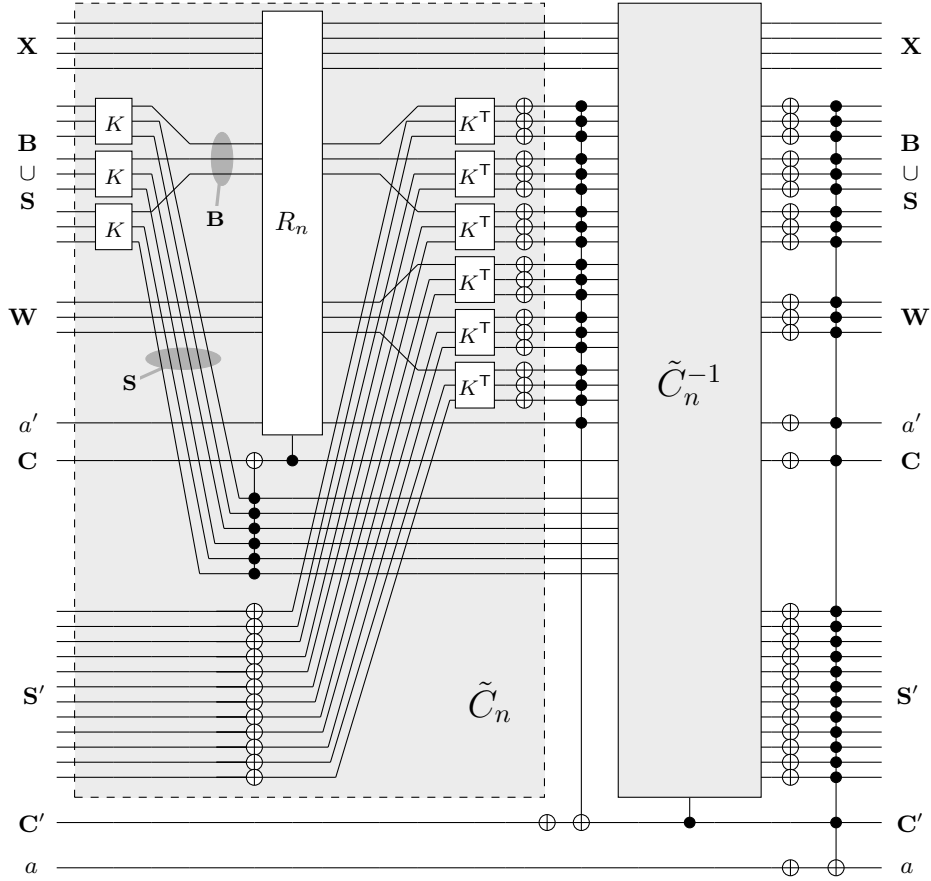
\begin{figure}[!p]
	\begin{center}\iftikz	~\hspace*{-10em}
	\begin{tikzpicture}
			\coordinate (X1-0) at (-0.1,0.3);
			\coordinate (X2-0) at (-0.1,0.1);
			\coordinate (X3-0) at (-0.1,-0.1);
			\coordinate (X4-0) at (-0.1,-0.3);
			
			\coordinate (B1-0) at (-0.1,-0.8);
			\coordinate (S1a-0) at (-0.1,-1.0);
			\coordinate (S1b-0) at (-0.1,-1.2);
			
			\coordinate (B2-0) at (-0.1,-1.5);
			\coordinate (S2a-0) at (-0.1,-1.7);
			\coordinate (S2b-0) at (-0.1,-1.9);
			
			\coordinate (B3-0) at (-0.1,-2.2);
			\coordinate (S3a-0) at (-0.1,-2.4);
			\coordinate (S3b-0) at (-0.1,-2.6);
		
			\coordinate (W1-0) at (-0.1,-3.4);
			\coordinate (W2-0) at (-0.1,-3.6);
			\coordinate (W3-0) at (-0.1,-3.8);

			\coordinate (a'-0) at (-0.1,-5.0);
			
			\coordinate (C-0) at (-0.1,-5.5);
			
			\coordinate (A1a-0) at (-0.1,-7.5);
			\coordinate (A1b-0) at (-0.1,-7.7);
			\coordinate (A2a-0) at (-0.1,-7.9);
			\coordinate (A2b-0) at (-0.1,-8.1);
			\coordinate (A3a-0) at (-0.1,-8.3);
			\coordinate (A3b-0) at (-0.1,-8.5);
			\coordinate (A4a-0) at (-0.1,-8.7);
			\coordinate (A4b-0) at (-0.1,-8.9);
			\coordinate (A5a-0) at (-0.1,-9.1);
			\coordinate (A5b-0) at (-0.1,-9.3);
			\coordinate (A6a-0) at (-0.1,-9.5);
			\coordinate (A6b-0) at (-0.1,-9.7);
			
			\coordinate (C'-0) at (-0.1,-10.3);
			
			\coordinate (a-0) at (-0.1,-10.9);
			
			\foreach \a in {0,1} {%
			\ifnum\a>0%
				\node (X1-11) at (X1-11) {};
				\node (A6b-half) at ($(A6b-0) + (0.5,0)$) {}; 
				\node [draw, fill=gray!15!white, dashed, inner xsep=4pt, inner ysep=4pt, fit=(X1-11)(A6b-half)] {};
				\coordinate (B1-9half) at ($(B1-9)!0.5!(B1-10)$);
				\node at (A4a-0 -| B1-9half) {\Large$\tilde C_n$};

				\filldraw [gray!60!white] ($(S2a-2)!0.5!(S2b-3)$) node (x) {} ellipse (14pt and 4pt);
				\draw [very thick,gray!60!white] (x) -- ++ (-2em,-2ex) node [black,fill=gray!15!white,inner sep=1pt] {\footnotesize$\mathbf S$};

				\filldraw [gray!60!white] ($(B2-3)!0.5!(B2-4)$) node (x) {} ellipse (4pt and 10pt);
				\draw [very thick,gray!60!white] (x) -- ++ (-0.25em,-5ex) node [black,fill=gray!15!white,inner sep=2pt]
				{\footnotesize$\mathbf B$};
			\fi

			\xdef\prev{0}
				\foreach \t/\dt in {1/0.75,2/0.5,3/0.5,4/0.85,5/0.5,6/0.4,7/0.75,8/0.5,9/0.75,10/0.4,11/0.25,12/0.75,13/0.75,14/1,15/0.75,16/0.25,17/0.6,18/0.6} {
					\foreach \l in {X1,X2,X3,X4,B1,B2,B3,a',S1a,S1b,S2a,S2b,S3a,S3b,W1,W2,W3,C,A1a,A1b,A2a,A2b,A3a,A3b,A4a,A4b,A5a,A5b,A6a,A6b,C',a} {%
						\coordinate (\l-\t) at ($(\l-\prev) + (\dt,0)$);
					}

					\ifnum\t=2%
						\foreach \l/\dx in {S1a/0.1,S1b/0.03,S2a/0.0,S2b/-0.07,S3a/-0.1,S3b/-0.17} {
							\coordinate (\l-2) at ($(\l-2) + (\dx,0)$);
						}
					\fi

					\ifnum\t=3%
						\foreach \l/\dy in {B1/-0.5,B2/0,B3/0.5} {%
							\coordinate (\l-3) at ($(\l-3) + (0,\dy)$);
						}
						
						\foreach \l/\dx/\dy in {S1a/0.15/0,S1b/0.1/0,S2a/0.05/0.3,S2b/0/0.3,S3a/-0.05/0.6,S3b/-0.1/0.6} {
							\coordinate (\l-3) at ($(\l-3) + (0.4,-5.0)$);
							\coordinate (\l-3) at ($(\l-3) + (\dx,\dy)$);
						}
					\fi

					\ifnum\t=4%
					\foreach \l/\dx in {S1a/0.15,S1b/0.1,S2a/0.05,S2b/0,S3a/-0.05,S3b/-0.1} {
						\coordinate (\l-4) at (\l-4 -| C-4);
					}
					\fi
					
					\ifnum\t=5%
						\foreach \l in {A1a,A1b,A2a,A2b,A3a,A3b,A4a,A4b,A5a,A5b,A6a,A6b} {%
							\coordinate (\l-5) at ($(\l-5) + (-1,0)$);
						}
					\fi
					
					\ifnum\t=7%
						\foreach \l/\dx in {A1a/-0.15,A1b/-0.08,A2a/-0.05,A2b/0.02,A3a/0.05,A3b/0.12,A4a/0.15,A4b/0.22,A5a/0.25,A5b/0.32,A6a/0.35,A6b/0.42} {
							\coordinate (\l-7) at ($(\l-7) + (\dx,0)$);
						}
					\fi

					\ifnum\t=8%
						\foreach \l/\dy in {B1/0.5,B2/0,B3/-0.5,W1/0.5,W2/0,W3/-0.5} {%
							\coordinate (\l-8) at ($(\l-8) + (0,\dy)$);
						}
					
						\foreach \l/\a/\dx in {A1a/B1/-0.15,A1b/A1a/0.07,A2a/B2/-0.05,A2b/A2a/0.07,A3a/B3/0.05,A3b/A3a/0.07,A4a/W1/0.15,A4b/A4a/0.07,A5a/W2/0.25,A5b/A5a/0.07,A6a/W3/0.35,A6b/A6a/0.07} {
							\coordinate (\l-8) at ($(\a-8) + (\dx,-0.2)$);
						}
					\fi

					\ifnum\t=9%
						\coordinate (A1a-9) at (S1a-0 -| C-9);
						\xdef\lprev{1a}
						\foreach \l/\dy in {1b/0.2,2a/0.5,2b/0.2,3a/0.5,3b/0.2,4a/0.5,4b/0.2,5a/0.5,5b/0.2,6a/0.5,6b/0.2} {%
							\coordinate (A\l-9) at ($(A\lprev-9) + (0,-\dy)$);
							\xdef\lprev{\l}
						}
					\fi

					\ifnum\t=11%
						\coordinate (C'-11) at (6.35,-7.5);
						\coordinate (C'-11) at (C'-11 |- C'-0);
					\fi

					\ifnum\t=12%
						\coordinate (C'-12) at (6.65,-7.5);
						\coordinate (C'-12) at (X1-12 |- C'-0);
					\fi

					\ifnum\t=14%
					\foreach \l in {B1,B2,B3,a',S1a,S1b,S2a,S2b,S3a,S3b,W1,W2,W3,C,A1a,A1b,A2a,A2b,A3a,A3b,A4a,A4b,A5a,A5b,A6a,A6b} {%
						\coordinate (\l-14) at (\l-14 |- \l-0);
					}
					\fi

					\ifnum\a>0%
						\foreach \l in {X1,X2,X3,X4,B1,B2,B3,a',S1a,S1b,S2a,S2b,S3a,S3b,W1,W2,W3,C,A1a,A1b,A2a,A2b,A3a,A3b,A4a,A4b,A5a,A5b,A6a,A6b,C',a} {%
							\draw (\l-\prev) -- (\l-\t);
						}
					\fi
					\xdef\prev{\t}
					}
			}
			
			\foreach \l in {1a,2a,3a} {%
				\node [draw, rectangle, fill=white, inner sep=2pt, minimum height=1ex, minimum width=1ex]
							at (S\l-1) {\footnotesize$\,\big.K^{\mathclap{\phantom{\big.}}}$};
			}
			
			\foreach \j in {1a,1b,2a,2b,3a,3b} {%
				\filldraw [black] (S\j-4) circle (2pt);
			}
			\filldraw [black,fill=white] (C-4) circle (3pt);
			\draw ($(S3b-4) + (0,-1pt)$) -- ($(C-4) + (0,3pt)$);
			\draw ($(C-4) + (-3pt,0)$) -- ($(C-4) + (+3pt,0)$);
			
			\node (X1-5) at (X1-5) {};
			\node (a'-5) at (a'-5) {};
			\node [draw, rectangle, fill=white, inner sep=1pt] (x) at ($(X1-5)!0.5!(a'-5)$) {$~R_n~$};
			\node (x) [draw, rectangle, fill=white, inner sep=1pt, fit=(X1-5)(a'-5)(x)] {\phantom{$~R_n~$}};
			\node at ($(X1-5)!0.5!(a'-5)$) {$R_n$};
			
			\filldraw [black] (C-5) circle (2pt);
			\draw (C-5) -- (x);

			\foreach \l in {1a,1b,2a,2b,3a,3b,4a,4b,5a,5b,6a,6b} {%
				\coordinate (A\l-6) at (A\l-6 -| X1-4);
				\filldraw [black,fill=white] (A\l-6) circle (3pt);
				\draw ($(A\l-6) + (0,-3pt)$) -- ($(A\l-6) + (0,+3pt)$);
				\draw ($(A\l-6) + (-3pt,0)$) -- ($(A\l-6) + (+3pt,0)$);
			}

			\foreach \a in {A1a,A2a,A3a,A4a,A5a,A6a} {%
				\node [draw, rectangle, fill=white, inner ysep=2pt, inner xsep=1pt, minimum height=1ex, minimum width=1ex]
							at (\a-9) {\footnotesize$\big.K^{\mathsf T\mathclap{\phantom{\big|}}}$};
			}

			\foreach \l in {B1,B2,B3,W1,W2,W3,A1a,A1b,A2a,A2b,A3a,A3b,A4a,A4b,A5a,A5b,A6a,A6b,C'} {%
				\filldraw [black,fill=white] (\l-11) circle (3pt);
				\draw ($(\l-11) + (-3pt,0)$) -- ($(\l-11) + (+3pt,0)$);
				\draw ($(\l-11) + (0,-3pt)$) -- ($(\l-11) + (0,+3pt)$);
			}
						
			\node (B1-12) at (B1-12) {};
			\node (a'-12) at (a'-12) {};

			\foreach \l in {B1,A1a,A1b,B2,A2a,A2b,B3,A3a,A3b,W1,A4a,A4b,W2,A5a,A5b,W3,A6a,A6b,a'} {%
				\filldraw [black] (\l-12) circle (2pt);
			}
			\filldraw [black,fill=white] (C'-12) circle (3pt);
			\draw ($(B1-12)+(0,1pt)$) -- ($(C'-12) + (0,-3pt)$);
			\draw ($(C'-12) + (-3pt,0)$) -- ($(C'-12) + (+3pt,0)$);

			\node (X1-13) at (X1-13) {};
			\node (A6b-14half) at ($(A6b-15) + (-0.4,0)$) {}; 
			\node (x) [draw, rectangle, fill=gray!15!white, inner xsep=4pt, inner ysep=4pt, fit=(X1-13)(A6b-14half)]
				{\Large$\tilde C_n^{-1}$};
			\coordinate (C'-13half) at (x |- C'-0);
			\filldraw [black] (C'-13half) circle (2pt);
			\draw (C'-13half) -- (x);
				
			\foreach \l in {B1,B2,B3,S1a,S1b,S2a,S2b,S3a,S3b,C,W1,W2,W3,a',A1a,A1b,A2a,A2b,A3a,A3b,A4a,A4b,A5a,A5b,A6a,A6b,a} {%
				\filldraw [black,fill=white] (\l-16) circle (3pt);
				\draw ($(\l-16) + (-3pt,0)$) -- ($(\l-16) + (+3pt,0)$);
				\draw ($(\l-16) + (0,-3pt)$) -- ($(\l-16) + (0,+3pt)$);
			}

			\foreach \l in {B1,S1a,S1b,B2,S2a,S2b,B3,S3a,S3b,W1,W2,W3,a',C,A1a,A1b,A2a,A2b,A3a,A3b,A4a,A4b,A5a,A5b,A6a,A6b,C'} {%
				\filldraw [black] (\l-17) circle (2pt);
			}
			\filldraw [black,fill=white] (a-17) circle (3pt);
			\draw ($(B1-17) + (0,1pt)$) -- ($(a-17) + (0,-3pt)$);
			\draw ($(a-17) + (-3pt,0)$) -- ($(a-17) + (3pt,0)$);
			
			\foreach \d/\t in {west/0, east/18} {
				\node [label=\d:\small$\mathbf X$] at ($(X1-\t)!0.5!(X4-\t)$) {};
				\node [label=\d:\small$\begin{matrix}\mathbf B\\\cup\\\mathbf S\end{matrix}$] at ($(B1-\t)!0.5!(S3b-\t)$) {};
				\node [label=\d:\small$\mathbf W$] at ($(W1-\t)!0.5!(W3-\t)$) {};
				\node [label=\d:\small$\:\!a'\!\!\;$] at (a'-\t) {};
				\node [label=\d:\small$\mathbf C$] at (C-\t) {};
				\node [label=\d:\small$\!\mathbf S'\!\!$] at ($(A1a-\t)!0.5!(A6b-\t)$) {};
				\node [label=\d:\small$\!\mathbf C'\!\!$] at (C'-\t) {};
				\node [label=\d:\small$a$] at (a-\t) {};
			}
\end{tikzpicture}
\hspace*{-10em}~\else[Place holder for a TiKZ-picture]\fi\end{center}
	\caption{%
		For prime powers $k$, a schematic diagram for a $\Z_k$-unitary circuit to simulate a $\Mod[k]\P$ algorithm.
		The \Mod[k]\P\ algorithm is represented by a reversible circuit $R_n$ (composed \eg~of $\textrm{\small NOT}$, $\textrm{\small CNOT}$, $\textrm{\small TOFFOLI}$, and $\textrm{\small SWAP}$ gates), acting on an input register $\mathbf X$ and a branching register $\mathbf B$, which computes the acceptance condition of a nondeterministic Turing machine on an input provided in $\mathbf X$ in the computational branch labelled by the string in $\mathbf B$.
		The $K$ gates are the three-bit gates described by Lemma~\ref{lemma:orthogonalBranchingOptor}, with inverse $K^{-1} = K^{\mathsf T}$.
		Gates which are connected to black dots on one or more bits $c_1, c_2, \ldots,$ such as the multiply controlled-$\textrm{\small NOT}$ gates of Eqn.~\eqref{eqn:multiply-controlled-not},	are operations which for standard basis states are performed conditional on each bit $c_j$ being $1$.
		The portion of the circuit in a shaded and dashed box defines a subcircuit $\tilde C_n$.
		The subcircuit $\tilde C_n$ is conditionally reversed in the second half of the circuit, depending on the bit $\mathbf C'$.
		All bits except for the input register $\mathbf X$ are initialized to $|0\rangle$, and the output is the final bit $a$.
	}
	\label{fig:orthogonalCircuit}
	\end{figure}
	We group these bits into registers as follows:
	\begin{itemize}[itemsep=0.25ex]
	\item
		An input register $\mathbf X$ on $n$ bits;
	\item
		The ``principal'' branching register $\mathbf B$ on $B$ bits;
	\item
		A ``branching success'' register $\mathbf S$ on $2B$ bits;
	\item
		A ``principal'' work register $\mathbf W$ on $m-1$ bits;
	\item
		A ``simulation accept'' bit $a'$;
	\item
		A ``simulation control'' bit $\mathbf C$;
	\item
		A ``summation'' register $\mathbf S'$ on $2(B+m-1)$ bits;
	\item
		A ``reverse simulation control'' bit $\mathbf C'$; and
	\item
		An answer bit $a$.
	\end{itemize}
	Given an input $x \in \{0,1\}^n$ in the input register $\mathbf X$, the circuit $C_n$ performs the following operations (all conditional operations extend linearly for combinations of standard basis states):
	\begin{enumerate}
	\item
		Prepare all of the bits except for the input bits in $\mathbf X$ in the state $\ket{0}$.
	\item
		Match each bit of $\mathbf B$ with two bits of $\mathbf S$, and act on each triple with the operator $K$ (as described in Lemma~\ref{lemma:orthogonalBranchingOptor}).
	\item
		Conditioned on all bits of $\mathbf S$ being in the state $\ket{1}$, flip the bit $\mathbf C$.
	\item
		Conditioned on $\mathbf C$ being in the state $\ket{1}$, simulate $R_n$ on $(\mathbf X, \mathbf B, \mathbf W, a')$ with $\mathbf B$ as the branching register, $\mathbf W$ as the work register, and $a'$ as the output.
	\item
		Flip all bits in the register $\mathbf S'$.
	\item
		Match each bit of $\mathbf B \cup \mathbf W$ with two bits of $\mathbf S'$, and perform $K\,\trans$ on each triple.
		That is, perform $K\,\trans$ on $(\mathbf B_j, \mathbf S'_{2j-1}, \mathbf S'_{2j})$ for each $1 \le j \le B$, and also on $(\mathbf W_j, \mathbf S'_{2B+2j-1}, \mathbf S'_{2B+2j})$, for each $1 \le j \le m-1$.
	\item
		Flip all bits in the registers $\mathbf B$, $\mathbf W$, and $\mathbf S'$.
	\item
		Flip the bit $\mathbf C'$, and conditioned on all bits of $(\mathbf B, \mathbf W, \mathbf S', a')$ being $\ket{1}$, toggle $\mathbf C'$ back again.
	\item
		Conditioned on $\mathbf C'$ being in the state $\ket{1}$, undo each operation in steps~2--7 in reverse order (using controlled versions of $K$ and $K\,\trans$, as well as controlled versions of each of the classical gates used).
	\item
		Flip all bits in all registers apart from $\mathbf X$ and $\mathbf C'$.
	\item
		Conditioned on every bit in all registers apart from $\mathbf X$ and $a$ being in the state $\ket{1}$, flip the bit $a$, and produce it as output.		
	\end{enumerate}
	The effect of these transformations, described in high-level terms, is as follows.
	\begin{itemize}
	\item 
		Steps~1--3 simulate the preparation of the uniform distribution over all branching strings, in preparation to simulate the nondeterministic machine $\mathbf N$.
		At the end, $\mathbf C$ has the value $1$ if this is successful.
	\item
		Step~4 simulates of $\mathbf N$, conditioned on successfully preparing the uniform distribution on branching strings.
		This creates a number of branches in which $\mathbf N$ accepts or rejects; and $\mathbf C$ serves now to indicate whether the machine $\mathbf N$ was simulated.
	\item
		Steps~5--7 simulate the summation of the amplitudes of all \emph{rejecting} branching strings, and all \emph{accepting} branching strings, into one standard basis state each.
		These two states are $\ket{x\,1^{3(B{+}m{-}1)} \:\! 0 \:\! 1^{2B{+}1}\:\! 00}$ and $\ket{x\,1^{3(B{+}m{-}1)} \:\! 1 \:\! 1^{2B{+}1}\:\! 00}$ respectively.
		Successful simulation and summation of amplitudes is represented by the two blocks of $1$s which are present in both cases.
	\item
		Step~8 sets a bit to indicate failure in summing the amplitudes of all accepting branches, so that $\mathbf C'$ is equal to $1$ conditioned on either having failed to simulate the nondeterministic machine $\mathbf N$, or on $\mathbf N$ rejecting, or on having failed to sum the accepting branches, or on the number of accepting branches being a multiple of $k$.
		When this occurs, step~9 ``undoes'' the simulation of $\mathbf N$.
	\item
		If the number of accepting branches of $\mathbf N$ is a multiple of $k$, then step~9 should undo the entire computation, restoring the initial state $\ket{x 0^{5B{+}3m{+}1}}$.
		In steps~10--11, we attempt to set the final bit to $\ket{0}$ if this is the case, and to $\ket{1}$ otherwise.	
	\end{itemize}
	We now explicitly compute the effect of the circuit.
	For the sake of brevity, we will omit tensor factors which are in the state $\ket{0}$.
	The order of the tensor factors in the development below may differ from that shown in Figure~\ref{fig:orthogonalCircuit}.

	After preparing the non-input registers in step~1 and performing the $K$ operations in step~2, the state of the computation is given by
	\begin{equation}
			\ket{\psi_2}
		\;=\;
			\ket{x}_\bX \ket{\Gamma}_{\bB,\bS} 	\;+\;
			\ket{x}_\bX \left( \;\;\;	\sum_{\mathclap{\;\;b \in \{0,1\}^B}}	\;\ket{b\;\!1^{\!\!\;{2B}}}_{\bB,\bS} \right) ,
	\end{equation}
	where $\ket{\Gamma} \in \sB\sox{3B}$ is a state such that $\bracket{b\;\!1^{\!\!\;2B}}{\Gamma} = 0$ for all $a,b \in \{0,1\}^B$.
	The second term is interpreted as a successful preparation of all possible branching strings: step~3 simply prepares the control bit $\bC$ to indicate this success, yielding
	\begin{equation}
			\ket{\psi_3}
		\;=\;
			\ket{x}_\bX \ket{\Gamma}_{\bB,\bS}\ket{0}_\bC 	\;+\;
			\ket{x}_\bX \! \left( \;\;\;	\sum_{\mathclap{\;\;b \in \{0,1\}^B}}	\;\ket{b\;\!1^{\!\!\;2B}}_{\bB,\bS} \right) \!\ket{1}_\bC.
	\end{equation}
	For the sake of brevity, let $\bW'$ be the joint register $(\bB,\bW,\bS)$, and let $\ket{\Gamma'}_{\bW'}$ denote the tensor product of $\ket{\Gamma}_{\bB,\bS}$ with $\ket{0^{m-1}}_\bW$\,.
	Let $f(x,b)$ be the function computed by $R_n$.
	Step~4 then simulates $R_n$ on $(\bX, \bB, \bW, a')$ conditioned on $\bC$ being $\ket{1}$, yielding
	\begin{equation}
	\begin{aligned}[b]
			\ket{\psi_4}
		\;={}&\;
			\ket{x}_\bX \ket{\Gamma'}_{\bW'}\ket{0}_\bC\ket{0}_{a'}
		\\&
		\;+\;
			\ket{x}_\bX \! \left( \;\;\;	\sum_{\mathclap{\;\;b \in \{0,1\}^B}}	\;\ket{b, w(x,b)}_{\bB,\bW} \ket{1^{\!\!\;2B}}_{\bS} \ket{1}_{\bC} \ket{f(x,b)}_{a'} \right) ,
	\end{aligned}
	\end{equation}
	for some $w : \{0,1\}^{n+B} \to \{0,1\}^{m-1}$ computed on $\bW$.
	\begin{subequations}
	We may re-express the above according to the two possible values of $f(x,b)$ as
	\begin{align}
			\ket{\psi_4}
		\;={}&
		\;
			\ket{x}_\bX \ket{\Gamma'}_{\bW'}\ket{0}_\bC \ket{0}_{a'}
		\notag\\
		\;&+\;
			\ket{x}_\bX \! \left( \; \sum_{\mathclap{\;b \in \mathbf R_x}}	\;\ket{b, w(x,b)}_{\bB,\bW} \right) \ket{1^{\!\!\;2B+1}}_{\bS,\bC} \ket{0}_{a'}
		\notag\\&\quad
		+\;
			\ket{x}_\bX \! \left( \; \sum_{\mathclap{\;b \in \mathbf A_x}}	\;\ket{b, w(x,b)}_{\bB,\bW} \right) \ket{1^{\!\!\;2B+1}}_{\bS,\bC} \ket{1}_{a'}
	\\[2ex]
		&\quad
			\phantom{\mathbf A_x}\text{where }
			\mathbf R_x \!\!\;= \{ b \in \{0,1\}^B \,|\, f(x,b) = 0 \},
		\\&\quad
			\phantom{\text{where }\mathbf R_x\!\!\;}
			\mathbf A_x = \{ b \in \{0,1\}^B \,|\, f(x,b) = 1 \}:
	\end{align}
	thus $\{0,1\}^B = \mathbf R_x \cup \mathbf A_x$.
	\end{subequations}
	Let $S' = B+m - 1$ for the sake of brevity.
	Let $\bW''$ be the joint register ${(\bB, \bW, \bS', \bS, \bC)}$, and let $\ket{\Gamma''}_{\bW''}$ denote the tensor product of $\ket{\Gamma'}_{\bW'}$ with $\ket{\smash{1^{2S'}}0}_{\bS',\bC}$\,.
	Introducing the register $\bS'$ after toggling the value of each of its bits in step~5 then yields the state
	\begin{equation}
	\label{eqn:orthogonalEvolPreSummation}
  	\begin{aligned}[b]
			\ket{\psi_5}
		\;={}&
		\;
			\ket{x}_\bX \ket{\Gamma''}_{\bW'}\ket{0}_{a'}
		\\
		\;&+\;
			\ket{x}_\bX \! \left( \; \sum_{\mathclap{\;b \in \mathbf R_x}}	\;\ket{b, w(x,b)}_{\bB,\bW} \ket{\big.1^{\smash{\!\!\;2S'}}}_{\bS'} \right) \! \ket{1^{\!\!\;2B+1}}_{\bS,\bC} \ket{0}_{a'}
		\\&\quad
		+\;
			\ket{x}_\bX \! \left( \; \sum_{\mathclap{\;b \in \mathbf A_x}}	\;\ket{b, w(x,b)}_{\bB,\bW} \ket{\big.1^{\smash{\!\!\;2S'}}}_{\bS'}\right) \! \ket{1^{\!\!\;2B+1}}_{\bS,\bC} \ket{1}_{a'} .
		\end{aligned}
	\end{equation}
	We collect the first two terms on the right-hand side of Eqn.~\eqref{eqn:orthogonalEvolPreSummation} into a state of the form $\ket{x}_\bX\ket{\Gamma'''}_{\bW''}\ket{0}_{a'}$.
	Let $\tilde K$ be the operation consisting of performing $K$ to the bits of $(\bB,\bW)$ and $\bS'$ in groups of three, and let $\ket{\smash{\tilde\Gamma}}_{\bW''} := \tilde K\,\trans \ket{\Gamma'''}_{\bW''}$.
	\begin{subequations}
	Then the state after step~6 is given by
	\begin{equation}
  	\begin{aligned}[b]
			\ket{\psi_6}
		\;={}\;
			\ket{x}_\bX \ket{\smash{\tilde \Gamma}}_{\bW''}\ket{0}_{a'}
		\,&+\,
			\ket{x}_\bX \ket{\Phi}_{\bB,\bW,\bS'} \ket{1^{\!\!\;2B+1}}_{\bS,\bC} \ket{1}_{a'}
		\\&
		\,+\,
			\alpha \ket{x}_\bX \ket{\smash{0^{\!\!\;3S'}}}_{\bB,\bW,\bS'} \ket{1^{\!\!\;2B+1}}_{\bS,\bC} \ket{1}_{a'} ,
		\end{aligned}
	\end{equation}
	where $\ket{\Phi} \in \sB\sox{3S'}$ is a state such that $\bracket{00\cdots00}{\Phi} = 0$, and where
	\begin{equation}
	\begin{aligned}[b]
				\alpha
			\;:={}&\!
				\sum_{b \in \mathbf A_x} \bra{\big.\smash{0^{\!\!\;3S'}}} \!\!\;\tilde K\,\trans \biggl[ \ket{\big.b,w(x,b)} \ox \ket{\big.\smash{1^{2S'}}} \biggr]
			\\={}&\!
				\sum_{b \in \mathbf A_x}  \biggl[ \bra{\big.b,w(x,b)} \ox \bra{\big.\smash{1^{2S'}}} \biggr] \tilde K \ket{\big.\smash{0^{\!\!\;3S'}}}
			\\={}&
				\sum_{\substack{b \in \mathbf A_x \\ \mathclap{y \in \{0,1\}^{S'}}}}  \biggl[ \bra{\big.b,w(x,b)} \ox \bra{\big.\smash{1^{2S'}}} \biggr] \biggl[ \ket{\big.y} \ox \ket{\big.\smash{1^{2S'}}} \biggr]
			\equiv \bigl| \mathbf A_x \bigr|	\!\!\pmod{k}.
	\end{aligned}\!\!\!\!\!\!
	\end{equation}
	\end{subequations}
	Toggling each of the bits in $(\bB,\bW,\bS')$ yields the state 
	\begin{equation}
		\label{eqn:postSummationToggle}
  	\begin{aligned}[b]
			\ket{\psi_7}
		\;={}\;
			\ket{x}_\bX \ket{\smash{\tilde \Gamma}'}_{\bW''}\ket{0}_{a'}
		\;&+\;
			\ket{x}_\bX \ket{\Phi'}_{\bB,\bW,\bS'} \ket{1^{\!\!\;2B+1}}_{\bS,\bC} \ket{1}_{a'}
		\\[0.5ex]&
		\;+\;
			\alpha \ket{x}_\bX \ket{\smash{1^{\!\!\;3S'}}}_{\bB,\bW,\bS'} \ket{1^{\!\!\;2B+1}}_{\bS,\bC} \ket{1}_{a'} ,
		\end{aligned}\!\!\!
	\end{equation}
	where $\ket{\smash{\tilde \Gamma'}}$ is the result of toggling every bit in $\ket{\smash{\tilde\Gamma}}$ and $\ket{\Phi'}$ is the result of toggling every bit in $\ket{\Phi}$, so that $\bracket{11\cdots11}{\Phi'} = 0$.
	Defining the circuit $\tilde C_n$ consisting of the operations performed in steps~2--7, we may summarize the computation thus far by
	\begin{equation}
	\label{eqn:firstHalfSummary}
	\begin{aligned}[b]
			\tilde C_n \ket{x}_{\bX} \ket{\big.\smash{0^{\!\!\;5B+3m-1}}}_{\bW''\!,a'}
		&=\,
			\ket{\psi_7}
		\\&
		\,=\,
			\ket{x}_\bX \ket{\Psi'}_{\bW''\!,a'} \,+\, \alpha \ket{x}_\bX \ket{\big.\smash{1^{\!\!\;5B+3m-1}}}_{\bW''\!,a'}	\,,
	\end{aligned}
	\!\!\!
	\end{equation}
	where $\ket{x}\ket{\Psi'}$ collects the first two terms on the right-hand side of Eqn.~\eqref{eqn:postSummationToggle}, and in particular has no overlap with any state in which every bit of $\bB$, $\bW$, $\bS'$, and $a'$ is in the state $\ket{1}$.
	It follows that the result of introducing $\bC'$ in the state $\ket{0}$, flipping it to $\ket{1}$, and then conditionally flipping it again in step~8 yields the state
	\begin{equation}
	\begin{aligned}[b]
			\ket{\psi_8}
		\;=\;
			\ket{x}_\bX \,&\! \ket{\Psi'}_{\bW''\!,a'}\ket{1}_{\bC'} \,+\, \alpha \ket{x}_\bX \ket{\big.\smash{1^{\!\!\;5B+3m-1}}}_{\bW''\!,a'}\ket{0}_{\bC'}
			\,.
	\end{aligned}
	\end{equation}
	We now consider the two possible cases of \YES\ or \NO\ instances of $L$.
	
	\paragraph{Soundness.}
	If $x \notin L$, we have $|\mathbf A_x| \equiv 0 \pmod{k}$, so that $\alpha = 0$.
	Then $\ket{\psi_8} = \bigl(\tilde C_n \ket{x}_\bX\ket{00\cdots00}_{\bW''\!,a'}\bigr)\ket{1}_{\bC'}$, so that step~9 simply restores the original state of $\bW''$ and $a'$, step~10 sets all of those bits (and the output bit $a$) to $\ket{1}$, and step~11 flips the output bit $a$ back to $\ket{0}$.
	The final state is then $\ket{x}\ket{\smash{1^{\!\!\;5B+3m}}}\ket{0}$, and in particular, the output is necessarily $0$.

	\paragraph{Completeness.}
	If $x \in L$, we have $|\mathbf A_x| \equiv 1 \pmod{k}$ by hypothesis, so that $\alpha = 1$.
	Conditionally performing $\tilde C_n^{-1}$ on $\bX$, $\bW''$, and $a'$ if $\bC'$ is in the state $\ket{1}$ yields
	\begin{equation}
	\begin{aligned}[b]
			\ket{\psi_9}
		\;&=\;
			\biggl(\tilde C_n^{-1} \ket{x}_\bX \ket{\Psi'}_{\bW''\!,a'}\biggr)\ket{1}_{\bC'} \,+\, \ket{x}_\bX \ket{\big.\smash{1^{\!\!\;5B+3m-1}}}_{\bW''\!,a'}\ket{0}_{\bC'}
		\\&=\;
			\ket{x}_\bX \ket{\Psi''}_{\bW''\!,a'} \ket{1}_{\bC'} \,+\, \ket{x}_\bX \ket{\big.\smash{1^{\!\!\;5B+3m-1}}}_{\bW''\!,a'}\ket{0}_{\bC'}
	\end{aligned}
	\!\!\!\!\!\!
	\end{equation}
	for some state $\ket{\Psi''}$, as $\tilde C_n$ leaves the input register unchanged on all inputs; introducing $a$ and toggling all of the bits except for $\bX$ and $\bC'$ then yields
	\begin{equation}
	\label{eqn:penultimateCompleteness}
	\begin{aligned}[b]
			\ket{\psi_{10}}
		\;&=\;
			\ket{x}_\bX \ket{\Psi'''}_{\bW''\!,a'} \ket{1}_{\bC'} \ket{1}_a \,+\, \ket{x}_\bX \ket{\big.\smash{0^{\!\!\;5B+3m-1}}}_{\bW''\!,a'}\ket{0}_{\bC'}\ket{1}_a
	\end{aligned}
	\!\!\!\!
	\end{equation}	
	where $\ket{\Psi'''}$ is the result of flipping every bit of $\ket{\Psi''}$.	
	In light of the multiply-controlled-not operation performed in step~11, consider the overlap of $\ket{\Psi'''}$ with $\ket{11\cdots11}$.
	As $\tilde C_n$ is orthogonal we have $\tilde C_n^{-1} = \tilde C_n\trans[]$, so that
	\begin{equation}
	\!\!\!\!
	\begin{aligned}[b]
			\bracket{\big.\smash{1^{5B+3m-1}}}{\big.\Psi'''}
		&=\:\!
			\biggl(\bra{x} \ox \bra{\big.\smash{0^{3B+2m}}}\biggr)\tilde C_n\trans[] \biggl(\ket{x} \ox \ket{\big.\Psi'}\biggr)
		\\&=\:\!
			\biggl(\bra{x} \ox \bra{\big.\Psi'}\biggr)\tilde C_n \biggl(\ket{x} \ox \ket{\big.\smash{0^{5B+3m-1}}}\biggr)
		\\&=\:\!
			\biggl(\bra{x} \ox \bra{\big.\Psi'}\biggr)\biggl(\ket{x}\ket{\Psi'} + \alpha \ket{x} \ket{\big.\smash{1^{\!\!\;5B+3m-1}}} \biggr)
		=\:\!
			\bracket{\Psi'}{\Psi'} \!\!\:,
	\end{aligned}\!\!\!\!
	\end{equation}
	by Eqn.~\eqref{eqn:firstHalfSummary} and the description of $\ket{\Psi'}$.	
	However, we find in this case that $\ket{\Psi'}$ is orthogonal to itself: again from Eqn.~\eqref{eqn:firstHalfSummary} we have ${1 = \bracket{\psi_7}{\psi_7}} = \bracket{\Psi'}{\Psi'} + \alpha^2$, and as 
	$\alpha = 1$, we have $\bracket{\Psi'}{\Psi'} = 0$.
	Thus the multiply-controlled-not in step~11 has no effect on the first term on the right-hand side of Eqn.~\eqref{eqn:penultimateCompleteness}; nor does it have any effect on the second term.
	Thus the final state 
	is 
	\begin{equation}
	\label{eqn:ultimateCompleteness}
	\begin{aligned}[b]
			\ket{\psi_{11}}
		\;&=\;
			\ket{x}_\bX \biggl( \ket{\Psi'''}_{\bW''\!,a'} \ket{1}_{\bC'} \,+\, \ket{\big.\smash{0^{\!\!\;5B+3m-1}}}_{\bW''\!,a'}\ket{0}_{\bC'}\biggr)\ket{1}_a\,,
	\end{aligned}
	\end{equation}
	so that the output of $C_n$ is necessarily $1$.
	
	\medskip
	\noindent
	Thus $L \in \UnitaryP[\Z_k]$ for any $L \in \Mod[k]\P$, so that $\Mod[k]\P \subset \UnitaryP[\Z_k]$.
\end{proof}

\begin{corollary*}[Theorem~\ref{thm:mainResult}]
  For any prime power $k$, $\UnitaryP[\Z_k] = \GLP[\Z_k] = \Mod[k]\P$.
\end{corollary*}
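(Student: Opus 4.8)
The plan is to prove all three claimed equalities at once by exhibiting a cycle of set-containments
\[
	\UnitaryP[\Z_k] \subset \GLP[\Z_k] \subset \Mod[k]\P \subset \UnitaryP[\Z_k],
\]
after which each class is sandwiched between copies of itself and the three must coincide. Two of the three links are delivered verbatim by the main technical lemmas, so the only genuinely new verification is the first, trivial containment, together with the bookkeeping needed to close the loop.

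First I would establish $\UnitaryP[\Z_k] \subset \GLP[\Z_k]$ directly from the definitions. A unitary $\Z_k$-modal gate is, by the conventions of Section~\ref{sec:specialModalCircuitComplexity}, a bijection $U : \sB\sox{h} \to \sB\sox{h}$ with $U\herm U = \idop\sox{h}$, and any such $U$ is in particular invertible; preparation operations are shared between the two models. Hence a unitary circuit family satisfying $C_n\ket{x}\ket{0^m} = \ket{\psi'}\ket{L(x)}$ in the sense of Definition~\ref{def:modalCircuitComplexity}(iii) is simultaneously an invertible circuit family of the form required by Definition~\ref{def:modalCircuitComplexity}(i), so every $L \in \UnitaryP[\Z_k]$ lies in $\GLP[\Z_k]$.

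Next I would simply cite the two lemmas already proved: Lemma~\ref{lemma:modularContainsModal} supplies $\GLP[\Z_k] \subset \Mod[k]\P$ (for every $k \ge 2$, via the uncomputation construction of Figure~\ref{fig:modalUncompute} that forces a standard-basis output and lets a nondeterministic machine tally accepting branches modulo $k$), and Lemma~\ref{lemma:modalContainsModular} supplies $\Mod[k]\P \subset \UnitaryP[\Z_k]$ for prime powers $k$ (via the octonion-derived branching gate $K$ of Lemma~\ref{lemma:orthogonalBranchingOptor} and the conditionally-reversed simulation circuit of Figure~\ref{fig:orthogonalCircuit}). Chaining these two containments with the one just verified closes the cycle and yields $\UnitaryP[\Z_k] = \GLP[\Z_k] = \Mod[k]\P$.

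At the level of the corollary there is essentially no obstacle: all the difficulty is front-loaded into Lemma~\ref{lemma:modalContainsModular}, which must realize the modular branch-count using gates that are genuinely unitary rather than merely invertible. The one point I would flag is the single use of the prime-power hypothesis, which enters only through that lemma's appeal (justified by Theorems~23 and~30 of Ref.~\cite{BGH90}) to the normalization of a $\Mod[k]\P$ algorithm so that it accepts on exactly $0$ or $1$ branches modulo $k$. This normalization is what is known to fail for general composite $k$, and it is precisely why the equalities are asserted only for prime powers; the extension of the whole statement from $\Z_k$ to an arbitrary Galois ring of character $k$ is then handled separately by the reductions of Section~\ref{sec:galoisRingBasics}.
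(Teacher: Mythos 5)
Your proposal is correct and follows exactly the paper's own argument: the cycle of containments $\UnitaryP[\Z_k] \subset \GLP[\Z_k] \subset \Mod[k]\P \subset \UnitaryP[\Z_k]$, with the first link immediate from the fact that unitary gates are invertible (noted in the remarks after Definition~\ref{def:modalCircuitComplexity}) and the other two supplied by Lemmas~\ref{lemma:modularContainsModal} and~\ref{lemma:modalContainsModular}. Your observation about where the prime-power hypothesis enters is also consistent with the paper's treatment.
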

\begin{proof}
  We have the sequence of containments
  \begin{equation}
		\UnitaryP[\Z_k] \subset \GLP[\Z_k] \subset \Mod[k]\P \subset \UnitaryP[\Z_k] ,  
  \end{equation}
  where the second and third containments are Lemmas~\ref{lemma:modularContainsModal} and~\ref{lemma:modalContainsModular}, and the rest follow from the remarks following Definition~\ref{def:modalCircuitComplexity}.
\end{proof}

\subsection{A remark on $\Z_k$-modal classes for $k$ not a prime power}
\label{sec:modalNotPrimePower}

In parts of the analysis above, we made use of some techniques which did not depend on $k$ being a prime power, but rather on the fact that all problems in $\Mod[k]\P$ can be made to have only zero or one accepting branch modulo $k$:
\begin{definition}
  For $k \ge 2$ any integer, \UP[k]\ is the set of languages $L \subset \{0,1\}^\ast$ for which there exists a function $f \in \text\#\P$ such that
  \begin{itemize}
  \item 
		$x \notin L$ if and only if $f(x) \equiv 0 \pmod{k}$, and
  \item 
		$x \in L$ if and only if $f(x) \equiv 1 \pmod{k}$.
  \end{itemize}
\end{definition}
\noindent
Our results rely on the fact that $\UP[k] = \Mod[k]\P$ for $k$ a prime power.
More generally, however, the proofs of Lemmas~\ref{lemma:affine} and~\ref{lemma:modalContainsModular} also hold for any \Mod[k]\P\ algorithm which accepts on at most a single branch modulo $k$.
Thus, the same analysis suffices to show that $\UP[k] \subset \AffineP[\Z_k]$ and $\UP[k] \subset \UnitaryP[\Z_k]$, for any integer $k \ge 2$.
Furthermore, because the $\Mod[k]\P$ algorithms for simulating $\Z_k$-affine circuits and $\Z_k$-invertible circuits accept with exactly one branch modulo $k$ when $x \in L$ (and with zero branches modulo $k$ otherwise), we immediately have $\AffineP[\Z_k] \subset \UP[k]$ and $\UnitaryP[\Z_k] \subset \GLP[\Z_k] \subset \UP[k]$.
Thus our results in fact show:
\begin{corollary}
	\label{cor:UPequiv}
  For all $k \ge 2$, we have $\GLP[\Z_k] = \AffineP[\Z_k] = \UnitaryP[\Z_k] = \UP[k]$\,.
\end{corollary}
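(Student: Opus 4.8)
The plan is to prove the four classes equal by establishing a cycle of containments passing through $\UP[k]$, none of which requires $k$ to be a prime power. Concretely I would show
\[
	\UP[k] \subset \AffineP[\Z_k] \subset \UP[k], \qquad \UP[k] \subset \UnitaryP[\Z_k] \subset \GLP[\Z_k] \subset \UP[k],
\]
and then combine these with the trivial inclusion $\UnitaryP[\Z_k] \subset \GLP[\Z_k]$ (every unitary gate is invertible) to collapse all four classes onto $\UP[k]$. The key observation is that the prime-power hypothesis entered Lemmas~\ref{lemma:affine} and~\ref{lemma:modalContainsModular} in exactly one place — the appeal to \cite[Theorems~23 and~30]{BGH90} allowing one to assume the number of accepting branches is congruent to $0$ or $1$ modulo $k$ — and that this very assumption is built into the definition of $\UP[k]$.

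For the forward inclusions $\UP[k] \subset \AffineP[\Z_k]$ and $\UP[k] \subset \UnitaryP[\Z_k]$, I would re-run the converse constructions of Lemmas~\ref{lemma:affine} and~\ref{lemma:modalContainsModular} verbatim, but starting from the $\#\P$ witness $f$ supplied by the definition of $\UP[k]$ rather than from an arbitrary $\Mod[k]\P$ algorithm. Since $f(x) \equiv 0$ or $1 \pmod{k}$ by hypothesis, a polytime reversible circuit family $\{R_n\}$ computing the acceptance predicate of the underlying nondeterministic machine already has $\alpha \equiv |\mathbf A_x| \equiv f(x) \pmod{k}$ accepting branching strings, so the branch-count reduction via \cite{BGH90} is simply omitted. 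The affine branching operation $\ket{\rho}$ of Lemma~\ref{lemma:affine} and the unitary branching gate $K$ of Lemma~\ref{lemma:orthogonalBranchingOptor} both exist for every $k \ge 2$ (the latter via the Lagrange four-squares theorem), so the soundness and completeness arguments carry over unchanged.

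For the reverse inclusions $\AffineP[\Z_k] \subset \UP[k]$ and $\GLP[\Z_k] \subset \UP[k]$, I would reuse the nondeterministic simulations already constructed. Given an exact affine circuit with $C_n\ket{x}\ket{0^m} = \ket{\psi'}\ket{L(x)}$, the machine $\mathbf T$ of Lemma~\ref{lemma:affine}, accepting exactly on those branches whose output bit is $1$, has a number of accepting branches congruent modulo $k$ to $\sum_{y : y_N = 1}\bra{y}C_n\ket{x}\ket{0^m}$; because affine maps preserve the coefficient sum and the computation is exact, this count is $\equiv 1$ when $L(x)=1$ and $\equiv 0$ when $L(x)=0$. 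Likewise, the uncomputation circuit $C'_n$ of Lemma~\ref{lemma:modularContainsModal} produces the standard basis state $\ket{x}\ket{0^m}\ket{L(x)}$, so the machine $\mathbf N$ accepts on exactly one branch modulo $k$ when $L(x)=1$ and on zero branches otherwise. In both cases the accepting-branch count is a $\#\P$ function taking the values $0$ or $1$ modulo $k$ in lockstep with $L(x)$, which is precisely the defining condition of $\UP[k]$.

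The main obstacle is not any new construction but the bookkeeping needed to certify that the prime-power assumption is genuinely confined to the branch-count reduction — in particular that Lemma~\ref{lemma:orthogonalBranchingOptor} is stated for all $k \ge 2$, and that the summation step computing $\alpha \equiv |\mathbf A_x| \pmod{k}$ never secretly relied on $k = p^r$. Once this audit is complete, chaining the four inclusions above together with $\UnitaryP[\Z_k] \subset \GLP[\Z_k]$ yields $\GLP[\Z_k] = \AffineP[\Z_k] = \UnitaryP[\Z_k] = \UP[k]$ for all $k \ge 2$.
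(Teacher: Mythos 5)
Your proposal is correct and follows essentially the same route as the paper: the paper's own justification of Corollary~\ref{cor:UPequiv} consists precisely of observing that the prime-power hypothesis in Lemmas~\ref{lemma:affine} and~\ref{lemma:modalContainsModular} is used only to invoke the branch-count normalization of Ref.~\cite{BGH90}, which the definition of $\UP[k]$ supplies for free, giving $\UP[k] \subset \AffineP[\Z_k]$ and $\UP[k] \subset \UnitaryP[\Z_k]$ for all $k \ge 2$, while the nondeterministic simulations of Lemmas~\ref{lemma:affine} and~\ref{lemma:modularContainsModal} already accept on exactly one branch modulo $k$ for \YES\ instances and zero for \NO\ instances, giving the reverse containments into $\UP[k]$. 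Your chaining of these inclusions with $\UnitaryP[\Z_k] \subset \GLP[\Z_k]$ matches the paper's argument exactly.
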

\noindent
The characterization in terms of $\Mod[k]\P$, for $k$ a prime power, can itself be seen as a corollary of the fact that $\UP[k] = \Mod[k]\P$ in that case.
Thus, despite the differences between the valid transformations for these models, they are polynomial-time equivalent for any fixed integer $k\ge 2$.

While it is not known whether $\UP[k] = \Mod[k]\P$ when $k$ is not a prime power, this seems unlikely.
Consider the factorization $k = p_1^{e_1} p_2^{e_2} \cdots p_\ell^{e_\ell}$ for distinct primes $p_i$: it is easy to see that that $\UP[k] \subset \UP[p_1] \cap \UP[p_2] \cap \cdots \cap \UP[p_\ell]$ essentially by definition.
If $\UP[k] = \Mod[k]\P$ for any $k \ge 2$ not a prime power, by Ref.~\cite[Proposition~29]{BGH90} there would be a collapse of all classes $\Mod[p_i\!]\P$ for all primes $p_i$ dividing $k$.
This would require completely new simulation techniques to relate the acceptance conditions of the classes $\Mod[p_i\!]\P$.
We might then expect the $\Z_k$-modal complexity classes of this article to differ from $\Mod[k]\P$, when $k$ is not a prime power.

\section{Remarks}
\label{sec:remarks}

\subsection{Limitations on the power of quantum computation}


Quantum algorithms (either exact or with bounded error) are not expected to be able to efficiently decide all problems in $\Mod[k]\P$, for any modulus $k \ge 2$.
In particular, while ${\text{UNIQUE-SAT} \in \Mod[k]\P}$ for each $k\ge 2$, we also expect that $\text{UNIQUE-SAT} \notin \BQP$.
If the latter containment did hold, one could use Valiant--Vazirani~\cite{VV86} to show that $\NP \subset \BQP$, which is considered very unlikely~\cite{BBBV97,AA09}.
Our results for $S = \Z_k$ could then be taken to demonstrate that destructive interference can be \emph{very} powerful --- more powerful, for instance, than we expect quantum computation to be.
Given that this occurs for \emph{finite} rings of amplitudes, this is a conclusive rebuttal of criticisms of quantum computation on the grounds of somehow exploiting amplitudes with infinite precision~\cite{Lev00}.

The fact that we expect $\text{UNIQUE-SAT} \notin \BQP$ puts us in the ironic position of asking why the power of quantum algorithms should be so \emph{limited}.
If exact computation with $S$-modal circuits can be powerful enough to solve problems such as $\text{UNIQUE-SAT}$, for infinitely many choices of ring $S$, why should this fail for the special case $S = \C$ even when we allow computation with bounded-error?
We conjecture that this may be due to two different restrictions:
\begin{romanum}
\item
	Restricting to a ring of character $0$ (such as the complex numbers) makes it more difficult to arrange for outcomes to interfere destructively.
	This is particularly true for exact algorithms, where the ``infinite precision'' of quantum amplitudes makes exactitude a more difficult constraint for quantum algorithms to fulfil.
\item
	In the characteristic $0$ case, restricting to unitary circuit algorithms is expected to be significant.
	Bounded error \emph{invertible} (\ie~possibly non-unitary) circuits with complex amplitudes can efficiently decide any $L \in \PP$ by Aaronson~\cite{Aar04-post}, whereas we expect $\BQP \prsubset \PP$.
	(This may seem to conflict with our result that $\UnitaryP[\Z_k] \!= \GLP[\Z_k]$ for all $k \ge 2$: we remark on this in the following Section.)
\end{romanum}
Either of these constraints might imply upper bounds to $\UnitaryP[\C]$\,, and further support the intuition that $\BQP$ should not contain difficult problems such as $\text{UNIQUE-SAT}$.
An investigation of the computational power of exact $\C$-modal algorithms using invertible gates would indicate what limits on computation ``exactitude'' alone might impose on a quantum computation.
One could then investigate what further restrictions unitarity might impose beyond this.

\subsection{Complexity of modal computation in the infinite limit?}
\label{sec:infiniteLimit}

Hanson~\etal~\cite{HOSW11} propose to recover quantum computation as the limit of the models of unitary $\F_{\!\!\:p^2}$-modal circuits, for primes $p \equiv 3 \pmod{4}$ and taking the limit as $p \to \infty$. 
While it is unclear how such a limit might be taken, we may consider what other approaches one might pursue, to attempt to determine the power of quantum computation as a limit of $\Z_k$-unitary circuits.

While $\GLP[\Z_k] \subset \UnitaryP[\Z_k]$ for finite $k \ge 2$, different values of the branching gate $K$ are required for each modulus $k$, to simulate a $\Z_k$-invertible circuit by \mbox{$\Z_k$-unitary} circuits.
	The overhead to using $\Z_k$-unitary circuits to simulate $\Z_k$-invertible circuits for exact algorithms then depends on the cost of the branching gate $K$ (see page~\pageref{discn:gate-cost}), and thus on the modulus $k$ itself.
	Note that $\GLP[\C] \subset \PP$, following Adleman~\etal~\cite{ADH-1997}.
	Thus, examining how the cost of simulating $\Z_k$-invertible circuits using $\Z_k$-unitary circuits, and considering exact quantum algorithms as representing a sort of limit-infimum of exact $\Z_k$ unitary algorithms, might suggest approaches to bound $\UnitaryP[\C]$ and $\EQP$ away from $\PP$.
	
\subsubsection{Bounded-error computation over $p$-adics}
\label{sec:p-adics}

	The cyclic rings $\Z_p$, $\Z_{p^2}$, $\Z_{p^3}$, \ldots for $p$ a prime have the \emph{$p$-adic integers $\Z_{(p)}$} as an inverse limit.\footnote{%
		Given a a sequence of rings $R_1, R_2, \ldots, R_i, \ldots$ with morphisms $\varphi_i : R_{i+1} \to R_i$\,, their \emph{inverse limit} is a ring $\mathcal R$ with a collection of maps $\Phi_i: \mathcal R \to R_i$ such that $\Phi_i = \varphi_i \circ \Phi_{i+1}$ (see Mac~Lane~\cite{MacLane-1998} for more details).
		For $R_i = \Z_{p^i}$\,, let $\varphi_i$ be the canonical projection $\Z_{p^{i+1}} \to \Z_{p^i} / p^i \Z_{p^{i+1}}$\,.
		Then the map $\Phi_i: \Z_{(p)} \to \Z_{p^i}$ for the inverse limit is the truncation of infinite power series in $p$ to the $p^{i-1}$ order term.
	}
	The $p$-adic integers consist of formal power series $a = \sum_i a_i p^i$ over the prime $p$, with potentially infinitely many non-zero coefficients $a_i \in {\{0,1,\ldots,p-1\}}$,	but where addition is still performed with ``carries'' as in the usual representation of integers in base $p$.
	(See Robert~\cite{Robert-2000} for an introductory treatment.)

	The integers $\Z$ are contained as a subring of $\Z_{(p)}$.
	However, $\Z_{(p)}$ has a topology which is different from the usual ordering of $\Z$: for distinct $a,b \in \Z_{(p)}$, we define a distance measure $|a-b|_p = 1/p^{\ell}$, where $p^\ell$ is the smallest power for which the $p$-adic expansion of $a-b$ has a non-zero coefficient.
	The distance measure on $\Z_{(p)}$ allows us to consider bounded one-sided error computation for \mbox{$\Z_{(p)}$-modal} circuits.
	This distance is closely related to the canonical significance functions of Eqn.~\eqref{eqn:canonicalSignificanceFn}.
	The classes $\UnitaryP[\Z_k]$ for $k = p^r$ may then be construed to be languages with efficient (two-sided) bounded-error \mbox{$\Z_{(p)}$-unitary} algorithms, with error bound $1/p^r$.
	
	By taking an appropriate closure of the $p$-adic integers~\cite[Chapter~3]{Robert-2000}, we obtain a topological field $\C_{(p)}$ which is isomorphic as a field to the usual complex numbers.
	The difference between the topologies of $\C_{(p)}$ and $\C$ prevent an easy comparison of bounded-error modal computation with amplitudes over these rings.
	However, as $\C_{(p)}$ and $\C$ are isomorphic as fields, there can be no distinction between the two as regards \emph{exact} modal computation.
	As $\EQP \subset \UnitaryP[\C]$, we may then ask whether we may bound (or even characterize) $\EQP$ in terms of the limit of bounded-error $\Z_k$-modal circuit complexity as $k = p^r$ for $r \to \infty$.

\subsubsection{Other limits of cyclic rings}

	Another way to take limits of $\Z_k$ for increasing $k$ is to take the limit as $k$ is divisible by an increasing number of primes.
	For example, we may consider the computational power of $\Z_{k_i}$-modal circuits, for an integer sequence $1 < k_1 < k_2 < \cdots < k_i < \cdots$ such that each $k_{i+1} = k_i p_i$\,, where $p_i$ is a prime which does not divide $k_i$.
	One might then investigate how the $\Z_{k_i}$-modal circuit complexity of problems varies as $i \to \infty$.

	This approach may prove more difficult than the first approach suggested above.
	The theory of $p$-adics relies on residues which are prime powers; not all of the techniques which apply to the $p$-adics generalize to the inverse limit of the rings $\Z_{k_i}$\:\!.
	However, as the inverse limit of this sequence of rings is again a ring of characteristic zero, an analysis of $\Z_{k_i}$-unitary circuits in the $i \to \infty$ limit might provide insights (or formal results) which apply to the quantum circuit model.

\subsection{``Almost quantum'' distributions beyond finite fields}
\label{sec:almostQuantum?}

	The motivation of this work is to consider the computational power of destructive interference, in models different from quantum computation.
	To do so, we examined models of computation for transforming the $S$-valued distributions for semirings $S$, including the special case $S = \F_k$ first considered by Schumacher and Westmoreland~\cite{SW10}.
	The motivation of Ref.~\cite{SW10} is to demonstrate analogues of quantum phenomena, including classic quantum communication protocols such as superdense coding~\cite{BW-1992}, prompting them to describe distributions over $\F_k$ as ``almost quantum'' in later work~\cite{SW-2012}.
	Having formulated an abstract theory of indeterminism, we may informally consider what features make an indeterministic state-space qualitatively ``quantum-like''.
	
	Is there a sense in which a state-space can be ``quantum'', which is distinct from the presence of negatives among the amplitudes?
	Refs.~\cite{SW10,SW-2012} seem to indicate that, in communication complexity, negatives among amplitudes make available more than one basis in which to express correlations; and that this allows analogues of the quantum protocols which take advantage of such correlations.
	Perhaps ``quantumness'' in the sense of Schumacher and Westmoreland (\ie~abstract similarity to quantum mechanics) is a common feature of state-spaces over rings.
	The case of distributions over the semiring $\R_+$ corresponds to probability distributions, which we take as clearly \emph{non}-quantum.
	The case $S = \N$ (in which efficiently preparable distributions are \#\P-functions) only appear to yield powerful computational classes such as $\NP$ in the unbounded-error setting.
	These two cases have in common that $S$ is not a ring, \ie~none of the non-zero elements have negatives.
	On the other hand, any finite ring $R$ contains a copy of a cyclic ring $\Z_k$ for $k = \Char(R)$, so that (affine, invertible, or unitary) $R$-modal computation at least can solve problems in $\UP[k]$ by Corollary~\ref{cor:UPequiv}.

	Are there rings $R$ of characteristic zero (apart from dense sub-rings of $\C$), giving rise to models of bounded-error $R$-modal computation which are likely to be more powerful than randomized computation, but without containing $\UP$?
	How powerful are models of $\Z$-modal computation?
	Is there a form of modal computation apart from deterministic computation, which is less powerful than quantum computation or quantum communication in the exact setting?	
	
	
		
	\subsection*{Acknowledgements}

	I began this work at the University of Cambridge, with support from the European Commission project QCS.
	A first draft of this article was completed at the CWI in Amsterdam, with support from a Vidi grant from the Netherlands Organisation for Scientific Research (NWO) and the European Commission project QALGO.	
	I would like to thank Ronald de Wolf for suggestions on terminology and presentation, Will Jagy for remarks~\cite{Jagy-2014-SE} which contributed to the proof of Lemma~\ref{lemma:orthogonalBranchingOptor}, Jeroen Zuiddam for discussions which contributed to the ideas of Section~\ref{sec:p-adics},  and anonymous reviewers for feedback on an earlier draft of this article.


\bgroup

\egroup



\appendix
\renewcommand\thesubsubsection{\Roman{subsubsection}.}
\section{Unitarity over quadratic extensions of Galois rings}
\label{apx:quadraticExtensions}

We now describe 
how to construct non-trivial conjugation operations for certain Galois rings, including quadratic field extensions.
This allows us to define a generalized notion of unitarity, extending beyond matrices over $\C$.
We also show that for Galois rings, these conjugation operations are the only (self-inverse) ones that exist.
We indicate the concepts involved: the interested reader may consult Ref.~\cite{Wan-2003} for details.
These remarks are not essential to our analysis for cyclic rings, but do allow us to consider the conjecture of Ref.~\cite{HOSW11} regarding unitary transformations over finite fields.

The following repeats the definition in Section~\ref{sec:basicAlgebra} (following Eqns.~\eqref{eqn:innerProducts} on page~\pageref{discn:conjugation}) of ``a conjugation operation'', which may be used to define a sesquilinear inner product:
\begin{definition}
	\label{def:conjugation}
  Given a non-trivial ring $R$, a function $c: R \to R$ is a \emph{conjugation operation} if it is a self-inverse automorphism of $R$: that is, if $c \circ c = \id_R$, and if for all $r,s \in R$, we have $c(r+s) = c(r) + c(s)$ and $c(rs) = c(r) c(s)$.
  (This implies, in particular, that $c(0_R) = 0_R$ and $c(1_R) = 1_R$.)
\end{definition}
\noindent
Readers familiar with finite ring extensions will recognise that, for $R$ a Galois ring (such as  a finite field), this definition coincides with the standard notion of ``conjugation'' arising from a quadratic Galois extension. 

By recognising our notion of conjugation as a special case of that arising from Galois theory, we may characterize the possible conjugation operations for a Galois ring $R = \mathrm{GR}(k,k^e)$ for prime powers $k = p^r$ and integers $e \ge 1$.

\subsubsection{Construction}


If $e$ is even, let $B = \mathrm{GR}(k,k^{e/2})$.
Then $R$ can be obtained not only as an extension of $\Z_k$, but also as an extension $R = B{[\:\!\omega\:\!]}$ for $\omega\in R$ a formal root of an irreducible monic quadratic polynomial $g \in B[x]$.
In particular, we have $g(x) = (x-\omega)(x-\overline \omega) \in R[x]$ for some $\overline \omega \in R$ distinct from $\omega$.
Consider a conjugation operation $r \mapsto \bar r$ for $r \in R$, given by $\overline{(b_1 + b_2 \omega)} = b_1 + b_2 \:\!\overline \omega$ for $b_1, b_2 \in B$, and the resulting inner product of Eqn.~\eqref{eqn:interestingInnerProduct}.
Similarly to the usual inner product over $\C = \R[{\:\!i\:\!}]$, such an inner product satisfies $\langle \vec v, \vec v \rangle \in B$ for all vectors $\vec v$ over $R$.

The fields $R = \F_{p^2}$ considered by Hanson~\etal~\cite{HOSW11} are a special case of such quadratic extensions, in which $B = \F_p$ for $p \equiv 3 \pmod{4}$.
In that case, in analogy to $\C = {\R[\:\!i\:\!]}$, one may take $R = {B[\:\!i\:\!]}$, where $i$ is a formal root to the polynomial $x^2 + 1$.
Then $\overline{(a+bi)} = a - bi$ for $a,b \in B$, and the notion of unitarity as described above bears a strong formal similarity to unitarity over $\C$.
The restriction on $B$ in this case serves to guarantee that $x^2 + 1$ is irreducible in $B[x]$.
In our analysis, if $x^2 + 1$ is not irreducible in $B[x]$, one simply takes $R$ to be an extension $B{[\:\!\omega\!\:]}$ for $\omega$ the formal root of some other quadratic polynomial which is irreducible over $B$.

For any Galois ring $R$, including those which may be obtained as a quadratic extension of some ring $B$, we may also consider the trivial automorphism $\overline r = r$ for all $r \in R$, giving rise to an inner product of the form of Eqn.~\eqref{eqn:boringInnerProduct}.
For some Galois rings, this leads in principle to two different notions of ``unitarity'', and two different models of unitary computation.
For the sake of definiteness, we consider any Galois ring $R$ to come equipped with some self-inverse ring automorphism $r \mapsto \bar r$: this induces a corresponding inner product according to Eqn.~\eqref{eqn:interestingInnerProduct}, and a specific notion of unitarity.

\subsubsection{Characterization}

Conjugation operations on $R$ are elements of the automorphism group of $R$, which are well-understood in the case that $R$ is a Galois ring.
In general, the automorphisms of a Galois ring $R = \mathrm{GR}(k,k^e)$ form a cyclic group of order $e \ge 1$~\cite[Theorem~14.31]{Wan-2003}, generated by some particular automorphism $\phi$.
\begin{itemize}
\item 
	For $e$ even, the automorphism $c: R \to R$ given by $c = \phi^{\!\;e\!\!\;/2}$ is self-inverse.
	This is the only non-trivial automorphism which satisfies $c^2 = \id_R$.
	In particular, $c$ is the conjugation operation arising from the quadratic extension of $B = \mathrm{GR}(k,k^{e/2})$, whose construction we described above.
\item
	For $e$ odd, the only self-inverse automorphism of $R$ is the identity operation $\id_R$.
	Thus, the only possible conjugation operation is the trivial one.
\end{itemize}
For Galois rings $R$, this characterizes all conjugations in the sense of Definition~\ref{def:conjugation}.

\section{Regarding state-spaces for finite rings}
\label{apx:validTransformations}


\subsection{Lemmata concerning state spaces}

The following two Lemmas serve to justify Definition~\ref{def:generalizedGeneric} (page~\pageref{def:generalizedGeneric}):

\begin{lemma}
	\label{lemma:genericStateSpace}
  For a ring $S \ne 0$, let $\sS_\ast$ be the set of distributions $\ket{\psi} \in \sD$ for which there exists $\ket{\phi} \in \sD$ such that $\bracket{\phi}{\psi} = 1_S$.
  Then $\sS_\ast$ is a state-space.
\end{lemma}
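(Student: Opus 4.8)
The plan is to reduce the defining condition of $\sS_\ast$ to an ideal-theoretic one and then verify the two clauses of Definition~\ref{def:stateSpace} directly. First I would observe that membership $\ket\psi \in \sS_\ast$ is equivalent to the condition that the coefficient ideal $I_\psi = \sum_x \psi_x S$ is all of $S$, as already noted in Definition~\ref{def:generalizedGeneric}. Indeed, because the conjugation $s \mapsto \bar s$ is a self-inverse automorphism of $S$, it is in particular a bijection, so the coefficients $\overline{\phi_x}$ range over all of $S$ as the $\phi_x$ do; hence the existence of some $\ket\phi$ with $\bracket{\phi}{\psi} = \sum_x \overline{\phi_x}\psi_x = 1_S$ is the same as the assertion that $1_S$ is an $S$-linear combination of the $\psi_x$, i.e.\ $1_S \in I_\psi$, i.e.\ $I_\psi = S$. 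I will work with this reformulation throughout.

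Next I would dispatch the second clause (the easy part). For each $x \in \{0,1\}^\ast$ the distribution $\ket{x}$ has coefficient $1_S$ in position $x$, so $I_{\ket{x}} = 1_S \cdot S = S$ and therefore $\ket{x} \in \sS_\ast$. Conversely the null distribution $\vec 0$ has all coefficients $0_S$, so $I_{\vec 0} = 0 \ne S$ (using the hypothesis $S \ne 0$), and thus $\vec 0 \notin \sS_\ast$, as required.

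The substantive step is the first clause: closure under tensor products and extraction of tensor factors. The key algebraic observation is that the coefficient ideal is multiplicative under $\ox$. Since the coefficients of $\ket\phi\ox\ket\psi$ are precisely the products $\phi_y\psi_x$, the ideal they generate is the product ideal:
\begin{equation}
  I_{\phi\ox\psi} \;=\; \sum_{y,x} \phi_y\psi_x S \;=\; I_\phi\, I_\psi .
\end{equation}
Now, given $\ket\psi \in \sS_\ast$, so that $I_\psi = S$, this yields $I_{\phi\ox\psi} = I_\phi I_\psi = I_\phi S = I_\phi$, whence $\ket\phi\ox\ket\psi \in \sS_\ast \iff I_\phi = S \iff \ket\phi \in \sS_\ast$; the factor order $\ket\psi\ox\ket\phi$ is handled identically since $S$ is commutative. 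I would remark that the general biconditional (without assuming either factor is full) also follows from the standard containment $I_\phi I_\psi \subseteq I_\phi \cap I_\psi$, which forces $I_\phi I_\psi = S \Rightarrow I_\phi = I_\psi = S$; but the computation $I_\phi S = I_\phi$ suffices here.

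The main obstacle is simply getting the multiplicativity identity $I_{\phi\ox\psi} = I_\phi I_\psi$ and the extraction direction right — confirming that a ``full'' tensor product really does force each factor to be full. This is where the ideal reformulation pays off, reducing the whole verification to an elementary fact about product ideals in a commutative ring, so I expect no genuine difficulty once the translation in the first step is made precise.
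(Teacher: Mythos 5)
Your proof is correct, and it takes a cleaner, more algebraic route than the paper's. The paper verifies the closure clause by explicitly manufacturing dual witnesses: for the tensor-product direction it pairs $\ket{\alpha}\ox\ket{\beta}$ with $\ket{\alpha'}\ox\ket{\beta'}$, and for the extraction direction it builds a witness for $\ket{\delta}$ by composing operators, namely $\ket{\delta'} = \bigl(\bra{\alpha}\ox\id\bigr)\bigl(\ket{\alpha'}\bra{\alpha}\ox\id\bigr)\ket{\psi'}$, and then checking $\bracket{\delta'}{\delta} = 1_S$ by a short chain of inner-product manipulations. You instead pass to the coefficient ideal $I_\psi$ (a reformulation the paper itself records in Definition~\ref{def:generalizedGeneric} but does not exploit in its appendix proof) and reduce both directions of the ``if and only if'' to the single identity $I_{\phi\ox\psi} = I_\phi I_\psi$ together with $I_\phi S = I_\phi$. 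Your translation step is sound: since conjugation is a self-inverse automorphism, hence a bijection fixing nothing essential, the existence of $\ket{\phi}$ with $\sum_x \overline{\phi_x}\psi_x = 1_S$ is exactly the statement $1_S \in I_\psi$, and finiteness of the support of $\ket{\psi}\in\sD$ keeps everything well-defined. What your approach buys is a one-line extraction argument in place of the paper's operator construction, plus (via $I_\phi I_\psi \subseteq I_\phi \cap I_\psi$) the slightly stronger observation that fullness of a product forces fullness of \emph{both} factors; what the paper's approach buys is that it never needs the ideal language and produces the dual witnesses explicitly, which is closer in spirit to how the analogous facts are verified for $\sS_1$ and $\sS_2$ in the neighbouring lemmas. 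Both arguments rely on commutativity of $S$, which the paper assumes throughout, so there is no gap on that score.
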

\begin{proof} We show that $\sS_\ast$ satisfies the criteria of Definition~\ref{def:stateSpace}:\\[-3ex]
	\begin{itemize}
	\item 
		Clearly $\sS_\ast$ contains $\ket{x}$ for each $x \in \{0,1\}^\ast$, and excludes $\vec 0$.
	\item
		Let $\ket{\alpha},\ket{\beta} \in \sS_\ast$\,, and let $\ket{\alpha'},\ket{\beta'} \in \sD$ satisfy $\bracket{\alpha'}{\alpha} = \bracket{\beta'}{\beta} = 1_S$.
		Then $\ket{\psi} = \ket{\alpha}\ox\ket{\beta} \in \sD$ and $\ket{\phi} = \ket{\gamma}\ox\ket{\delta} \in \sD$ satisfy $\bracket{\phi}{\psi} = \bracket{\alpha'}{\alpha} \bracket{\beta'}{\beta} = 1_S$, so that $\ket{\alpha} \ox \ket{\beta} \in \sS_\ast$\,.
	\item
	Finally, suppose that $\ket{\psi} = \ket{\alpha}\ox\ket{\delta} \in \sS_\ast$ for some $\ket{\alpha} \in \sS_\ast$ and $\ket{\delta} \in \sD$, and let $\ket{\psi'},\ket{\alpha'} \in \sD$ be such that $\bracket{\psi'}{\psi} = \bracket{\alpha'}{\alpha} = 1_S$.
	Consider the distribution
	\begin{equation} 
		\ket{\psi''} = \Bigl(\ket{\alpha'}\bra{\alpha} \ox \id\Bigr) \ket{\psi'},
	\end{equation}
	where $\id$ is an identity operation of the correct width to make this well-defined.
	By hypothesis, we have
	\begin{equation}
	  \bracket{\psi''}{\psi}
	  \;=\;
	  \bra{\psi'} \Bigl( \ket{\alpha}\bra{\alpha'} \ox \id \Bigr) \ket{\alpha}\ket{\delta}
	  \;=\;
	  \bra{\psi'} \Bigl( \ket{\alpha} \ox \ket{\delta} \Bigr)
	  \;=\;
		1_S;
	\end{equation}
	then $\ket{\delta'} = \bigl(\bra{\alpha} \ox \id\bigr) \ket{\psi''}$ satisfies $\bracket{\delta'}{\delta} = 1_S$, so that $\ket{\delta} \in \sS_\ast$ as well. \qedhere
	\end{itemize}
\end{proof}
\begin{lemma}
	\label{lemma:ell-1-stateSpace}
  For a ring $S \ne 0$, let $\sS_1$ be the set of distributions $\ket{\psi} \in \sD$ for which $\sum\limits_x \psi_x = 1_S$.
  Then $\sS_1$ is a state-space.
\end{lemma}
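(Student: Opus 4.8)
The plan is to verify directly that $\sS_1$ satisfies the two clauses of Definition~\ref{def:stateSpace}. The central tool is the \emph{coefficient-sum functional} $\Sigma: \sD \to S$, defined on each homogeneous component by $\Sigma(\ket\psi) = \sum_{x \in \{0,1\}^n} \psi_x$ for $\ket\psi \in \sB\sox{n}$, so that by definition $\sS_1 = \{ \ket\psi \in \sD : \Sigma(\ket\psi) = 1_S \}$. The one fact I would establish first is that $\Sigma$ is \emph{multiplicative under tensor products}: for $\ket\alpha \in \sB\sox{m}$ and $\ket\beta \in \sB\sox{n}$,
\begin{equation}
  \Sigma(\ket\alpha \ox \ket\beta) = \sum_{x,y} \alpha_x \beta_y = \Bigl(\sum_x \alpha_x\Bigr)\Bigl(\sum_y \beta_y\Bigr) = \Sigma(\ket\alpha)\,\Sigma(\ket\beta),
\end{equation}
which follows immediately from the coefficientwise formula $(\alpha \ox \beta)_{x,y} = \alpha_x \beta_y$ and distributivity in $S$.

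Given this, the basis-state and non-degeneracy clauses are immediate: each $\ket{x}$ has a single nonzero coefficient equal to $1_S$, so $\Sigma(\ket{x}) = 1_S$ and $\ket{x} \in \sS_1$; and $\Sigma(\vec 0) = 0_S \ne 1_S$ since $S \ne 0$, so $\vec 0 \notin \sS_1$. For the closure clause, I would take $\ket\psi \in \sS_1$ and $\ket\phi \in \sD$ and compute, using multiplicativity together with $\Sigma(\ket\psi) = 1_S$,
\begin{equation}
  \Sigma(\ket\phi \ox \ket\psi) = \Sigma(\ket\phi)\,\Sigma(\ket\psi) = \Sigma(\ket\phi) = \Sigma(\ket\psi)\,\Sigma(\ket\phi) = \Sigma(\ket\psi \ox \ket\phi).
\end{equation}
Hence $\ket\phi \ox \ket\psi \in \sS_1$ (equivalently $\ket\psi \ox \ket\phi \in \sS_1$) holds precisely when $\Sigma(\ket\phi) = 1_S$, that is, precisely when $\ket\phi \in \sS_1$, giving the required biconditional.

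There is no serious obstacle here; the argument is routine once the multiplicativity of $\Sigma$ is in hand. The only point worth emphasizing is \emph{why the normalization is chosen to be exactly $1_S$}: the extraction (``only if'') direction of the closure clause uses that $\Sigma(\ket\psi)$ is the multiplicative \emph{identity}, so that it can be cancelled from the product $\Sigma(\ket\phi)\Sigma(\ket\psi)$ without invoking any inverse in $S$. This is what lets the proof go through for an arbitrary nonzero ring $S$, with no assumption that elements of $S$ have multiplicative inverses, matching the role of $\sS_1$ and its affine transformations elsewhere in the paper.
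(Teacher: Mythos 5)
Your proposal is correct and follows essentially the same route as the paper's proof: both verify the basis-state and $\vec 0$ clauses directly, and both rest on the factorization $\sum_x \psi_x = \bigl(\sum_y \alpha_y\bigr)\bigl(\sum_z \beta_z\bigr)$ for a tensor product, using that the normalization $1_S$ is the multiplicative identity to get the extraction direction without needing inverses in $S$. Packaging the coefficient sum as a multiplicative functional $\Sigma$ is a tidy but inessential reorganization of the same argument.
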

\begin{proof}
  We show that $\sS_1$ satisfies the criteria of Definition~\ref{def:stateSpace}:\\[-3ex]
  \begin{itemize}
  \item
		Clearly $\sS_1$ contains $\ket{x}$ for each $x \in \{0,1\}^\ast$, and excludes $\vec 0$.
	\item
		Let $\ket{\alpha},\ket{\beta} \in \sS_1$.
		Then $\ket{\psi} = \ket{\alpha} \ox \ket{\beta} \in \sD$ satisfies
		\begin{equation}
		  \sum_{\mathclap{x \in \{0,1\}^\ast}} \;\psi_x \;=\;\; \sum_{\mathclap{y,z \in \{0,1\}^\ast}} \; \alpha_y \beta_z \;=\; \biggl(\sum_{y \in \{0,1\}^\ast} \!\!\!\alpha_y\biggr) \biggl(\sum_{z \in \{0,1\}^\ast} \!\!\!\beta_z\biggr) \;=\; 1_S:
		\end{equation}
		where this factorization follows because only finitely many of the $\alpha_y$ and $\beta_z$ are non-zero.
	\item
		Finally, suppose that $\ket{\psi} = \ket{\alpha}\ox\ket{\delta} \in \sS_1$ for some $\ket{\alpha} \in \sS_1$.
		Then 
		\begin{equation}
				\sum_{z \in \{0,1\}^\ast} \!\!\!\delta_z
			\;=\;
				\biggl(\sum_{y \in \{0,1\}^\ast} \!\!\!\alpha_y\biggr) \biggl(\sum_{z \in \{0,1\}^\ast} \!\!\!\delta_z\biggr)
			\;=\;
				\sum_{\mathclap{x \in \{0,1\}^\ast}} \;\psi_x 
			\;=\;
				1_S,
		\end{equation}
		so that $\ket{\delta} \in \sS_1$ as well. \qedhere
  \end{itemize}

\end{proof}

\begin{lemma}
	\label{lemma:ell-2-stateSpace}
  For a ring $S \ne 0$, let $\sS_2$ be the set of distributions $\ket{\psi} \in \sD$ for which $\bracket{\psi}{\psi} = 1_S$.
  Then $\sS_2$ is a state-space.
\end{lemma}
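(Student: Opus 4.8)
The plan is to verify that $\sS_2$ satisfies the two requirements of Definition~\ref{def:stateSpace}, following exactly the template of Lemmas~\ref{lemma:genericStateSpace} and~\ref{lemma:ell-1-stateSpace}. The only technical ingredient I need is that the squared norm $\bracket{\psi}{\psi}$ is \emph{multiplicative} under tensor products; everything else then drops out formally. The easy clause comes first: since $\bracket{x}{x} = 1_S$ for every $x \in \{0,1\}^\ast$, each computational basis state lies in $\sS_2$; and since $S \neq 0$ we have $1_S \neq 0_S$, so that $\bracket{\vec 0}{\vec 0} = 0_S \neq 1_S$ shows $\vec 0 \notin \sS_2$.

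The heart of the argument is the following computation. For $\ket{\phi} \in \sB\sox a$ and $\ket{\psi} \in \sB\sox b$, each finitely supported (as elements of $\sD$ automatically are), I would expand the inner product of Eqn.~\eqref{eqn:interestingInnerProduct}:
\begin{equation*}
  \bracket{\phi\ox\psi}{\phi\ox\psi}
  = \sum_{y,z} \overline{\phi_y \psi_z}\,(\phi_y \psi_z)
  = \sum_{y,z} \bigl(\overline{\phi_y}\,\phi_y\bigr)\bigl(\overline{\psi_z}\,\psi_z\bigr)
  = \biggl(\sum_y \overline{\phi_y}\,\phi_y\biggr)\biggl(\sum_z \overline{\psi_z}\,\psi_z\biggr)
  = \bracket{\phi}{\phi}\,\bracket{\psi}{\psi},
\end{equation*}
using that conjugation is a ring homomorphism to split $\overline{\phi_y\psi_z} = \overline{\phi_y}\,\overline{\psi_z}$, commutativity of $S$ to regroup the factors, and finiteness of the supports to factor the double sum into a product of single sums (exactly as in Lemma~\ref{lemma:ell-1-stateSpace}).

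Given this identity, the closure condition follows immediately. If $\ket{\psi} \in \sS_2$, so that $\bracket{\psi}{\psi} = 1_S$, then for any $\ket{\phi} \in \sD$ the identity gives $\bracket{\phi\ox\psi}{\phi\ox\psi} = \bracket{\phi}{\phi}$; hence $\ket{\phi}\ox\ket{\psi} \in \sS_2$ if and only if $\bracket{\phi}{\phi} = 1_S$, that is, if and only if $\ket{\phi} \in \sS_2$. The statement for $\ket{\psi}\ox\ket{\phi}$ is identical by the symmetry of the product formula. This is precisely the ``if and only if'' closure requirement of Definition~\ref{def:stateSpace}, completing the verification.

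The only place requiring care — the ``main obstacle,'' though it is a minor one — is the multiplicativity step: it relies essentially on the conjugation being a ring automorphism and on $S$ being commutative, so that the cross terms $\overline{\phi_y}\,\overline{\psi_z}\,\phi_y\,\psi_z$ regroup as $(\overline{\phi_y}\phi_y)(\overline{\psi_z}\psi_z)$. In the bilinear case ($\overline s = s$) this regrouping is trivial; in the genuinely sesquilinear case one must invoke $\overline{ab} = \overline a\,\overline b$ explicitly. Note that no positivity or nondegeneracy of the form is used anywhere, so the argument goes through verbatim for any ring $S \neq 0$ equipped with a conjugation operation.
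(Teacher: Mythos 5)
Your proof is correct and follows essentially the same route as the paper's: both arguments reduce to the multiplicativity identity $\bracket{\alpha\ox\beta}{\alpha\ox\beta} = \bracket{\alpha}{\alpha}\bracket{\beta}{\beta}$, which the paper asserts without computation and you verify explicitly, and from which both closure directions of Definition~\ref{def:stateSpace} follow at once. The extra care you take with the sesquilinear case (splitting $\overline{\phi_y\psi_z}$ via the automorphism property and using commutativity of $S$) is a welcome elaboration but not a departure from the paper's argument.
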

\begin{proof} We show that $\sS_2$ satisfies the criteria of Definition~\ref{def:stateSpace}:\\[-3ex]
	\begin{itemize}
	\item 
		Clearly $\sS_2$ contains $\ket{x}$ for each $x \in \{0,1\}^\ast$, and excludes $\vec 0$.
	\item
		Let $\ket{\alpha},\ket{\beta} \in \sS_2$\,.
		Then $\ket{\psi} = \ket{\alpha}\ox\ket{\beta} \in \sD$ satisfies $\bracket{\psi}{\psi} = \bracket{\alpha}{\alpha} \bracket{\beta}{\beta} = 1_S$, so that $\ket{\alpha} \ox \ket{\beta} \in \sS_2$\,.
	\item
	Finally, suppose that $\ket{\psi} = \ket{\alpha}\ox\ket{\delta} \in \sS_2$ for some $\ket{\alpha} \in \sS_2$.
	Then $\bracket{\delta}{\delta} = \bracket{\alpha}{\alpha} \bracket{\delta}{\delta} = \bracket{\psi}{\psi} = 1_S$, so that $\ket{\delta} \in \sS_2$ as well. \qedhere
	\end{itemize}
\end{proof}

\subsection{Lemma concerning valid transformations}

The following Lemma justifies Proposition~\ref{lemma:validTransformations} (page~\pageref{lemma:validTransformations}):

\begin{lemma}
\label{lemma:validTransformations-a}
Let $S$ be a finite commutative ring with $\Char(S) = k > 0$.
\begin{romanum}
\item
	The valid transformations of $\sS_\ast$ are all left-invertible transformations of $\mathscr D$.
\item
	The valid transformations of $\sS_1$ are all transformations $T: \mathscr D \to \mathscr D$ for which $\sum\limits_y \bra{y} T \ket{x} = 1_S$ for each $x \in \{0,1\}^\ast$.
\item
	If $S$ is a Galois ring with character $k = p^r$, where $p \ge 2$ is prime and $r \ge 1$, the valid transformations of $\sS_2$ are all transformations $T: \sD \to \sD$ such that $T\herm T \equiv \id_\sD \pmod{p^{\lceil r/2 \rceil}}$ if $p$ is odd, or $T\herm T \equiv \id_\sD \pmod{2^{\lceil (r-1)/2 \rceil}}$ if ${p = 2}$.
	For $S$ any finite field or cyclic ring of odd order, we in fact have $T\herm T = \id_\sD$.
	Conversely, all operators $T: \sD \to \sD$ for which $T\herm T = \id_\sD$ are valid transformations of $\sS_2$.
\end{romanum}
\end{lemma}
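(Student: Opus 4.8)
The plan is to treat the three parts separately, since they concern different state spaces, and to isolate the $\ell_2$ case as the one carrying essentially all the difficulty. Part~\parit{ii} is immediate in both directions: every standard basis vector $\ket{i}$ lies in $\sS_1$, so validity forces $\sum_y \bra y T\ket{i} = 1_S$, which is exactly the stated column-sum condition; conversely, writing $\bra s = \sum_y \bra y$ for the all-ones covector, that condition reads $\bra s T = \bra s$, whence $\bra s T\ket\psi = \bracket s\psi$ preserves $\sS_1$, and since $\bra s$ factorises as $\bra s \ox \bra s$ across a tensor product we also get $\bra s(T\ox\id) = \bra s$, giving stability under tensoring with the identity. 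For part~\parit{i} one direction is equally formal: if $LT = \id_\sD$ then $L\ox\id$ left-inverts $T\ox\id$, so it suffices to observe that for $\ket\psi\in\sS_\ast$ with $\bracket\phi\psi = 1_S$ the functional $\bra\phi L$ (represented by some $\ket{\phi'}\in\sD$, as conjugation is an automorphism) satisfies $\bracket{\phi'}{T\psi} = \bracket\phi\psi = 1_S$, so $T\ket\psi\in\sS_\ast$.

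The substantive direction of~\parit{i} is that mere preservation of $\sS_\ast$ implies left-invertibility. Recalling that $\ket\psi\in\sS_\ast$ exactly when its coefficient vector is unimodular (generates the unit ideal), the claim becomes that a linear map carrying unimodular vectors to unimodular vectors splits. I would prove this by reduction to the local case: a finite commutative ring decomposes as a finite product of local rings $S\cong\prod_t S_t$, unimodularity is detected factorwise, and over a local ring $S_t$ with residue field $S_t/\mathfrak m_t$ a unimodular-preserving $T_t$ reduces to an injective map of vector spaces, so some maximal minor of $T_t$ is a unit, which yields an explicit left inverse; the product of these left inverses finishes the argument. The differences $\ket i - \ket j\in\sS_\ast$, which force $T\ket i\ne T\ket j$, already show injectivity on the standard basis and are the seed of this argument.

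Part~\parit{iii} is the crux. Sufficiency is easy: if $T\herm T = \id_\sD$ then $\bracket{T\psi}{T\psi} = \bracket\psi\psi$ and $(T\ox\id)\herm(T\ox\id) = \id_\sD$, so $T$ is valid. The content is the necessary condition on the Hermitian operator $N := T\herm T - \id_\sD$: validity says precisely that $\bra\psi(N\ox\idop\sox n)\ket\psi = 0$ for every unit-norm $\ket\psi$ and every $n$. Testing $\ket\psi = \ket i$ gives $N_{ii} = 0$, and then I would exploit the tensor freedom: writing a unit vector of the enlarged space as $\sum_i \ket i\ox\ket{w_i}$ turns the condition into $\sum_{i,j} N_{ij}\bracket{w_i}{w_j} = 0$ for every family $(w_i)$ with $\sum_i \bracket{w_i}{w_i} = 1_S$, so $N$ must pair to zero with every trace-one Hermitian Gram matrix. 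The realisable cross entries $c = \bracket{w_i}{w_j}$ are governed by the norm form of the Galois ring, and here I would invoke the same Lagrange-type representability (four, resp.\ sixteen, squares) used in Lemma~\ref{lemma:orthogonalBranchingOptor} to realise enough Gram matrices in a large enough tensor factor. The residual constraint $N_{ij}\bar c + \overline{N_{ij}}\,c = 0$ over all achievable $c$, together with the fact that isotropic scalars ($\bar c c = 0$) fill $p^{\lceil r/2\rceil}S$ (since $z^2\equiv 0\pmod{p^r}$ iff $z\in p^{\lceil r/2\rceil}S$), forces $N\equiv 0\pmod{p^{\lceil r/2\rceil}}$ for odd $p$, the exponent dropping to $\lceil(r-1)/2\rceil$ when $p=2$ because of the failure of division by $2$ in the polarisation.

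I expect this off-diagonal step to be the main obstacle: determining exactly which scalars arise as cross Gram entries under the trace-one constraint, and tracking how the characteristic interacts with the conjugation, is where the precise exponents $\lceil r/2\rceil$ and $\lceil(r-1)/2\rceil$ are won or lost. Finally I would record the two sharpenings. For a finite field the character is $p$, so $r=1$ and $pS = 0$, whence the congruence $N\equiv 0\pmod p$ is outright equality $N = 0$; and for a cyclic ring of odd order the off-diagonal test can be pushed further, choosing $w_i,w_j$ that realise a \emph{unit} cross entry $c$, so that $2N_{ij}c = 0$ with $2$ and $c$ units forces $N_{ij}=0$, again giving $T\herm T = \id_\sD$ exactly. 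The characteristic-two fields are genuinely exceptional — over $\F_2$ one has $\sS_2 = \sS_1$ and the valid maps are only the column-sum-one maps of part~\parit{ii} — which I would flag as precisely why the equality sharpening is stated for the odd case.
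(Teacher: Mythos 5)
Your parts \parit{i} and \parit{ii} are fine. For \parit{ii} both directions are the same one-line computations as in the paper. For \parit{i} you take a genuinely different route: the paper proves injectivity (and splits the map) by an explicit iterative row-reduction on the matrix of $T$ restricted to the relevant supports, whereas you decompose the finite ring into local factors, observe that a unimodular-preserving map must reduce to an injective map over each residue field, and extract a unit maximal minor to build the left inverse. That argument is correct and arguably cleaner; the paper's version has the advantage of being self-contained and constructive, yours of explaining \emph{why} the statement is true (unimodularity is a residue-field condition).

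Part \parit{iii} contains a genuine gap, and it sits exactly where you say you expect trouble. Your reformulation --- validity of $T$ is the vanishing of $\sum_{i,j} N_{ij}\bracket{w_i}{w_j}$ over all families with $\sum_i\bracket{w_i}{w_i}=1_S$, where $N = T\herm T - \id_\sD$ --- is correct and is equivalent to what the paper does, and the diagonal step $N_{ii}=0$ and the sufficiency direction are fine. But you never derive the congruence: you reduce everything to ``which cross entries $c=\bracket{w_i}{w_j}$ are achievable'' and then leave that question open. The paper answers it with two concrete test states that exploit $\Char(S)=k$, i.e.\ $(k+1)1_S = 1_S$. The first, $\ket{\sigma} = \bigl(\ket{x}\ket{0}+\cdots+\ket{x}\ket{k-1}\bigr)+\ket{y}\ket{0}$, has norm $(k+1)1_S = 1_S$ and realises $c=1_S$ with the trace constraint satisfied, giving $\varepsilon+\bar\varepsilon=0_S$ for $\varepsilon = \bra{y}T\herm T\ket{x}$ --- this much you have. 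The decisive second state is \emph{self-referential}: $\ket{\psi} = \varepsilon\bigl(\ket{x}\ket{1}+\cdots+\ket{x}\ket{k-1}\bigr)+(1_S+\varepsilon)\ket{y}\ket{1}$, whose normalisation uses $\bar\varepsilon=-\varepsilon$ and whose cross Gram entry is $\bar\varepsilon(1_S+\varepsilon) = -\varepsilon-\varepsilon^2$, so that the (a priori linear) pairing constraint becomes \emph{quadratic} in $\varepsilon$ and yields $2\varepsilon^2 = 0_S$. Only then does the zero-divisor structure you cite ($z^2 \in p^rS$ forces $z \in p^{\lceil r/2\rceil}S$, losing a factor when $p=2$ because one cannot divide by $2$) produce the exponents $\lceil r/2\rceil$ and $\lceil(r-1)/2\rceil$. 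Nothing in your sketch generates a constraint quadratic in $N_{ij}$, and the Lagrange four-squares representability you invoke is the tool of Lemma~\ref{lemma:orthogonalBranchingOptor}, not of this one --- it does not supply the missing Gram matrices here. Your shortcut for cyclic rings of odd order ($c=1_S$ achievable, conjugation trivial, $2$ a unit, hence $N_{ij}=0$ outright) is correct and recovers the exact-equality clause in that case, as does your observation that characteristic-two fields trivialise; but the general congruence for $r>1$ and for nontrivial conjugations is exactly the part your argument does not reach.
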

\begin{proof}
  For part \parit{i}, suppose that $T$ is a valid transformation for $\sS_\ast$, and suppose that $T \ket{\psi} = \vec 0$ for some distribution $\ket{\psi} \in \sD$. 
  Consider $A = \supp(\ket{\psi})$, the set of strings $x \in \{0,1\}^\ast$ for which $\psi_x \ne 0$, with a finite enumeration $A = \{\alpha_1, \alpha_2, \ldots, \alpha_{|A|}\} \subset \{0,1\}^\ast$ in lexicographic order.
  (Note that $A$ will be finite: as $\sD$ is defined as a direct sum in Definition~\ref{def:distributionSpace}, $\ket{\psi}$ has only finitely many non-zero coefficients.)
  Similarly, let
  \begin{equation}    
		B = \bigcup_{x \in A} \supp(T\ket{x})  \,,
  \end{equation}
	with an enumeration $B = \{\beta_1, \beta_2, \ldots, \beta_{|B|}\}$.
  Consider a matrix $M\sur{0}: S^{|A|} \to S^{|B|}$ with coefficients $M\sur{0}_{h,j} = \bra{\beta_h} T \ket{\alpha_j}$.
	We extend this matrix iteratively by adding rows and performing row-reductions, as follows:
	\begin{itemize}
	\item
		For each $\ell \ge 1$, let $M^{\prime\,(\ell)}$ be the matrix $M\sur{\ell-1}$ extended by one row, such that
		\begin{equation}
			M^{\:\!\prime\,(\ell)}_{|B|+\ell,j} = \bra{\phi_\ell} M\sur{\ell-1} \ket{j},   
		\end{equation}
		for any $\ket{\phi_\ell}$ subject to $\bra{\phi_\ell} M\sur{\ell-1} \ket{\ell} = 1$.
		Then, in particular, $M^{\prime\,(\ell)}_{|B|+\ell,\ell} = 1$.
	\item
		Let $M\sur{\ell}$ be the matrix obtained from $M^{\prime\,(\ell)}$ by elementary row operations, in which $M\sur{\ell}_{j,j} = 1$ for each $1 \le j \le \ell$, and all other coefficients of the first $\ell$ columns are zero.
	\end{itemize}
	This ultimately yields a matrix $M\sur{|A|}$ which consists of an $|A| \x |A|$ identity matrix, together with $|B|$ additional rows of zeros.	
	Furthermore, the operations of adjoining rows to each $M\sur{\ell-1}$ and the row operations on each $M^{\prime\,(\ell)}$ involve only left-invertible operations.
	Together, these operations compose to give a left-invertible transformation $L$ such that $LM\sur{0} = M\sur{|A|}$.
	As $\Null(M\sur{|A|}) = \vec 0$, it follows that $\Null(M\sur0) = \vec 0$ as well, so that $\ket{\psi} = \vec 0$.
	Then $\Null(T) = \vec 0$, so that $T$ itself is left-invertible.

	For \parit{ii}, it is clear that if $T: \sD \to \sD$ is a valid transformation, then $\sum_y \bra{y} T \ket{x} = 1_S$ for each $x \in \{0,1\}^\ast$.
	For the converse, suppose that the above equality holds for all $x \in \{0,1\}^\ast$, and let $\ket{\psi} \in \sS_1$.
	Then we have
	\begin{equation}
	\begin{aligned}[b]
	  \sum_{\mathclap{y \in \{0,1\}^\ast}} \;\bra{y} T \ket{\psi}
	\;&=\;
	  \sum_{\mathclap{y \in \{0,1\}^\ast}} \;\bra{y} \biggl( \sum_{x \in \{0,1\}^\ast} \!\!\!\! \psi_x \,T \ket{x} \biggr)
	\\&=\;
	  \sum_{\mathclap{x \in \{0,1\}^\ast}} \;\psi_x \;\biggl(\sum_{y \in \{0,1\}^\ast} \!\!\!\bra{y} T \ket{x} \biggr)
	 \;=\;
	  \sum_{\mathclap{x \in \{0,1\}^\ast}} \;\psi_x \cdot 1_S
	 \;=\;
	  1_S,
	\end{aligned}
	\end{equation}
	where once again we may exchange the sums because the number of non-zero terms in each case is finite.
	Thus $T$ preserves $\sS_1$.

	For \parit{iii}, 
	let $k = p^r$ for some prime $p$ and some $r \ge 1$.
	Let $T$ be a valid transformation of $\sS_2$: then we have $\bra{x} T\herm T \ket{x} = 1$ for all $x \in \{0,1\}^\ast$.
	Let $x,y \in \{0,1\}^n$ be distinct strings for any $n \ge 1$, and write $\ket{\varphi_x} = T\ket{x}$ and $\ket{\varphi_y} = T\ket{y}$ for the sake of brevity.
	Let $\varepsilon = \bra{y} T\herm T \ket{x}$: then $\bar\varepsilon = \bigl( \bra{y} T\herm T \ket{x} \bigr){}\herm = \bra{x} T\herm T \ket{y}$.
	Define a distribution $\ket{\sigma}	\in \sB\sox{n+\:\!\!\lceil \log(k) \rceil}$ by
  \begin{align}
    \ket{\sigma}	\,&=	\;\!\Bigl(\ket{x}\;\!\!\ket{0} + \ket{x}\;\!\!\ket{1} + \ket{x}\;\!\!\ket{2} + \cdots + \ket{x}\;\!\!\ket{k-1} \Bigr)\;\!\! + \ket{y}\;\!\!\ket{0} ,
  \end{align}
 	where the second tensor factor represents integers in binary.
  It is easy to verify that $\bracket{\sigma}{\sigma} = (k+1)1_S = 1_S$, so $\ket{\sigma}\in \sS_2$.
  Then we have
  \begin{align}
		\!\!
				1_S \,&=\, \bra{\sigma}\bigl(T\herm \ox \idop\bigr)\bigl(T \ox \idop\bigr)\ket{\sigma}
      \notag\\&=\,
				k\bracket{\varphi_x}{\varphi_x} + \bracket{\varphi_x}{\varphi_y} + \bracket{\varphi_y}{\varphi_x} + \bracket{\varphi_y}{\varphi_y}
      \,=\,
				1_S + \varepsilon + \bar\varepsilon ,
		\!\!
  \end{align}
  so that $\bar\varepsilon = -\varepsilon$.
  Then, consider the distribution $\ket{\psi}	\in \sB\sox{n+\:\!\!\lceil \log(k) \rceil}$ given by
  \begin{equation}
    \ket{\psi}	\,=\, \varepsilon\Bigl(\ket{x}\;\!\!\ket{1} + \ket{x}\;\!\!\ket{2} + \ket{x}\;\!\!\ket{3} + \cdots + \ket{x}\;\!\!\ket{k-1} \Bigr) + (1_S+\varepsilon)\ket{y}\;\!\!\ket{1}.    
  \end{equation}
  As $\bar\varepsilon = -\varepsilon$, one may verify that $\bracket{\psi}{\psi} = (k-1) \bar\varepsilon \varepsilon + (1_S+\bar\varepsilon\varepsilon) = 1_S$, so that $\ket{\psi} \in \sS_2$.
  Then we have
	\begin{equation}
  \begin{aligned}[b]
				1_S &= \bra{\psi}\bigl(T\herm \!\;{\ox}\;\! \idop\bigr)\bigl(T \!\;{\ox}\;\! \idop\bigr)\ket{\psi}
      \\[1ex]&=
				(k-1)\bar\varepsilon\varepsilon \bracket{\varphi_x}{\varphi_x} + \bar\varepsilon(1_S+\varepsilon)\bracket{\varphi_x}{\varphi_y}
			\\&\phantom{{}= (k-1)\bar\varepsilon\varepsilon \bracket{\varphi_x}{\varphi_x}}
				+ \varepsilon(1_S+\bar\varepsilon)\bracket{\varphi_y}{\varphi_x} + (1_S+\bar\varepsilon\varepsilon)\bracket{\varphi_y}{\varphi_y}
      \\[1ex]&=
				(k-1)\bar\varepsilon\varepsilon + (1_S+\varepsilon)\bar\varepsilon\varepsilon + (1_S-\varepsilon)\bar\varepsilon\varepsilon + (1_S + \bar\varepsilon\varepsilon)
			\\[1ex]&=
				(k+2)\bar\varepsilon\varepsilon + 1_S \;=\; 1_S - 2 \varepsilon^2,
  \end{aligned}
 	\end{equation}
 	so that $2\varepsilon^2 = 0_S$.
 	If $k = 2^r$ is even ($p = 2$), this implies that $\varepsilon^2 \in 2^{r-1} S$, due to the structure of the zero divisors in the Galois ring $S$.
 	Then $\varepsilon \in 2^{\lceil (r-1)/2 \rceil} S$, or equivalently $\varepsilon \equiv 0 \pmod{2^{\lceil (r-1)/2 \rceil}}$.
 	Otherwise, if $k$ is odd ($p \ne 2$), $2$ has a multiplicative inverse; we then have $\varepsilon^2 = 0_S$.
 	Again, due to the structure of the zero divisors in $S$, we have $\varepsilon = p^{\lceil r/2 \rceil} S$, or $\varepsilon \equiv 0 \pmod{p^{\lceil r/2 \rceil}}$.
 	Thus, for all $x,y \in \{0,1\}^\ast$, we have $\bra{x} T\herm T \ket{y} \equiv 0 \pmod{p^\tau}$, where $\tau = \lceil (r-1)/2 \rceil$ if $p = 2$ and $\tau = \lceil r/2 \rceil$ otherwise.
\end{proof}
\noindent

\paragraph{Remark.}
In the statement of \parit{iii} above, rings for which $\Char(S) = 2$ are an important special case in which the Lemma trivializes: we have $T\herm T \equiv \idop_\sD \pmod{2^0}$, which imposes no constraints on the difference $T\herm T - \idop_\sD$.
For instance, in the case of the field $\F_2$, it is easy to show that $\bracket{\psi}{\psi} = \sum_x \psi_x$ for $\ket{\psi} \in \mathscr D$ when $S = \F_2$.
Then $\sS_1 = \sS_2$ in this case, and the non-invertible $\textrm{\small ERASE} = \begin{bmatrix} 1 & 1 \end{bmatrix}$ operation is valid despite not being unitary.
On the other hand, if $k > 2$ is itself prime, we have $\bra{x} T\herm T \ket{y} = 0_S$ for $x \ne y$, as $\lceil r/2 \rceil = r = 1$ in that case.
For such $k$ we then have $T\herm T = \id_\sD$, without any congruences.

\section{\EQP\ versus quantum meta-algorithms}
\label{apx:EQPthesis}

Bernstein and Vazirani~\cite{BV97} defined the class $\EQP$ as those problems which can be solved exactly on a quantum Turing machine in polynomial time.
Quantum Turing machines are a variation on the randomized Turing machine, but where transitions are described by (computable) $\ell_2$-normalized distributions over $\C$ rather than $\ell_1$-normalized distributions over $\R_+$.
Bernstein and Vazirani~\cite{BV97} also show that $\EQP$ is equivalent to the set of problems which (in the language of Section~\ref{sec:modalCircuitComplexity}) are solvable by polytime-uniform circuit families with constant-time specifiable gates: \ie~all circuits in the family may be constructed from a single finite gate set.

Many standard models of computation (\eg~depth-bounded boolean logic circuits) can be adequately defined with a finite gate-set.
We argue 
that limiting the theory of quantum algorithms to circuits constructed from finite (or \emph{constant size}) gate sets is an undue restriction.
Our main contention is that restricting to finite gate-sets introduces a distinction between quantum algorithms and quantum ``meta-algorithms''.
Such a distinction does not exist in classical computation; introducing it for quantum computation neither reflects the practical aspirations for building a quantum computer, nor the purpose of research in computational theory in general.

We argue for \emph{polynomial-time specifiability} of gates in Section~\ref{sec:modalCircuitComplexity} as a reasonable constraint on circuit families in the study of quantum computation, as well as other indeterministic models.
In analogy to Definition~\ref{def:modalCircuitComplexity}, we may define $\UnitaryP[\C]$ to be the analogue of $\EQP$ in which polytime-specifiable gate-sets are allowed rather than only constant-time specifiable gate-sets. %
We advocate $\UnitaryP[\C]$ as a class whose study may bear more fruit than the study of $\EQP$ has.

\subsection{On $\EQP$ versus $\UnitaryP[\C]$}

We first remark on our comment on page~\pageref{discn:EQP-nonrobust} the relation between $\EQP$ and $\UnitaryP[\C]$, in which we predict the following:
\begin{conjecture*}
  The containment $\EQP \subseteq \UnitaryP[\C]$ is strict.
\end{conjecture*}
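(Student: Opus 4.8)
The plan is to separate the classes at the level of \emph{amplitude arithmetic}: a circuit family over a \emph{fixed} finite gate set can only produce amplitudes lying in a single finitely generated subfield of $\C$, whereas a polynomial-time specifiable family may use a different number field at each input length. The first ingredient is an amplitude-confinement lemma. For any finite gate set $\mathcal G = \{G_1,\dots,G_t\}$ (together with preparation of $\ket{0}$), let $R_{\mathcal G} \subset \C$ be the subring generated by all entries of the $G_j$, and $F_{\mathcal G}$ its fraction field. A straightforward induction on circuit size shows that every amplitude $\bra{y} C \ket{0^m}$ produced by a circuit $C$ over $\mathcal G$ lies in $R_{\mathcal G}$; hence $F_{\mathcal G}$ is finitely generated over $\Q$, and the exactness condition ``the answer bit is necessarily $L(x)$'' becomes the vanishing, \emph{inside the fixed field} $F_{\mathcal G}$, of the total amplitude carried on the wrong answer.

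The second ingredient is the classical fact that a finitely generated extension $F/\Q$ contains only finitely many roots of unity: the relative algebraic closure of $\Q$ in $F$ is a number field, and number fields have finite groups of roots of unity. I would then engineer a separating language $L$ so that deciding it exactly at input length $n$ ``demands'' a primitive $p_n$-th root of unity $\zeta_{p_n}$, where $p_n$ is the $n$-th prime. Concretely, I would take $L$ to be a padded, self-describing problem whose length-$n$ instances are verified by a quantum-Fourier-transform-mod-$p_n$ subroutine, arranged so that for \emph{yes} instances a cyclotomic interference sum of powers of $\zeta_{p_n}$ is nonzero while for \emph{no} instances it vanishes identically. Since $\zeta_{p_n}$ and the mod-$p_n$ Fourier transform are $\poly(n)$-specifiable, this circuit family witnesses $L \in \UnitaryP[\C]$. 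The payoff of the roots-of-unity fact is that it \emph{uniformly} defeats the a priori uncountability of possible finite gate sets, so no diagonalization over gate sets is required: if a single $\mathcal G$ decided $L$ exactly on infinitely many lengths, then $F_{\mathcal G}$ would have to contain $\zeta_{p_n}$ for infinitely many primes, contradicting finite generation.

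The main obstacle is precisely the step that makes the last sentence rigorous, and it is where the construction of $L$ must do real work. A vanishing constraint on a single rejecting amplitude is, on its face, ``easy'' to satisfy in many fields, so exact decidability does not automatically ``reveal'' a root of unity. I would therefore build $L$ so that exact vanishing of the wrong-answer amplitude is \emph{algebraically equivalent} to the inclusion $\Q(\zeta_{p_n}) \subseteq F_{\mathcal G}$ --- for instance by forcing the penultimate state to be a nontrivial cyclotomic superposition whose exact collapse to a standard-basis answer is realizable over a subfield $F$ only when $F$ already contains the full cyclotomic field $\Q(\zeta_{p_n})$. Securing this reduction from ``exact decision'' to ``$\zeta_{p_n}\in F_{\mathcal G}$'' is the crux; I expect it to require choosing the verifier so that its only $F$-rational exact fixed points are the cyclotomically saturated ones, after which the finiteness-of-roots-of-unity lemma closes the argument and yields $\EQP \subsetneq \UnitaryP[\C]$.
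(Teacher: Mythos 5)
The statement you are trying to prove is stated in the paper as a \emph{conjecture}, and the paper deliberately does not prove it: it only offers the heuristic evidence that the family of quantum Fourier transforms $F_{2^n}$ has coefficients of unbounded algebraic degree and therefore cannot all be generated exactly by any single finite gate set. Your amplitude-confinement lemma is essentially the same observation (and is correct as far as it goes): every amplitude produced by a circuit over a fixed finite gate set $\mathcal G$ lies in a finitely generated extension $F_{\mathcal G}$ of $\Q$, which contains only finitely many roots of unity. The paper stops there precisely because this only rules out a \emph{direct simulation} of the cyclotomic algorithm, not an arbitrary exact decider for the same language.

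That is exactly where your argument has a genuine gap, and it is not a repairable one with current techniques. To conclude $L \notin \EQP$ you must rule out \emph{every} polytime-uniform circuit family over \emph{every} finite gate set that decides $L$ exactly --- including circuits built solely from permutation gates (e.g.\ $\textrm{\small NOT}$, $\textrm{\small CNOT}$, $\textrm{\small TOFFOLI}$), whose amplitudes all lie in $\{0,1\} \subset \Q$ and which exist whenever $L \in \P$. Your proposed crux, that exact vanishing of the wrong-answer amplitude is ``algebraically equivalent'' to $\Q(\zeta_{p_n}) \subseteq F_{\mathcal G}$, can only be true of circuits that actually route the computation through the intended cyclotomic interference pattern; nothing forces an adversarial finite-gate-set circuit to do so, and no algebraic property of the \emph{language} $L$ (as opposed to a particular circuit for it) can impose this. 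Making your last step rigorous would in particular require proving $L \notin \P$, and since $\P \subset \EQP \subset \UnitaryP[\C] \subset \PSPACE$, any proof of the conjecture separates $\P$ from $\PSPACE$. This is why the paper leaves the statement as a conjecture; your proposal, as written, implicitly assumes the hard part (that the only exact deciders are the cyclotomically saturated ones) rather than proving it.
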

\noindent
The basis of this conjecture is simply that polytime-specifiable gate-sets compose to form (families of) transformations for which no exact decomposition is possible for finite gate sets.
A simple example is the family of quantum Fourier transforms $\{ F_{2^n} \}_{n \ge 1}$ over the cyclic rings $\Z_{2^n}$,
\begin{equation}
  F_{2^n}	=	\frac{1}{\sqrt{2^n}} \begin{bmatrix}
         	 	                        1 & 1 & 1 & \cdots & 1 \\[0.5ex]
																		1 & \omega_n	&	\omega_n^2 & \cdots & \omega_n^{-1}		\\[0.5ex]
																		1 & \omega_n^2 & \omega_n^4 & \cdots & \omega_n^{- 2}	\\[0.5ex]
																		\vdots & \vdots & \vdots & \ddots & \vdots \\[0.5ex]
																		1 & \omega_n^{-1} & \omega_n^{-2} & \cdots & \omega_n
         	 	                      \end{bmatrix}\!, \quad\text{where $\omega_n = \e^{2\pi i/2^n}$.}
\end{equation}
These may be expressed as a polytime-uniform circuit family over polytime-specifiable gates, using the recursive decomposition due to Coppersmith~\cite{Coppersmith-1994,NC} into the gates
\begin{equation}
	\mspace{-18mu}
		H = \frac{1}{\sqrt 2} \begin{bmatrix} 1 & \!\!\phantom{-}1 \\ 1 & \!\!-1 \end{bmatrix}\!\!,\quad
		\textrm{\small SWAP} = \begin{bmatrix} 1 & 0 & 0 & 0 \\ 0 & 0 & 1 & 0 \\ 0 & 1 & 0 & 0 \\ 0 & 0 & 0 & 1 \end{bmatrix}\!\!,\quad
		\mathrm{CZ}^{1\!\!\:/\!\!\;2^t} \!= \begin{bmatrix} 1 & 0 & 0 & \!\!0 \\ 0 & 1 & 0 & \!\!0 \\ 0 & 0 & 1 & \!\!0 \\ 0 & 0 & 0 & \!\!\omega_{t+1} \end{bmatrix}\!\!.
	\mspace{-12mu}
\end{equation}
However, $F_{2^n}$ cannot be generated for all $n\ge1$ using a single finite gate-set.
Let $\overline\Q$ be the algebraic closure of $\Q$.
Consider any gate-set $\mathcal U$: representing the coefficients by elements of $\overline\Q(\tau_1, \tau_2, \ldots, \tau_t)$ for some finite list of independent transcendentals $\tau_j$, the unitarity constraints can only be satisfied if all contributions from the transcendentals $\tau_j$ formally cancel out.
Similar remarks apply to any composition of gates representing $F_{2^n}$, as the latter has only algebraic coefficients.
By replacing every transcendental $\tau_j$ with zero in the expression of each gate-coefficient in $\mathcal U$, we may obtain an algebraic gate-set $\mathcal U_\alpha$ which also generates $F_{2^n}$.
For a \emph{finite} algebraic gate-set $\mathcal U_\alpha$, consider the finite-degree field extension $\mathbb U$ obtained by extending $\Q$ by each of the coefficients of the gates in $\mathcal U_\alpha$.
Because the unitaries $F_{2^n}$ contain coefficients of unbounded degrees as $n \to \infty$, a finite gate-set can only generate finitely many of the Fourier transforms $F_{2^n}$.
Thus, it seems likely that by using polytime-specifiable gate sets, one might construct circuits to decide languages $L$ which may be difficult to solve using finite gate sets.

As $\P \subset \EQP \subset \UnitaryP[\C] \subset \PSPACE$, separating $\EQP$ from $\UnitaryP[\C]$ would imply a separation of $\P$ from $\PSPACE$, and so might be considered difficult to achieve.
However, given that the quantum Fourier transform plays a celebrated role in quantum information theory (most notably in Shor's algorithm~\cite{Shor97}), it seems very likely that there are problems in $\UnitaryP[\C]$ which can be solved using quantum Fourier transforms, and which have no obvious solutions without them.

\subsection{Infinite gate-sets and the goals of computational theory}

Given the role of quantum Fourier transforms over cyclic rings in quantum algorithms, the fact that $\EQP$ cannot make use of an infinite family of them is a provocative state of affairs.
This may be taken as evidence that limiting quantum algorithms to finite gate-sets closes off what could be a fruitful field of study in exact quantum algorithms.

There is no finite universal gate set for quantum computation, in the sense of providing an exact decomposition of arbitrary unitary operations.
This is equivalent to the fact that there is no universal quantum Turing machine, in contrast to the classical setting.
As a quantum Turing machine $\mathbf Q$ only has a finite number of transitions, the unitary evolutions describing its behaviour can be described by a finite-dimensional field extension of the rational numbers $\Q$.
As $\overline\Q$ is an infinite field extension of $\Q$, there are then infinitely many unitary transformations over the complex algebraic numbers $\overline\Q$ which $\mathbf Q$ cannot exactly simulate.
Equivalently, given any finite unitary gate-set $\mathcal U$, there are unitary transformations $W \in \mathop{\mathrm U}(2^N)$ with algebraic coefficients which cannot be exactly simulated by a circuit over $\mathcal U$, in the sense that there do not exist $U_1, U_2,\ldots, U_T \in \mathcal U$ such that $U_1 U_2 \cdots U_T \ket{x}\ket{0} \propto (W\ket{x})\ox\ket{0}$ for all $x \in \{0,1\}^N$.

The definition of \EQP\ in terms of quantum Turing machines is motivated by traditional concerns of computational complexity.
However, these motivations have consequences which contradict what could be construed to be the objective of computational theory, which is the analysis of computable transformations of information.
Considering only finite gate-sets has the effect of introducing computational distinctions which are meaningless in the classical regime, as we describe in the next Section.
First, however, we must consider whether such a restriction is necessary to the study of computational complexity, \emph{a priori}.

Computational complexity theory is the study of the structure of algorithms, by decomposition into simpler transformations.
We very often require that the decomposition of an algorithm into transformations makes use of only a finite list of simple transformations, and that each primitive transformation requires only a constant amount of any resource (such as time or work space).
However, there are cases where the simple transformations range over an infinite set, and may not be physically realised in constant time.

Consider the circuit complexity classes $\AC^k \subset \P$: this class consists of those functions which may be computed by logspace-uniform circuit families $\{ C_n \}_{n \ge 1}$, where $C_n$ has size $O(\poly n)$ and depth $O(\log^k n)$-depth, and is constructed from NOT gates, and OR and AND gates with unbounded fan-in and fan-out.
The set of gates available to circuits $C_n$ in such a circuit family grows with $n$: in other words, the family does not use a finite gate-set.
We would not expect to be able to physically realise gates with fan-in $\Theta(n)$, or even $\Theta(\log n)$, with a constant amount of resources: however, we can impose modest upper bounds to simulate such gates in other models of computation.
What motivates the definition of the classes $\AC^k$ is not that they are physically realisable, but rather the study of the structure of parallel algorithms.
Thus, computational complexity is not in principle restricted to constructions from finitely many elements, but is concerned with exploring decompositions of complex functions into simpler ones.
What makes $\AC^k$ a reasonable complexity class to consider is that the allowed primitive operations, while perhaps not realistic, are also not extravagant.

Finite gate-sets suffice for a robust theory of bounded-error quantum computation.
However, there is no reason why we should study algorithms in such a way that \emph{exact} quantum computation trivializes, so long as we ensure that the theory of bounded-error computation remains meaningful.
For instance, we may impose costs on each gate corresponding to the effort to compute its coefficients (as proposed in Section~\ref{sec:modalCircuitComplexity}): this is analogous to how gates of unbounded fan-in may be treated as having non-constant depth in the analysis of $\NC^k$ algorithms.
This unitary gate cost directly represents the complexity of simulating such gates by the branching of nondeterministic Turing machines, using existing techniques~\cite{ADH-1997}: this is not a bound in terms of a reasonable model of computation, but does at least impose a bound which already applies to circuit families constructed from \emph{finite} gate-sets.
By restricting to circuit families with total cost $O(\poly n)$, we ensure that the power of such circuit families is not excessive, and leave open the possibility that circuits may still solve problems with bounded-error and with a smaller gate-cost.

Considering quantum algorithms from a theoretical standpoint, we necessarily abstract away some of the practical difficulties in realising quantum computers.
We set aside the conceit that we act only as assistants to heroic engineers, and try to determine just how much computational power we might wrest from a machine, which conforms to specifications of our choosing.
Our ongoing dialogue with engineers informs our choice of the specifications, but it is not the goal of mathematical theory to exclusively adhere to the practical limitations of the day.
We should not limit our theoretical scope only to what might seem practicable in the next five or fifty years; we should instead choose definitions which provide the greatest insight. 

\subsection{On uniformity and quantum ``meta-algorithms''}

To pursue a robust theory of exact quantum algorithms, it appears that we should allow quantum algorithms to use at least some kind of infinite gate set.
It remains to determine what limits on those gate sets are productive to an informative theory.
What sort of quantum operations might we permit, in light of the motivations and existing theory for quantum computation?

While it is conventional to construct unitary circuits over one of a small number of known ``approximately universal'' finite gate-sets, quantum circuit families may make use of any finite computable gate-set, by definition~\cite{BV97}.
Thus, no particular computable gate (or finite set of them) should theoretically be considered extravagant, however awkward to implement.
Absent the spectre of engineering difficulties looming over our definitions, if we do not limit circuit families to gates from a finite list, the simplest limitation to impose on unitary gates is that there be an efficient algorithm to to specify them.
In particular, in a ``polynomial-time'' quantum algorithm, it should not be necessary to take more than polynomial time (in the length of an appropriate input) to express the coefficients of its gates.
If we consider a gate-set $\mathcal U = \{T_1, T_2, \dots, T_\ell, \ldots \}$ and take the input to be the label $\ell$ of the gate $T_\ell$ to be described, we recover the notion that gate-sets should be \emph{polynomial-time specifiable} described in Section~\ref{sec:modalCircuitComplexity}.
Note that bounded-error computation in this model is $\BQP$, following the argument of Section~\ref{sec:modalComputationConventionalExamples}: as allowing polytime-specifiable gate sets does not affect the class of problems efficiently solvable with bounded-error by quantum algorithms, we take this as evidence that doing so is not computationally extravagant.

Whether or not polynomial-time specifiable gate-sets seem powerful, one may object that these still do not represent quantum algorithms, unless the gate-set is actually finite and simulatable on a single quantum Turing machine.
To do so, however, is to introduce a distinction between quantum algorithms and quantum ``meta-algorithms''.

We may describe a meta-algorithm as follows.
For some language $L$, suppose that we have a classical deterministic Turing machine $\mathbf S$ which, on input $x \in \{0,1\}^n$, computes a function $t(n)$ encoding another Turing machine $\mathbf T_n$ which decides whether \mbox{$x \in L$}.
The Turing machine $\mathbf S$ embodies a \emph{meta-algorithm} for $L$: a procedure to determine, for any input $x \in \{0,1\}^\ast$, some other procedure which would suffice to decide whether $x \in L$.
Because there is a universal Turing machine, one may consider a Turing machine $\mathbf U$ which simulates $\mathbf S$, and subsequently $\mathbf T_n$, on any input $(1^n,x) \in \{0,1\}^{2n}$.
Thus, in the theory of classical computation, there is no meaningful distinction between an algorithm and a meta-algorithm.

One may similarly consider a quantum Turing machine $\tilde{\mathbf S}$ which, on input $x \in \{0,1\}^n$, computes a function $q(n)$ which encodes another quantum Turing machine $\mathbf Q_n$ which decides whether $x \in L$ for $x \in \{0,1\}^n$: this represents a \emph{quantum meta-algorithm}.
Precisely because there is no universal quantum Turing machine, it is possible in the theory of quantum computation to introduce a distinction between an algorithm and a meta-algorithm.
A quantum Turing machine $\tilde{\mathbf S}$ may compute a description of another quantum Turing machine, which $\tilde{\mathbf S}$ is unable to simulate: furthermore, it is possible for $\tilde{\mathbf S}$ to compute descriptions of a family quantum Turing machines $\{\mathbf Q_n\}_{n \ge 0}$ which are varied enough that no single quantum Turing machine may simulate them all.
(Indeed, this is true even if $\tilde{\mathbf S}$ is a deterministic Turing machine.)

We argue that to distinguish between ``quantum algorithms'' and ``quantum meta-algorithms'' is introduce a distinction which does not exist elsewhere in the theory of computation, and to ignore a class of efficiently computed specifications of how decision problems may be solved by quantum computers.
By indicating a quantum algorithm to decide a problem for a given instance size, a quantum meta-algorithm provides complete information to solve a computational problem, provided adequate computational resources.

By abstracting the programme to actually construct quantum computers, we obtain a theoretical motivation to admit quantum meta-algorithms as quantum algorithms.
Indeed, the very pursuit of quantum computation as a practical technology presumes that quantum meta-algorithms are reasonable approaches to solving difficult problems.
A programme to construct a quantum computer in order to solve problems can be described, in outline, as follows:
\begin{enumerate}
\item 
	Compute a specification $S$ of some quantum computational device $D$.
\item
	\emph{Construct the device $D$} according to the plan specified by $S$.
\item
	Use the quantum device $D$ to decide instances of a language $L$, up to some size.
\end{enumerate}
One might say that quantum computation is \emph{scalable in practise} if (and only if) each of the steps above can be performed efficiently in practise; and that the main challenge in building scalable quantum computers is to discover how to efficiently compute working specifications $S$.
If one motivates quantum computation on the grounds that one can in principle construct quantum computers to solve difficult problems, for the sake of consistency one should also accept a quantum meta-algorithm as providing a way to solve a problem via quantum computation.
A uniform quantum circuit family describes precisely the process of computing a description for a quantum circuit $C_n$, constructing the circuit, and using $C_n$ to solve a problem.
Uniform quantum circuit families, the standard model of quantum computation, are quantum meta-algorithms --- which under certain conditions are also admitted as quantum algorithms.
It remains only to ask what constraints we demand for the sake of uniformity of the circuit family.

We propose that polynomial-time uniform circuit families, over a polynomial-time specifiable gate-set, represent a reasonable framework in which to study quantum algorithms.
A circuit $C_n$ from a polytime-uniform circuit family $\{C_n\}_{n\ge1}$, constructed from polytime-specifiable gates, can be completely described in time $O(\poly n)$.
Conversely, any algorithm to construct a unitary circuit $C_n$, whose circuit structure and whose gates can be completely expressed as matrices in time $O(\poly n)$, computes a polytime-uniform circuit family $\{C_n\}_{n\ge1}$ with polytime-specifiable gates.
In this sense, we argue that any polynomial-time ``quantum meta-algorithm'', in the form of an efficient algorithm to describe quantum unitary circuits, represents an efficient ``quantum algorithm''.

\subsection{Reasons to move on}

We argue above that computational principles motivate polytime-specifiable gate-sets --- but what of the original computational motivations for gate sets of constant size?
There are two principles which motivate the restriction of quantum algorithms to finite gate sets: \textbf{(a)}~this limitation is imposed by defining quantum algorithms in terms of quantum Turing machines; and \textbf{(b)}~notwithstanding the study of the classes $\AC^k$, finite gate sets suffice for the theory of boolean circuits.
Having presented positive reasons to entertain broader notions of quantum algorithms, we now present reasons to abandon the model of quantum Turing machines as the basis for the theory of quantum computation, and also not to force the analogy to boolean circuits.

A Turing machine which provably halts provides a finite specification of the set of strings which it accepts.
Furthermore, it represents a simple model of what procedures can be achieved by a human operator; and essentially by this very fact, there are universal Turing machines, which can simulate any other Turing machine provided in a suitable representation.
For these reasons, deterministic Turing machines are a model of central importance in computational theory --- and also a model for the design of further computational models.
However, this does not guarantee that all models of computation which are defined in analogy to Turing machines should be the best choice of model to define a computational paradigm.

Quantum Turing machines may be an example of a Turing-like machine which is less useful than its deterministic counterpart at defining models of computation.
The very small amount of research in quantum computation which is actually described in terms of quantum Turing machines may be taken as a form of anthropological evidence of this.
Even in such abstract domains as complexity theory, the quantum Turing machine appears to have been made obsolete as an analytical tool, by quantum meta-algorithms such as uniform circuit families and adiabatically evolving spin systems.
The problems decided by such meta-algorithms are still provided by finite descriptions, \ie~by means of the deterministic Turing machine which embodies the meta-algorithm.
The other advantages of Turing machines --- human simulatability and the existence of a universal Turing machine --- simply do not apply to quantum Turing machines.
While there may be sound \emph{physical} grounds to impose constraints on the meta-algorithms (\eg~restricting to local unitary transformations or local Hamiltonian constraints), it seems spurious to impose \emph{computational} constraints merely to achieve parity with quantum Turing machines, given these limitations of quantum Turing machines as an analytical tool.

We also argue that the analogy of quantum circuits to logic circuits is limited, essentially because the state-space of qubits is richer than that of bits.
The boolean operations \textrm{\small AND}, \textrm{\small OR}, and \textrm{\small NOT} suffice to describe any boolean formula on a finite number of literals.
This could only be a rough analogy for quantum operations, as there is a continuum of valid unitary operations even on a single qubit (countably many of which have computable descriptions), rather than just two.
To suppose that the theory of these transformations should necessarily be fit into the mold of boolean circuit complexity is a curious conceit.
Circuit families constructed with constant-sized gate-sets are certainly a valid and interesting subject of study, and the fact that they include very good approximations to arbitrary unitaries (the Solovay--Kitaev theorem~\cite{KSV-2002,DN2006}) is a seminal result.
But it is not clear that this should mean that circuit families on constant-sized gate-sets should exclude all other circuit models.
If one accepts that the range of quantum operations is substantially different from the range of classical logic operations, it is reasonable to allow the model of quantum circuits to be more nuanced than the model of boolean circuits.
We propose polytime-uniform circuit families with polytime-specifiable gates as such a model of quantum circuits.

\subsection{Summary}

We have argued above that $\EQP$ --- standing in for quantum circuit families on constant-sized gate-sets --- appears to be unnecessarily limited from a theoretical standpoint.
While the reasons for these limitations were historically well-motivated, we find that these motivations end up working against the purpose of computational theory, and introduce distinctions between quantum algorithms and quantum meta-algorithms which are neither productive nor necessary to that purpose.

The theory of quantum computation has two obvious roles: as a crude caricature of engineering projects to build devices which exploit quantum mechanics to perform computation, and as a facet of the theory of computation which takes its inspiration from  quantum mechanics.
Taking the latter role seriously does not exclude the former, just as the study of nondeterministic Turing machines and $\AC^k$ algorithms does not prevent us from considering more easily realised models of computation.
This motivates a more generous theory of quantum algorithms, in which the study of exact quantum computation might prove more interesting.

It is on the basis of these arguments that we propose polytime-uniform circuit families, with \emph{polynomial-time} specifiable gate-sets (rather than \emph{constant-time} specifiable gate-sets), as the basis for the theory of quantum algorithms.
The special case of constant-sized (\ie~finite) gate-sets remains an important special case which is particularly of interest in the study of bounded-error quantum algorithms, but should be understood to simply be a well-motivated special case.
Barring a surprising discovery about uniform quantum circuit families, it seems likely that the containments $\EQP \subset \UnitaryP[\C] \subset \BQP$  are all strict.
Thus $\UnitaryP[\C]$ is likely to be more useful as a lower bound on the power of bounded-error quantum computation, and is more likely to provide for a thriving theory of exact polynomial-time quantum computation.

\begin{thebibliography}{99}
\small





\bibitem{Aar04-post} S.~Aaronson.
\newblock	\emph{Quantum computing, postselection, and probabilistic polynomial-time}.
\newblock	Proc.\ Royal Society \textbf{A 461} (pp.~2063--3482), 2005.
\newblock arXiv:quant-ph/0412187

\bibitem{AA09} S.~Aaronson, A.~Ambainis.
\newblock	\emph{The Need for Structure in Quantum Speedups}.
\newblock	Proc.\ ICS~2011 (pp.~338--352), 2011.
\newblock	arXiv:0911.0996

\bibitem{ADH-1997}	L.~Adleman, J.~DeMarrais, and M.~Huang.
\newblock	\emph{Quantum computability}.
\newblock SIAM J.\ Computing~\textbf{26} (pp.~1524--1540), 1997.

\bibitem{Ahar03} D.~Aharonov.
\newblock \emph{A Simple Proof that Toffoli and Hadamard are Quantum Universal}.
\newblock arXiv:quant-ph/0301040, 2003.

\bibitem{AB-2009}	S.~Arora and B.~Barak.
\newblock	Computational Complexity --- A Modern Approach.
\newblock Cambridge University Press, 2009.

\bibitem{Baez-2002}	J.~C.~Baez.
\newblock						\emph{The octonions}.
\newblock			Bulletin of the American Mathematical Society~\textbf{39} (pp.~145--205), 2002.

\bibitem{Barrett-2007}	J.~Barrett.
\newblock \emph{Information processing in generalized probabilistic theories}.
\newblock Physical Review A~\textbf{75} (032304), 2007.
\newblock	arXiv:quant-ph/0508211



\bibitem{BFH-2004} M.~A.~Beaudry, J.~M.~Fernandez, and M.~A.~Holzer.
\newblock	\emph{A Common Algebraic Description for Probabilistic and Quantum Computations}.
\newblock	Mathematical Foundations of Computer Science 2004 (pp.~851--862).
\newblock arXiv:quant-ph/0212096.

\bibitem{BGH90} R.~Beigel, J.~Gill, U.~Hertrampf.
\newblock \emph{Counting classes: Thresholds, parity, mods, and fewness}.
\newblock Proc. 7th Annual STACS, Lecture Notes in Computer Science~\textbf{415} (pp.~49--57), 1990.

\bibitem{BBBV97} C.~Bennett, E.~Bernstein, G.~Brassard, U.~Vazirani.
\newblock \emph{Strengths and weaknesses of quantum computing}.
\newblock SIAM Journal on Computing~\textbf{26} (pp.~1510--1523), 1997.
\newblock arXiv:quant-ph/9701001.

\bibitem{BBCJPW-93} C.~H.~Bennett, G.~Brassard, C.~Cr\'epeau, R.~Jozsa, A.~Peres, and W.~K.~Wooters.
\newblock	\emph{Teleporting an unknown quantum state via dual classical and {E}instein-{P}odolsky-{R}osen channels}.
\newblock	Physical Review Letters~\textbf{70} (pp.~1895--1899), 1993.

\bibitem{BW-1992} C.~H.~Bennett and S.~J.~Wiesner.
\newblock \emph{Communication via one- and two-particle operators on {E}instein-{P}odolsky-{R}osen states}.
\newblock Physical Review Letters~\textbf{69} (pp.~2881--2884), 1992.

\bibitem{BV97}
	E.~Bernstein, U.~Vazirani.
\newblock	\emph{Quantum Complexity Theory}.
\newblock SIAM Journal on Computing~\textbf{26} (pp.~1411-1473), 1997.

\bibitem{BDHM92}
	G.~Buntrock, C.~Damm, U.~Hertrampf, and C.~Meinel.
\newblock \emph{Structure and importance of logspace-mod classes},
\newblock Mathematical Systems Theory~\textbf{25} (pp.~223--237), 1992.

\bibitem{Coppersmith-1994} D.~Coppersmith.
\newblock	\emph{An approximate Fourier transform useful in quantum factoring.}
\newblock	Technical Report RC19642, IBM, 1994.
\newblock	arXiv:quant-ph/0201067.

\bibitem{DV-1998}	D.~Dieks and P.~E.~Vermaas.
\newblock	The Modal Interpretation of Quantum Mechanics.
\newblock	\emph{The Western Ontario Series in Philosophy of Science}, vol.~60.
\newblock	Dordrecht: Kluwer Academic Publishers, 1998.

\bibitem{FR-1999} L.~Fortnow, J.~Rogers.
\newblock	\emph{Complexity Limitations on Quantum Computation}.
\newblock	Journal of Computer and System Sciences~\textbf{59} (pp.~240--252) 1999.
\newblock arXiv:cs/9811023.

\bibitem{Gold05}
	O.~Goldreich.
\newblock	\emph{On quantum computing.}
\newblock Online essay, retrieved 27 May 2014 \\ (www.wisdom.weizmann.ac.il/{\small$\sim$}oded/on-qc.html), last updated 2005.

\bibitem{GottPhD} D.~Gottesman.
\newblock \emph{Stabilizer Codes and Quantum Error Correction}.
\newblock Ph.D.\ thesis, Caltech, 1997.
\newblock	arXiv:quant-ph/9705052.


\bibitem{HOSW11} A.~J.~Hanson, G.~Ortiz, A.~Sabry, J.~Willcock.
\newblock	\emph{The Power of Discrete Quantum Theories}.
\newblock	arXiv:1104.1630


\bibitem{Hungerford-1974} T.~W.~Hungerford.
\newblock									Algebra.
\newblock									\emph{Graduate Texts in Mathematics}, Springer, 1974.

\bibitem{Jagy-2014-SE}	W.~C.~Jagy.
\newblock								\emph{A puzzle on orthogonal matrices modulo $p$.}
\newblock								Mathematics Stack Exchange (http://math.stackexchange.com/a/709756/439), March 2014.

\bibitem{JM-2013}		R.~Jozsa and A.~Miyake.
\newblock	\emph{Jordan-Wigner formalism for classical simulation beyond binary matchgates}.
\newblock	arXiv:1311.3046, 2013.

\bibitem{KSV-2002} A.~Y.~Kitaev, A.~Shen, and M.~N.~Vyalyi.
\newblock	\emph{Classical and Quantum Computation}.
\newblock Graduate Studies in Mathematics, American Mathematical Society, 2002.

\bibitem{MacLane-1998} S.~Mac~Lane.
\newblock	Categories for the Working Mathematician.
\newblock	Springer, 1998.

\bibitem{Lev00}
	L.~A.~Levin.
\newblock	\emph{The Tale of One-way Functions}.
\newblock Problems of Information Transmission \textbf{39} (pp.~92--103), 2003.
\newblock arXiv:cs/0012023

\bibitem{vanLint-1982} J.~H.~van Lint.
\newblock	Introduction to Coding Theory.
\newblock Springer, 1982.

\bibitem{McCullagh-1987} P.~McCullagh.
\newblock Tensor methods in statistics.
\newblock In \emph{Monographs on statistics and applied probability}, Chapman and Hall, 1987.

\bibitem{NC} M.~A. Nielsen and I.~L. Chuang.
\newblock \emph{Quantum Computation and Quantum Information}.
\newblock Cambridge University Press, 2000.

\bibitem{DN2006}	C.~M.~Dawson and M.~A.~Nielsen.
\newblock					\emph{The {Solovay}--{Kitaev} algorithm}.
\newblock					Quantum Information and Computation \textbf{6} (pp.~81--95), 2006.
\newblock					arXiv:quant-ph/0505030

\bibitem{Papadimitriou-1994}	C.~H.~Papadimitriou.
\newblock		Computational Complexity.
\newblock	Addison-Wesley, 1994.

\bibitem{PZ-1983} C.~Papadimitriou and S.~Zachos.
\newblock	\emph{Two remarks on the power of counting.}
\newblock	Proc.\ 6th GI {C}onference in {T}heoretical {C}omputer {S}cience,
					Lecture Notes in Computer Science~\textbf{145} (pp~269--276), 1983.

\bibitem{RS-1986}	M.~O.~Rabin, J.~O.~Shallit.
\newblock					\emph{Randomized algorithms in number theory}.
\newblock					Communications on Pure and Applied Mathematics~\textbf{39} (pp.~239--256), 1986.
					
					
\bibitem{Robert-2000}	A.~M.~Robert.
\newblock	A course in $p$-adic Analysis.
\newblock	Springer, 2000.

\bibitem{SW10} B.~Schumacher, M.~D.~Westmoreland.
\newblock	\emph{Modal quantum theory}.
\newblock	In \emph{QPL 2010, 7th workshop on Quantum Physics and Logic} (pp.~145--149), 2010.
\newblock	arXiv:1010.2929

\bibitem{SW-2012} B.~Schumacher, M.~D.~Westmoreland.
\newblock	\emph{Almost quantum theory}.
\newblock arXiv:1204.0701, 2012.

\bibitem{Shi03} Y.~Shi.
\newblock \emph{Both Toffoli and controlled-NOT need little help to do universal quantum computing}.
\newblock Quantum Information \& Computation \textbf{3} (pp.~84--92), 2003.

\bibitem{Shor97} P.~W.~Shor.
\newblock	\emph{Polynomial-Time Algorithms for Prime Factorization and Discrete Logarithms on a Quantum Computer}.
\newblock	SIAM Journal on Computing~\textbf{26} (pp.~1484--1509), 1997.
\newblock arXiv:quant-ph/9508027v2

\bibitem{Stahlke-2014} D.~Stahlke.
\newblock \emph{Quantum interference as a resource for quantum speedup}.
\newblock Physical Review A~\textbf{90} (022302), 2014.
\newblock arXiv:1305.2186 


\bibitem{Toffoli-1980}	T.~Toffoli.
\newblock	\emph{Reversible computing},
\newblock In {Automata, Languages and Programming}, 7th {C}olloquium, ~\textbf{85} (pp.~632--644),
\newblock Springer 1980.

\bibitem{Valiant-1979}	L.~G.~Valiant.
\newblock	\emph{The complexity of computing the permanent}.
\newblock	Theoretical Computer Science~\textbf{8} (pp.~189--201), 1979.

\bibitem{Valiant-2004}	L.~G.~Valiant.
\newblock	\emph{Holographic algorithms}.
\newblock In Proc.\ 45th Annual FOCS (pp.~306--315), 2004.

\bibitem{Valiant-2005} L.~G.~Valiant.
\newblock	\emph{Holographic Circuits}.
\newblock	In \emph{Automata, Languages, and Programming}, Lecture Notes in Computer Science~\textbf{3580} (pp.~1--15), 2005.

\bibitem{VV86} L.~G.~Valiant, V.~V.~Vazirani.
\newblock	\emph{NP is as easy as detecting unique solutions}.
\newblock Theoretical Computer Science \textbf{47} (pp.~85--93), 1986.

\bibitem{VanDenNest-2011} M.~Van den~Nest.
\newblock	\emph{Simulating quantum computers with probabilistic methods}.
\newblock	Quantum Information \& Computation~\textbf{11} (pp.~784--812), 2011.
\newblock	arXiv:0911.1624.

\bibitem{Wan-2003} Z.~Wan.
\newblock	{Lectures on Finite fields and {G}alois rings}.
\newblock World scientific, New Jersey, 2003

\bibitem{WS11} J.~Willcock, A.~Sabry.
\newblock	\emph{Solving UNIQUE-SAT in a Modal Quantum Theory}. \\
\newblock	arXiv:1102.3587
























\end{thebibliography}
\end{document}